\documentclass{article}
\usepackage{array} 
\usepackage[english]{babel}
\usepackage[utf8]{inputenc}
\usepackage{authblk}
\usepackage{multicol}
\usepackage{setspace}
\usepackage[margin=1.25in]{geometry}
\usepackage{pgfplots}
\usepgfplotslibrary{fillbetween} 
\pgfplotsset{compat=newest}
\usepackage{graphicx}
\graphicspath{ {./figures/} }
\usepackage{subcaption}
\usepackage{amsmath}
\usepackage{amsthm}
\usepackage{amssymb}
\usepackage{braket}
\usepackage{dsfont}
\usepackage{physics}
\usepackage{xcolor}
\usepackage{mathrsfs} %For \mathscr
\usepackage{mathtools} %For \coloneqq
\usepackage{tikz}
\usetikzlibrary{calc}
\usetikzlibrary{arrows}
\usepackage{tcolorbox}
\usepackage{enumitem}
\newlist{steps}{enumerate}{1}
\setlist[steps, 1]{label = Step \arabic*:}
\usepackage[colorinlistoftodos,prependcaption]{todonotes}
\usepackage{booktabs}
\usepackage{thmtools,thm-restate}
\usepackage{accents}

\usepackage{pdfpages} %To include PDFs

\usepackage{hyperref} %Should be loaded after most packages unless otherwise specified
\usepackage{cleveref}
\usepackage{xurl} %Allow URL linebreaks on arXiv
\usepackage{doi} %To be loaded after hyperref
% Customizing link colours
\hypersetup{
colorlinks = true,
citecolor= blue,
urlcolor= blue,
linkcolor = blue
}

\bibliographystyle{alpha}

\usepackage{authblk}
\usepackage{algorithm}%To be loaded after hyperref
\usepackage[noend]{algpseudocode}

\usepackage{cancel}

%Tool for comments
\usepackage[dvipsnames]{xcolor}
\usepackage[colorinlistoftodos]{todonotes}

% Theorem-like environments
\newtheorem{theorem}{Theorem}[section]

\newtheorem{thm}[theorem]{Theorem}

\newtheorem{rem}[theorem]{Remark}
\newtheorem{cor}[theorem]{Corollary}
\newtheorem{lem}[theorem]{Lemma}
\newtheorem{prop}[theorem]{Proposition}

\theoremstyle{definition} %All \newtheorems below this point will use this theoremstyle
\newtheorem{definition}[theorem]{Definition}

\numberwithin{equation}{section}

% Environment for substeps in algorithms, following above conventions
\newcounter{algsubstate}
\makeatletter

\makeatother

\newcommand{\HH}{\mathcal{H}}
\newcommand{\Id}{\mathrm{Id}}
\newcommand{\id}{\mathds{1}}
\newcommand{\CC}{\mathbb{C}}

\newcommand{\NN}{\mathbb{N}}

\newcommand{\RR}{\mathbb{R}}

\newcommand{\BB}{\mathcal{B}}
\newcommand{\MM}{\mathcal{M}}
\newcommand{\UU}{\mathcal{U}}
\newcommand{\EE}{\mathcal{E}}
\newcommand{\Co}{\mathcal{C}}

\newcommand{\Ss}{\mathcal{S}}
\newcommand{\PP}{\mathcal{P}}
\newcommand{\OO}{\mathcal{O}}
\newcommand{\VV}{\mathcal{V}}
\newcommand{\RE}{\mathcal{R}}
\newcommand{\Ff}{\mathcal{F}}
\newcommand{\cC}{\mathcal{C}}

\newcommand{\Meff}{\mathbf{M}^{\mathrm{eff}}}
\newcommand{\Eeff}{\mathbf{E}^{\mathrm{eff}}}
\newcommand{\overlineMeff}{\overline{\mathbf{M}}^{\mathrm{eff}}}
\newcommand{\overlineEeff}{\overline{\mathbf{E}}^{\mathrm{eff}}}

\newcommand{\Sep}{\mathrm{SEP}}

\newcommand{\probP}{\operatorname{Pr}}
\newcommand{\poly}{\mathsf{poly}}
\newcommand{\CPTP}{\textup{CPTP}}

\newcommand{\negl}{\mathsf{negl}}

\newcommand{\cone}{\mathsf{cone}}
\newcommand{\conv}{\mathsf{conv}}

\newcommand{\extr}{\mathsf{extr}}
\newcommand{\Span}{\mathsf{span}}
\newcommand{\Pos}{\mathrm{Pos}}
\newcommand{\States}{\Ss}

\newcommand{\term}[1]{\textup{\textit{#1}}} %Formatting for new terms
\newcommand{\Renyi}{R\'{e}nyi}

\newcommand{\EffCone}{\Co^{\mathbf{E}_\mathrm{eff}}_n}

\newcommand{\DualEffCone}{\Co^{\mathcal{S_{\mathrm{eff}}}}_n}
   % in the preamble

% \newcommand{\CompMaxDiv}{\accentset{\text{\tiny\hspace{0.05em}\raisebox{0.05ex}{c}}}{D}_{\mathrm{max}}

 %\newcommand{\CompMaxDiv}{  \accentset{\text{\tiny\hspace{0.05em}\raisebox{0.05ex}{c}}}{D}_{\mathrm{max}}%

\newcommand{\CompMaxDiv}{\overset{\smash{\text{\tiny\hspace{0.3em}\raisebox{-0.3ex}{c}}}}{D}_{\mathrm{max}}}

\newcommand{\CompDiv}{ \accentset{\text{\tiny\hspace{0.1em}\raisebox{0.05ex}{c}}}{D}}

\newcommand{\CompbbDiv}{\accentset{\text{\tiny\hspace{-0.05em}\raisebox{0.05ex}{c}}}{\mathbb{D}}}

\newcommand{\CompQ}{\accentset{\text{\tiny\hspace{0.05em}\raisebox{0.05ex}{c}}}{Q}}

\newcommand{\CompTrDis}{\accentset{\text{\tiny\hspace{0em}\raisebox{0.05ex}{c}}}{\Delta}}

\newcommand{\CompFid}{\accentset{\text{\tiny\hspace{0.05em}\raisebox{0.05ex}{c}}}{F}}

\newcommand{\CompE}{\accentset{\text{\tiny\hspace{0.15em}\raisebox{0.05ex}{c}}}{E}}

\newcommand{\CompMinErrPr}{\accentset{\text{\tiny\hspace{0.05em}\raisebox{0.05ex}{c}}}{\PP}}

\newcommand{\SandRenyi}{D}

\newcommand{\eps}{\varepsilon} %Shortened epsilon command
%Formatting for new terms

%\usepackage{lineno}
%\linenumbers

%%%%%% Title %%%%%%
% Full titles can be a maximum of 200 characters, including spaces. 
% Title Format: Use title case, capitalizing the first letter of each word, except for certain small words, such as articles and short prepositions

\title{Efficient Quantum Measurements:\\
Computational Max- and Measured Rényi Divergences and Applications}

%%%%%% Authors %%%%%%
% Authors should be listed in order of contribution to the paper, by first name, then middle initial (if any), followed by last name.
% Authors should be listed in the order in which they will appear in the published version if the manuscript is accepted. 
% Use an asterisk (*) to identify the corresponding author, and be sure to include that person’s e-mail address. Use symbols (in this order: †, ‡, §, \|, ¶, #, ††, ‡‡, etc.) for author notes, such as present addresses, “These authors contributed equally to this work” notations, and similar information.
% You can include group authors, but please include a list of the actual authors (the group members) in the Supplementary Materials.
\author[1]{Álvaro Yángüez\thanks{alvaro.yanguez@lip6.fr}}
\author[2]{Thomas A. Hahn}
\author[3]{Jan Kochanowski}

%%%%%% Affiliations %%%%%%
\affil[1]{
Sorbonne Université, CNRS, LIP6, 4 Place Jussieu, 75005 Paris, France}
\affil[2]{The Center for Quantum Science and Technology, Department of Physics of Complex Systems, Weizmann Institute of Science, Rehovot, Israel}
\affil[3]{
Inria, Télécom Paris-LTCI, Institut Polytechnique de Paris, 91120 Palaiseau, France}

%%%%%% Date %%%%%%
% Date is optional
\date{}

%%%%%% Spacing %%%%%%
% Use paragraph spacing of 1.5 or 2 (for double spacing, use command \doublespacing)
\onehalfspacing

\begin{document}

\maketitle

%%%%%% Abstract %%%%%%

\begin{abstract}
Quantum information processing is limited, in practice, to efficiently implementable operations. This motivates the study of quantum divergences that preserve their operational meaning while faithfully capturing these computational constraints. Using geometric, computational, and information theoretic tools, we define two new types of computational divergences, which we term \term{computational max-divergence} and \term{computational measured Rényi divergences}. 
Both are constrained by a family of efficient binary measurements, and thus useful for state discrimination tasks in the computational setting. 
We prove that, in the infinite-order limit, the computational measured Rényi divergence coincides with the computational max-divergence, mirroring the corresponding relation in the unconstrained information-theoretic setting. For the many-copy regime, we introduce regularized versions and establish a one-sided computational Stein bound on achievable hypothesis-testing exponents under efficient measurements, giving the regularized computational measured relative entropy an operational meaning.
We further define resource measures induced by our computational divergences and prove an asymptotic continuity bound for the computational measured relative entropy of resource. Focusing on entanglement, we relate our results to previously proposed computational entanglement measures and provide explicit separations from the information-theoretic setting. Together, these results provide a principled, cohesive approach towards state discrimination tasks and resource quantification under computational constraints.

\end{abstract}
%%%%%% Main Text %%%%%%

\newpage
\section{Introduction}
The success of quantum information theory lies, in part, in its ability to reduce the study of a wide range of processes, e.g., hypothesis testing~\cite{ON00,ACM+06}, resource theories~\cite{BG15,CG19}, quantum thermodynamics~\cite{BHH+15, ECP10, CRF21}, and quantum cryptography~\cite{renner2005security,Devetak_2005,Wolf2021QKD,VidickWehner2023}, to robust primitive measures such as quantum divergences and entropies. Yet, as currently defined, these measures are agnostic to certain practical limitations. Quantum information processing is complexity-constrained in practice: as systems scale, transformations and measurements that are information-theoretically allowed sometimes become infeasible under realistic time and gate constraints. A simple counting argument shows that almost all states — and likewise most measurements — require exponential-size circuits to be implemented~\cite{Kni95,Aar16,JW23}. Under such efficiency constraints, a stark gap emerges between information-theoretic and feasible operations, obscuring the direct operational meaning of standard divergences and entropies in real-world settings.

This gap has led to a rich study of quantum states and transformations under computational constraints~\cite{ABF+23,ABV23,GE24,LREJ25,HBE24,BMB+24,GY25,MKN+25}, showing that highly nonclassical states can be \emph{indistinguishable} from simpler ones by any polynomial-time measurement. Similarly, many useful information-theoretic tools can often no longer be applied as they require inefficient operations.\footnote{This is assuming certain computational hardness assumptions~\cite{Kre21,KQST23}.} This includes, e.g., 
Uhlmann's theorem~\cite{Uhl76,bostanci2023unitary,metger2023stateqip,bostanci2025local}, optimal data compression rates~\cite{bostanci2023unitary,4568378,haitner23}, and optimal quantum error correction rates~\cite{iyer2013hardness,PhysRevA.99.032344,Dennis_2002,Bernstein2009,McEliece1978APK}. 

The latter two examples highlight the need for studying computationally efficient quantum information protocols. Closely related to quantum error correction, the works~\cite{ABV23, LREJ25} consider efficient quantum entanglement distillation.\footnote{The work~\cite{ABV23,LREJ25} also considers the closely related concept of computational entanglement cost.} In particular,~\cite{LREJ25} provides an efficient distillation protocol that can distill entanglement at a non-zero rate from a specific class of states. Similarly,~\cite{YKH+22,MKN+25} considers efficient data compression (amongst other tasks),  and manages to relate them to a new complexity relative entropy, which we discuss further below.

Efforts have been made to build a solid foundation for a \term{computational quantum information theory}, which can accommodate for these practical constraints.
Of particular interest are computational entropies, which capture how random data is perceived to be for a computationally bounded observer.  In~\cite{CCL+17,avidan2025quantum,avidan2025fully}, the authors define new computational quantum entropies which incorporate computational constraints while retaining the operational meaning of the underlying information-theoretic quantities, with a particular focus on cryptography.

Particularly relevant to our setting is the complexity entropy introduced in~\cite{YKH+22}, and its underlying computational divergence, termed the complexity relative entropy~\cite{MKN+25}. This latter quantity is the computational analogue to the hypothesis-testing relative entropy. 

With this connection in mind, the authors of that work further show how one can relate these quantities to the optimal rates for efficient data compression, and decoupling.\footnote{For the latter, they consider a restricted kind of decoupling which traces out part of the system and relies on a conjectured chain rule.} Moreover, it can be used to express the optimal error coefficient for quantum hypothesis testing under computational constraints.

\begin{figure}[t!]
    \centering
    \includegraphics[width=1 \textwidth]{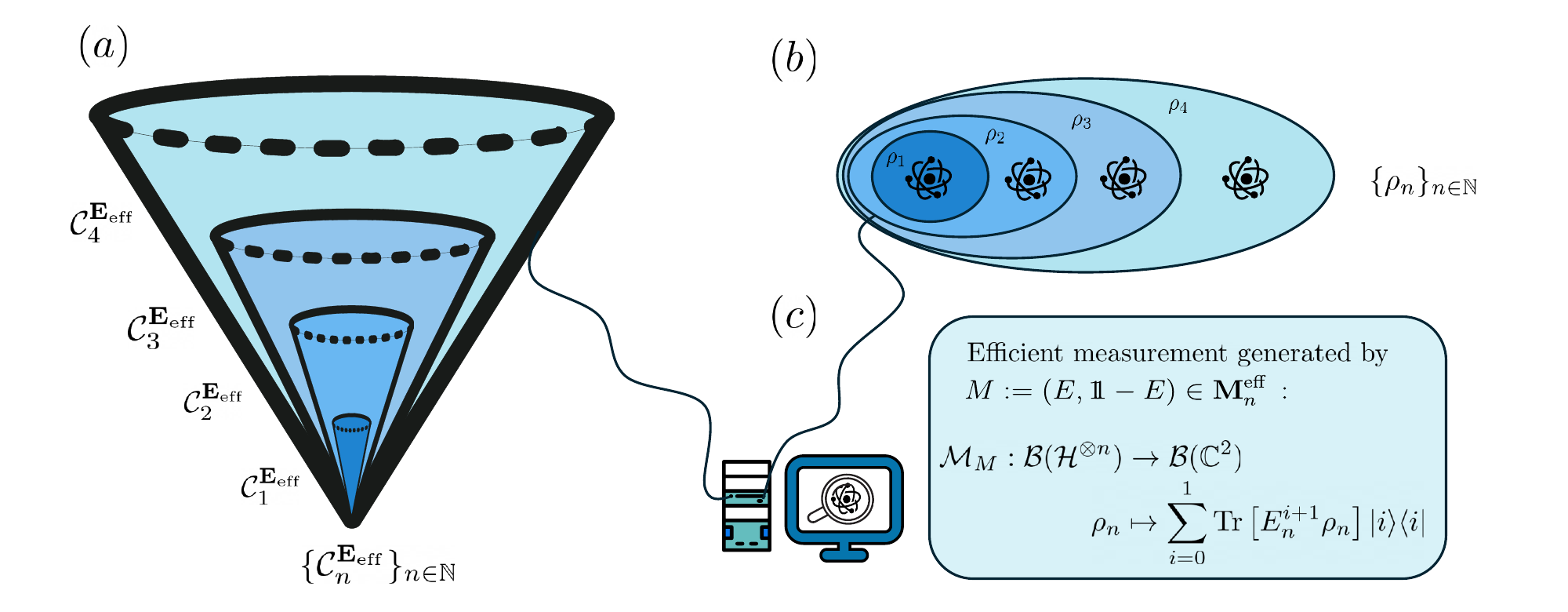}
    \caption{(a) Represents the family of cones that is induced by the set of efficiently generated quantum effect operators $\{\mathbf{E}^{\mathrm{eff}}_n\}_{n \in \NN}$. (b) A family of states $\{\rho_{n}\}_{n\in\NN}$ with growing system size. (c) Measurement maps $\mathcal{M}^{\mathrm{eff}}_n$ acting on $\rho_{n}$ (e.g., $p=\Tr[E\,\rho_{n}]$ for $E\in \mathbf{E}^{\mathrm{eff}}_n$); distances and divergences are defined by optimizing over these maps.}
    \label{fig:cones}
\end{figure}

Nevertheless, even with all of this progress, computational quantum information theory still lags far behind its information-theoretic counterpart.
In this work, we help lessen this gap by developing a principled, cohesive mathematical approach towards defining operational divergences under computational constraints.
Specifically, we introduce two new types of computational divergences: the \emph{computational max-divergence} and \emph{computational measured Rényi divergences}. We impose a practical restriction --- access only to efficient binary measurements, the basic primitive for state discrimination --- and incorporate it directly into the standard max-divergence and measured Rényi definitions.

For measured {\Renyi} divergences, this is relatively straightforward. Informally speaking, given two quantum states, $\rho$ and $\sigma$, and any {\Renyi} divergence\footnote{Since the measurement outcomes are classical, all quantum extensions of the classical {\Renyi} divergence collapse to the same expression. 
} $\SandRenyi_{\alpha}$, we aim to find the efficient (binary) measurement $\MM$ that maximizes the quantity $\SandRenyi_{\alpha} \left(\MM(\rho),\MM(\sigma) \right)$. If one views {\Renyi} divergences as a kind of distinguishing measure~\cite{T16,HP91,ON00}, this quantity essentially describes the optimal distinguishing power achievable via access only to efficient binary measurements.

To incorporate these constraints into the max-divergence, we adopt the geometric approach from~\cite{RKW11,GC24}. Concretely, we replace the standard Löwner order ``$\leq$" on positive semidefinite operators with a computationally motivated partial order ``$\leq_{\Co^{\mathcal{S_{\mathrm{eff}}}}_n}$". This partial order is induced by a cone generated via a set of efficiently implementable \term{effect operators}\footnote{We refer to an \term{effect operator} as an element $E \in \BB(\HH)$ that satisfies $0\leq E \leq \id$.} (each corresponding to a two-outcome POVM $(E, \id-E)$); see \Cref{fig:cones}. In other words, it turns out that $ \rho \leq_{\Co^{\mathcal{S_{\mathrm{eff}}}}_n} \sigma$ if $ \MM(\rho) \leq \MM(\sigma)$ for every efficient measurement. We then use this computational partial order to define a computational max-divergence, analogous to the information-theoretic case. Naturally, $\rho \leq \sigma$ implies $ \rho \leq_{\Co^{\mathcal{S_{\mathrm{eff}}}}_n} \sigma$. However, the converse is not generally true. In such cases where the converse does not hold, one should view the computational partial order as stating that even though $\sigma-\rho$ is not positive semidefinite, it appears to be when restricted to efficient binary measurements.

In addition to defining these computational divergences, our main contributions are as follows. We first show that the computational max-divergence is equal to the $\alpha=\infty$ computational measured {\Renyi} divergence in \Cref{thm:measured=conic}. This establishes a consistency between the conic and measured divergence formulations, and mirrors the corresponding identity in the information-theoretic (unconstrained) setting which states there exists a binary measurement that leaves the max-divergence invariant~\cite[Eq.~III.65]{MH23}. It also demonstrates the usefulness of conic theory to computational quantum information.
We further link these quantities to computational analogs of the \emph{trace distance} and \emph{fidelity}, which are related by a computational Fuchs–van de Graaf inequality, \Cref{lem:comp.fuchs.van.Graaf}, and derive a computational Pinsker inequality, \Cref{lem:pinsker}. These bounds allow us to show an explicit example in which our computational measures differ from their information-theoretic counterparts: by using the pseudoentanglement construction of~\cite{GE24}, we instantiate families of states whose computational fidelity and measured Rényi divergences for $\alpha \in (0,1/2)$ are negligible while their information-theoretic counterparts remain $\Theta(1)$, establishing a concrete separation.

We also demonstrate the reach of our approach through two applications. First, in the poly-copy efficient-measurement regime, we prove a computational Stein’s lemma which upper-bounds the asymmetric hypothesis-testing error exponent that is achievable via efficient binary measurements by the regularized computationally measured relative entropy (i.e.~the $\alpha = 1$ regularized computational measured {\Renyi} divergence), see \Cref{thm:qpt-stein}. This gives the latter an operational interpretation and relates this regularized quantity to the complexity relative entropy within the context of hypothesis testing. Second, we define computationally-constrained resource measures induced by our divergences and establish in \Cref{th:comptfannes} an asymptotic continuity bound for the computational measured relative entropy. This inequality allows one to essentially bound the change in the computational measured relative entropy that occurs when replacing a given state $\rho$ with $\rho^\prime$ in terms of their corresponding computational trace distance. This bound implies \term{computational faithfulness} of our quantities: if two families of states are computationally indistinguishable, then the gap in their computational measured relative entropy of resource is negligible. Lastly, in the entanglement setting, we (i) exhibit an explicit separation—there exists a family of states for which the \term{computational measured relative entropy of entanglement} is negligible in $n$ while its information–theoretic counterpart is maximal—and (ii) show that, under efficient LOCC operations, this computational measure approximately lies between the computational entanglement cost and the computational distillable entanglement of~\cite{ABV23}, thereby mirroring the information–theoretic hierarchy in the computational setting.

\paragraph{Relation to prior and concurrent work.}

Our work sits at the intersection between measured Rényi divergences which consider restricted sets of measurements and complexity–aware information theory. With regard to the first point,~\cite{RSB24,MH23} derive many properties for measured {\Renyi} divergences. In particular,~\cite{RSB24}
introduces \term{locally–measured Rényi divergences}, and optimizes classical Rényi divergences over locality–constrained POVMs. They additionally derive variational formulae that are directly related to the conic formulations which appear when defining max-divergences in a conic framework, and calculate
Stein/strong–converse exponents for symmetric data–hiding families. We essentially replace the locality constraints on the measurements with \emph{efficiency} constraints. So while the measurements that are considered differ, many operational implications remain the same. Similarly,~\cite{BHLP20} prove a quantum Stein’s lemma for several classes of \emph{restricted} measurements via an adversarial reduction; our “computational Stein's lemma” is a converse bound in the poly–copy, efficient–measurement regime and incorporates complexity constraints rather than locality/separability constraints. 

For the computational max-divergence, our cone-geometric approach is inspired by~\cite{RKW11, GC24,RSB24}. Here, they incorporate locality (or other  entanglement-related constraints) restrictions on the measurements or states directly into the underlying conic structure. Similar to our approach for the computational measured {\Renyi} divergences, we apply this framework to the computational setting.

As mentioned above, our computational divergences are close in spirit to the complexity entropy and
complexity relative entropy introduced in~\cite{YKH+22,MKN+25}. Both our work and theirs consider efficient binary measurements and both can be related to efficient hypothesis testing, which is one of the applications we consider in this work. The main difference of course is that while they wanted to define a computational hypothesis-testing relative entropy,  our goal is to define computational versions of the max-divergence and measured {Rényi} divergences, whose information-theoretic (unrestricted) counterparts have proven to be instrumental and found a plethora of applications in information theory~\cite{T16,RSB24}. We expect this to extend to the computational setting as well.

Since we study a slightly different setting, it is not immediately clear if and how our computational max-divergence can be related to the quantum computational entropies defined in~\cite{CCL+17,avidan2025quantum,avidan2025fully}. In particular, a direct comparison to the quantum unpredictability entropy~\cite{avidan2025quantum} and computational min-entropy~\cite{avidan2025fully} is hard to make. They require multi-output measurements that are constrained to act solely on quantum registers to which an adversary has access to, whereas we consider binary measurements that can act on the entire state. Nevertheless, we believe it should be possible to slightly extend or adapt our definitions to incorporate these measurement constraints. \\
\textbf{Note:} During the preparation of this manuscript, we became aware of the concurrent work~\cite{MRRLJE25} by Meyer et al. While there are some overlaps and they also consider computationally restricted hypothesis testing, many of their results complement ours and vice versa.

\vspace{0.65cm}

\noindent\textbf{Structure of the paper.}
In \Cref{Sec:preliminaries}, we review preliminary material on quantum information theory and conic geometry.
In \Cref{Sec: EffMeas}, we construct our sets of efficient measurements (\Cref{sec:comp.POVM}) and the associated cones of efficient operators used to define the computational max-divergence (\Cref{sec:proper.cone}). 
We also provide definitions for the complexity-constrained trace norm and trace distance,  and show how these typical notions from quantum cryptography can be related to our formalism (\Cref{sec:comp.trac.dist}).
In \Cref{Sec: CompQuantDiv} we introduce the computational quantum divergences: first the computational max-divergence (\Cref{Sec: CompMaxDiv}), then the computational measured Rényi divergences (\Cref{Sec: CompMeasRenDiv}), showing that they coincide with the former in the limit $\alpha \to \infty$.
This yields a notion of computational fidelity (\Cref{sec:fidelity}); combined with computational Fuchs--van de Graaf–type bounds, it leads to an explicit cryptographic separation between the computational divergences and their information-theoretic counterparts (\Cref{sec:example}).
In \Cref{Sec:applications} we present applications.
We show that the Stein exponent is upper bounded by the regularized computational measured relative entropy, giving it an operational interpretation (\Cref{sec:comp.hyp.test}).
We then develop computational quantum resource measures (\Cref{sec:comp.res.th}) and establish an asymptotic continuity bound for the computational measured relative entropy of a resource.
Finally, focusing specifically on entanglement (\Cref{sec:entanglement}), we introduce the computational measured relative entropy of entanglement and relate it to previously defined computational entanglement measures.

\section{Preliminaries}
\label{Sec:preliminaries}
\subsection{Notation}
In this work we assume all Hilbert spaces to be finite dimensional. Hilbert spaces will be denoted by $\HH$. Given a Hilbert space $\HH$, we denote the set of positive semidefinite operators acting on $\HH$ by $\Pos(\HH)$. An operator $\rho \in \Pos(\HH)$ is called a \emph{quantum state} if we have $\Tr \left[\rho \right] = 1$. The set of quantum states on $\HH$ is denoted by $\Ss(\HH)$. The set of bounded operators on $\HH$ is given by $\BB(\HH)$ and the corresponding (sub)set of Hermitian operators is denoted by $\BB^{\dagger}(\HH)$. The Hilbert-Schmidt inner product on $\BB(\HH)$ is given by $\langle a,b\rangle\coloneqq\Tr[a^*b]$, where $^*$ denotes the Hilbert-Schmidt adjoint. We will also denote the adjoint of a quantum channel $\Phi:\mathcal{B}(\HH_1)\to\mathcal{B}(\HH_2)$ by $\Phi^*:\mathcal{B}(\HH_2)\to\mathcal{B}(\HH_1)$.
For two operators $\rho, \sigma \in \Pos(\HH)$ we write $\rho \ll \sigma$ if $\ker{\sigma} \subseteq \ker{\rho}$, where $\ker{\tau} \coloneqq  \{\ket{v} \,:\, \tau \ket{v} = 0\}$. Further, we say the state $\rho$ is orthogonal to the state $\sigma$, denoted by $\rho \perp \sigma$, if $\Tr \left[\rho \sigma \right] = 0$.  

 In Landau notation, given two functions $f(n)$ and $g(n)$, we write $f(n)=o(g(n))$ if \newline$\lim_{n\rightarrow \infty} f(n)/g(n) = 0$. In the same way, $f(n)=\omega(g(n))$ if $\lim_{n\rightarrow \infty} f(n)/g(n) = \infty$. $f(n)=O(g(n))$ if there exists a constant $ C>0$ such that $\lim_{n\rightarrow \infty} f(n)/g(n) \leq C$. Similarly, $f(n)=\Omega(g(n))$ if there exists a constant $ C>0$ such that $\lim_{n\rightarrow \infty} f(n)/g(n) \geq C$. Lastly, $f(n)=\Theta(g(n))$ if both $f(n)=O(g(n))$ and $f(n)=\Omega(g(n))$. A function $f(n)$ is $\negl(n)$, i.e., negligible, if, for every fixed $c$, $f(n) = o(1/n^c)$.
 
 With $\log$ we denote the logarithm to base 2 and with $\ln$ the logarithm to base $e$.

\subsection{Information-Theoretic Concepts} \label{Sec: InfoTheorConc}

\begin{definition} [Sandwiched {\Renyi} Divergence~\cite{MLDSFT13, WWY14}]
\label{def:sandwiched.renyi}
Given any two positive semidefinite operators $\rho , \sigma\in \text{Pos}(\HH)$ with $\operatorname{Tr}\left[\rho\right] > 0$, and ${\alpha\in (0,1) \cup (1,\infty)}$, the \term{sandwiched {\Renyi} divergence} between $\rho$, $\sigma$ is given by: 
\begin{align}
			\SandRenyi_{\alpha}(\rho\|\sigma) &\coloneqq \begin{cases}
				\frac{1}{\alpha-1}\log\frac{ \norm{\sigma^{\frac{1-\alpha}{2\alpha}}\rho \sigma^{\frac{1-\alpha}{2\alpha}}}_\alpha^{\alpha}}{\Tr\left[\rho\right]} & \text{ if } \left( \alpha < 1 \land \rho \not\perp \sigma \right) \vee \rho \ll \sigma \\
				+\infty & \text{else} 
			\end{cases} \; .
		\end{align}
	\end{definition}	
In this work, the divergence of interest is the sandwiched {\Renyi} divergence. For $\alpha\geq \frac{1}{2}$, it satisfies the data-processing inequality (DPI) and, for $0\leq \alpha\leq\beta$, it is monotonic in $\alpha$, i.e. ${\SandRenyi_{\alpha}(\rho\|\sigma)\leq \SandRenyi_\beta(\rho\|\sigma)}$. Moreover, we use $Q_{\alpha}$ to denote
\begin{align}
\label{eq:Qrenyi}
    Q_{\alpha} \coloneqq \norm{\sigma^{\frac{1-\alpha}{2\alpha}}\rho \sigma^{\frac{1-\alpha}{2\alpha}}}_\alpha^{\alpha} \; .
\end{align}

\begin{definition} [Relative Entropy \& Max-divergence~\cite{D09}]
    Given any two positive semidefinite operators $\rho , \sigma\in \text{Pos}(\HH)$ with $\operatorname{Tr}\left[\rho\right] > 0$ and $\rho \ll \sigma$, the (Umegaki) \term{relative entropy} and \term{max-divergence} between $\rho$, $\sigma$ are, respectively, given by:
		\begin{align}
			D(\rho\|\sigma) &\coloneqq 
				\frac{\Tr[\rho(\log\rho-\log\sigma)]}{\Tr\left[\rho\right]} \, , \\
                D_\text{max}(\rho\|\sigma) &\coloneqq 
				\log\inf\{\lambda\in\RR|\rho\leq\lambda\sigma\} \; .
		\end{align}
\end{definition}
\begin{definition} [Fidelity]
    Given  two quantum states $\rho , \sigma\in \Ss(\HH)$, the  \term{fidelity} between $\rho$, $\sigma$ is given by:
		\begin{align}
			F(\rho,\sigma) &\coloneqq \norm{\sqrt{\rho }\sqrt{\sigma }}_{1}^{2}
				 \; .
		\end{align}
\end{definition}

We are particularly interested in the sandwiched {\Renyi} divergences for $\alpha \in \{\frac{1}{2},1,\infty \}$. 
These three cases are of particular use and one retrieves that
\begin{align*}
    \lim_{\alpha \to 1} \SandRenyi_\alpha(\rho\|\sigma) &= D(\rho\|\sigma) \, ,\\
    \lim_{\alpha\nearrow\infty} \SandRenyi_\alpha(\rho\|\sigma) &= D_\text{max}(\rho\|\sigma)\, , \\
    \SandRenyi_{\frac{1}{2}}(\rho\|\sigma) &= -\log F(\rho,\sigma) \; .
\end{align*} 
A closely related divergence is the measured {\Renyi} divergence. To properly define them, we require POVM operators which are defined as follows.
\begin{definition} [POVM] A \term{positive operator-valued measure} (POVM) is given by a tuple of positive semidefinite operators $M =( E_{1},\dots,E_{m} )$ (for some $m \in \mathbb{N}$) such that ${\sum_{k=1}^m E_{k} = \id}$. Moreover, if ${E_{i} E_{j} = \delta_{ij} E_{i}}$ for all $i,j \in \{1,\dots, m \}$, we call this a \term{projection-valued measure} (PVM).
The operators $E_i$ of a POVM are called \textit{effect operators} or \textit{POVM elements}.
\end{definition}
Equally one can associate to any $m$-outcome POVM $M=(E_1,...,E_m)$ a channel that maps quantum states to classical distributions over $\{0,\dots,m-1\}$, via
\begin{align}
    \MM_M\, : \; \rho \mapsto \sum_{i=0}^{m-1}\Tr[\rho E_{i+1}]\, \ketbra{i}{i} \,.
\end{align} 
We will sometimes drop the subscript $M$, when the POVM to which this measurement map corresponds is clear from context. 

\begin{definition}[Measured Divergence]
    Let $\mathbf{M}$ be a set of POVMs. Given any two positive semidefinite operators $\rho , \sigma\in \text{Pos}(\HH)$ with $\operatorname{Tr}\left[\rho\right] > 0$, for any divergence $\mathbb{D}$, the corresponding \term{measured divergence} between $\rho$, $\sigma$ is given by:
\begin{align}
    \mathbb{D}^{\mathbf{M}}(\rho\|\sigma) = \sup_{M \in \mathbf{M} } \mathbb{D} (\MM_M(\rho)\|\MM_M(\sigma)) \, .\label{Eq: MeasuredRD}
\end{align}
\end{definition}
We note that the RHS of Eq.~\eqref{Eq: MeasuredRD} is evaluated on classical states. If $\mathbb{D}=\SandRenyi_\alpha$, then we call $\SandRenyi_\alpha^{\mathbf{M}}(\rho\|\sigma)$ a \term{measured {\Renyi} divergence}, and the RHS reduces to the classical {\Renyi} divergence. As such, it is not strictly necessary to consider the sandwiched {\Renyi} divergence; the Petz-{\Renyi} divergence, see~\cite{T16} for a formal definition, would, e.g., also yield the same definition. Nevertheless, it seems somewhat natural to consider the sandwiched {\Renyi} divergence for the following reason. For any $\alpha \geq \frac{1}{2}$ and $\rho,\sigma \in \text{Pos}(\HH)$, if $\mathbf{M}_{n}= \mathbf{M}_{n,\textup{ALL}}$, where $\mathbf{M}_{n,\textup{ALL}}$ denotes the set of all POVMs that can act on $\rho^{\otimes n}$ (or $\sigma^{\otimes n}$), then (see, e.g.,~\cite{Mosonyi_2014})
\begin{align*}
    \SandRenyi_\alpha(\rho\|\sigma) = \lim_{n \to \infty} \frac{1}{n} \SandRenyi^{\mathbf{M}_n}_\alpha(\rho^{\otimes n}\|\sigma^{\otimes n}) \; .
\end{align*}
This naturally leads to the question of how one generally defines the regularization of measured divergences.
\begin{definition}[Regularized Measured Divergence~\cite{RSB24,MH23}]
    Let $\widetilde{\mathbf{M}} = \left(\mathbf{M}_{1},\mathbf{M}_{2}, \dots \right)$ be a family of POVM sets.  Given any two positive semidefinite operators $\rho , \sigma\in \text{Pos}(\HH)$ with $\operatorname{Tr}\left[\rho\right] > 0$,  for any divergence $\mathbb{D}$, the corresponding \term{regularized measured divergence} between $\rho$, $\sigma$ is given by:
\begin{align}
\mathbb{D}^{\widetilde{\mathbf{M}}, \infty}(\rho\|\sigma) \coloneqq \limsup_{n \rightarrow \infty } \frac{1}{n}  \mathbb{D}^{\mathbf{M}_{n}}(\rho^{\otimes n}\|\sigma^{\otimes n}) \; . \label{Eq: RegMeasuredRD}
\end{align}
\end{definition}
If the corresponding measured divergence  satisfies super-additivity, i.e.,
\begin{align*}
    \mathbb{D}^{\mathbf{M}_{k+l}}(\rho^{\otimes k+l}\|\sigma^{\otimes k+l}) \geq \mathbb{D}^{\mathbf{M}_{k}}(\rho^{\otimes k}\|\sigma^{\otimes k}) + \mathbb{D}^{\mathbf{M}_{l}}(\rho^{\otimes l}\|\sigma^{\otimes l})
\end{align*}
for any $k,l \in \NN$, then it follows from, e.g.~\cite[Lemma 4]{RSB24}, that
\begin{align*}
       \mathbb{D}^{\widetilde{\mathbf{M}}, \infty}(\rho\|\sigma) = \lim_{n \to \infty} \frac{1}{n}  \mathbb{D}^{\mathbf{M}_{n}}(\rho^{\otimes n}\|\sigma^{\otimes n}) = \sup_{n \to \infty} \frac{1}{n}  \mathbb{D}^{\mathbf{M}_{n}}(\rho^{\otimes n}\|\sigma^{\otimes n}) \; .
\end{align*}
In this work we are primarily interested in $2$-output POVMs or measurements. These are sometimes called \term{test-measured divergences}, see e.g.~\cite{MH23}. Moreover, while we provide general definitions for quantum divergences, we almost exclusively consider divergences between two quantum states $\rho,\sigma \in \Ss(\HH)$.

\subsection{Theoretic Background on Cones} \label{Sec: TheorBackCone}
In this section we summarize some basic notions and results related to cones~\cite{Bus73a,Bus73b,Eve95,RKW11,GC24}.

\begin{definition} [Convex Cone]
    Let $\mathcal{V}$ be a finite-dimensional real vector space with inner product $\langle\cdot,\cdot\rangle$. 
A \term{convex cone} $\mathcal{C}\subset \mathcal{V}$ is a subset s.t.
\begin{align*}
    a,b\in \mathcal{C} \implies \lambda a+\mu b\in\mathcal{C} \quad \forall \lambda,\mu\geq 0\, .
\end{align*}
Moreover, for any $K\subset \mathcal{V}$, we denote by $\cone(K)\coloneqq\conv(\bigcup_{\lambda\geq 0}\lambda K)$ the smallest (convex) cone that contains $K$.
\end{definition}
 A cone is called \term{closed} if it is closed with respect to the norm topology induced by the inner product of $\mathcal{V}$, and, as a consequence, includes its boundary. Following the notation of~\cite{RKW11} we further call a cone \textit{pointed} if $\mathcal{C}\cap(-\mathcal{C})=\{0\}$, where the elements of $-\mathcal{C}$ are given by
\begin{align*}
    a \in -\mathcal{C} :\hspace{-1mm}\iff -a \in \mathcal{C} \; .
\end{align*}
Similarly, a cone is \term{solid} if $\Span(\mathcal{C})=\mathcal{V}$. In finite-dimensional $\mathcal{V}$ this is equivalent to $\mathcal{C}$ having a non-empty interior $\text{int}(\cC)\neq\emptyset$.
\begin{definition} [Proper Cone]
A convex, closed, pointed, and solid cone is called a \textit{proper cone}.
\end{definition}
\begin{definition} [Cone Base]
Given some convex cone $\cC$, a convex set $B$ is called a \textit{base of the cone} $\cC$ if $\cC=\cup_{\lambda\geq 0}\lambda B$ and if every $c\in\cC\setminus\{0\}$ has a unique representation $c=\lambda b$ for some $b\in B, \lambda>0$. If $\cC$ is in addition closed and with non-empty interior, then $B$ is the \term{base of a proper cone}.
\end{definition}
One can also assign to each cone a dual cone.
\begin{definition} [Dual Cone]
Given a cone $\mathcal{C}$, its \term{dual cone} $\cC^*$ is given by
\begin{align*}
    \cC^*\coloneqq\{x\in\VV^*| \langle x,c\rangle\geq 0, \ \forall c\in\cC\} \; .
\end{align*} 
\end{definition}
One natural consequence of this definition is that if $\cC_1\subset\cC_2$, then $\cC^*_2\subset\cC_1^*$. We additionally note that the dual cone is also closed and convex. Moreover, if $\cC$ is a proper cone, then so is $\cC^*$. Lastly, in finite-dimensional vector spaces, closed convex cones are their own bi-dual cones, meaning that the corresponding dual cone is also its pre-dual cone, i.e. $(\cC^*)^*=\cC$.
Given a real vector space $\mathcal{V}$ and a proper cone $\mathcal{C} \subset \mathcal{V}$, one can define a natural partial order on $\mathcal{V}$.
\begin{definition}[Partial Order]
Let $\mathcal{V}$ be a finite-dimensional real vector space and $\mathcal{C} \subset \mathcal{V}$ be a proper cone. We say that $a\geq_{\cC}b$ if and only if $a-b\in\cC$.
\end{definition}

In the following, we will take $\VV$ to be a vector space of Hermitian matrices $\BB^{\dagger}(\HH)$, over some Hilbert space $\HH$. This vector space is isomorphic to $\RR^{d^2}$, where $d=\dim(\HH)$.
Over the space of Hermitian matrices, one of the most common examples of (proper) cones in quantum information theory is the cone of \term{positive semidefinite} operators in $\BB^\dagger(\HH)$, i.e.
\begin{align}
    \Pos(\mathcal{H})\coloneqq\{a\in\BB^\dagger(\HH) \,| \, a\geq 0\}\, .
\end{align}
Notably, this cone is self-dual in $\VV$ and we denote its induced (positive operator) partial order simply by $\geq$. This particular partial order is relevant for the definition of the max-divergence~\cite{D09,RKW11}, which satisfies
\begin{align*}
    D_\text{max}(a\|b) &=  \log\inf\{\lambda\in\RR| a\leq\lambda b\} \\
    &= \log\sup_{v\in\Pos(\mathcal{H})}\frac{\langle v,a\rangle}{\langle v,b\rangle} \; .
\end{align*}
\begin{rem} Note that in the above optimization one should interpret the $v$ as being an effect operator living in the (pre-)dual of cone of $\Pos(\HH)$, in which $a,b$ live, rather than that cone itself. However, since $\Pos(\HH)$ is self-dual with respect to the Hilbert-Schmidt inner product these cones are the same. This differentiation will become important for arbitrary cones in \Cref{Def: CMaxDiv}. 
\end{rem}
If one considers other proper cones inside of $\BB^{\dagger}(\HH)$, then one can use 
this relation between the max-divergence and the conic structure of $\Pos(\HH)$ to derive a more general notion of max-divergences.
\begin{definition} [$\cC$-Max-divergence~\cite{RKW11, GC24}] \label{Def: CMaxDiv}
    Let $\cC$ be a proper cone. For any $a,b\in\cC\setminus\{0\}$, the \term{$\cC$-max-divergence} between $a,b$ is given by
    \begin{align}
        D^{\cC}_\text{max}(a\|b)\coloneqq  \log\inf\{\lambda\in\RR| a\leq_{\cC}\lambda b\} \; .
    \end{align}
    If no $\lambda\in\RR$ exists such that $\lambda b-a\in\cC$, i.e. the infimum is over an empty set, then we set $ D^{\cC}_\text{max}(a\|b)=\infty$.
\end{definition}
We note that sometimes the notation $\sup(a/b)$ is used instead of  $\exp(D_\text{max}(a\|b))$, see e.g.~\cite{RKW11}. Moreover, there exists a dual formulation of the $\cC$-max-divergence.
\begin{lem} [Dual Expression~\cite{RKW11,GC24}] \label{lem: dualexp}
    Let $\cC$ be a proper cone. Then 
    \begin{align}
    D^{\cC}_\textup{max}(a\|b)\coloneqq \log\inf\{\lambda\in\RR| a\leq_{\cC}\lambda b\}=\log\sup_{v\in\cC^*}\frac{\langle v,a\rangle}{\langle v,b\rangle}  \; .
    \end{align}
\end{lem}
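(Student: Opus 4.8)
The plan is to treat this as a strong-duality statement for a one-dimensional conic program and to extract it from the bidual identity $(\cC^*)^*=\cC$ recorded above (valid since $\cC$ is closed and convex in finite dimensions). Set $\lambda^\star\coloneqq\inf\{\lambda\in\RR\mid \lambda b-a\in\cC\}$, so that $D^{\cC}_\text{max}(a\|b)=\log\lambda^\star$, and $s\coloneqq\sup_{v\in\cC^*}\langle v,a\rangle/\langle v,b\rangle$, where I adopt the conventions $0/0=0$ and $c/0=+\infty$ for $c>0$; these are harmless because $v\in\cC^*$ and $a,b\in\cC$ force $\langle v,a\rangle,\langle v,b\rangle\geq 0$. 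The target is then $\lambda^\star=s$ (with equality allowed to be $+\infty$), after which one takes logarithms.

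First I would record a few structural facts. Since $\cC$ is proper, so is $\cC^*$; in particular $\cC^*$ is solid, hence spans $\VV$, so there exist $v_a,v_b\in\cC^*$ with $\langle v_a,a\rangle>0$ and $\langle v_b,b\rangle>0$ (using $a,b\neq 0$), and then $v_a+v_b\in\cC^*$ pairs strictly positively with both $a$ and $b$; this already shows $s>0$ and that every feasible $\lambda$ is strictly positive. Moreover, the feasible set $\{\lambda:\lambda b-a\in\cC\}$ is closed (as $\cC$ is closed) and upward closed (because $\lambda_2 b-a=(\lambda_1 b-a)+(\lambda_2-\lambda_1)b\in\cC$ whenever $\lambda_1 b-a\in\cC$ and $\lambda_2\geq\lambda_1$), so it is either empty — the case $D^{\cC}_\text{max}=+\infty$ — or equals $[\lambda^\star,\infty)$ with $\lambda^\star$ finite and positive.

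For $\lambda^\star\geq s$: given any feasible $\lambda$ and any $v\in\cC^*$, pairing $v$ with $\lambda b-a\in\cC=(\cC^*)^*$ gives $\lambda\langle v,b\rangle\geq\langle v,a\rangle$; I would then check case by case ($\langle v,b\rangle>0$; $\langle v,b\rangle=\langle v,a\rangle=0$; and $\langle v,b\rangle=0$ with $\langle v,a\rangle>0$, which is vacuous as it forbids any feasible $\lambda$) that this forces $\lambda\geq\langle v,a\rangle/\langle v,b\rangle$; taking $\sup$ over $v$ and then $\inf$ over feasible $\lambda$ yields $\lambda^\star\geq s$ (trivially if the feasible set is empty). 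For $\lambda^\star\leq s$: assuming $s<\infty$ (otherwise there is nothing to show), I would verify that $s$ itself is feasible, i.e.\ $sb-a\in\cC$, by checking via the bidual identity that $\langle v,sb-a\rangle\geq 0$ for all $v\in\cC^*$ — this is immediate from the definition of $s$ when $\langle v,b\rangle>0$, and forced when $\langle v,b\rangle=0$ since then $\langle v,a\rangle>0$ would make $s=+\infty$. Hence $s\in[\lambda^\star,\infty)$, so $\lambda^\star\leq s$, and combining the two bounds gives $\lambda^\star=s$.

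The only genuinely delicate point — everything else is bookkeeping around the bidual theorem — is the handling of the degenerate functionals $v\in\cC^*$ with $\langle v,b\rangle=0$: these are precisely the ones responsible for $D^{\cC}_\text{max}$ being $+\infty$, and one must make sure the division conventions and the ``$\sup=+\infty$'' branch line up consistently on both sides of the identity. I would also note that properness of $\cC$ is used only to ensure that $\cC^*$ is solid and that $\lambda^\star$ is finite and positive when the feasible set is nonempty; the bidual step itself needs only closedness and convexity, so the same argument extends verbatim to general closed convex cones.
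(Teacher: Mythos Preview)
Your proof is correct. The paper does not actually give a proof of this lemma --- it is stated with a citation to \cite{RKW11,GC24} and invoked as a known fact from the conic-duality literature --- so there is no ``paper's own proof'' to compare against. What you have written is the standard argument via the bidual identity $(\cC^*)^*=\cC$, and your handling of the degenerate case $\langle v,b\rangle=0$ (which is exactly what links the two ``$+\infty$'' branches) is careful and correct. Your structural observations --- that solidity of $\cC^*$ furnishes a $v$ pairing strictly positively with both $a$ and $b$, that the feasible set is a closed upward ray, and that closedness and convexity alone suffice for the bidual step --- are all accurate and make the argument self-contained.
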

With these quantities, one can also additionally define a
\term{Hilbert's projective metric}~\cite{Bus73a,Eve95} on a proper cone $\cC$ as $ d^{\cC}_\mathrm{H}(a,b)\coloneqq D^{\cC}_\textup{max}(a\|b)+D^{\cC}_\text{max}(b\|a) $.
For the applications we are considering in this work, it will be natural to consider the following setup.
Let $\mathbb{M}\subset \VV^*$ be a closed, convex, and pointed set with a non-empty interior in the dual space, which contains an element $e\in\mathbb{M}$ such that 
\begin{align}
     E \in \mathbb{M} \implies  e-E\in\mathbb{M} \quad  \forall E \in \mathbb{M} \; . \label{eq: IncRevMeas}
\end{align}
In particular, this ensures that $\mathbb{M}$ generates a proper cone $\mathcal{C}_\mathbb{M}^*$.
\begin{definition} [Distinguishability Norm]\label{def:distinguishabilityNorm}
    Let $\mathbb{M} \subset \VV^*$ be a closed, convex, and pointed set with a non-empty interior.  Moreover let $\mathbb{M}$ contain an element $e\in\mathbb{M}$ for which Eq.~\eqref{eq: IncRevMeas} holds. For any element $v \in \VV$, we denote the \term{distinguishability norm} of $v$ by
\begin{align}
\|v\|_{(\mathbb{M})}\coloneqq \sup_{E\in\mathbb{M}}\langle 2E-e,v\rangle  \; .
\end{align}
\end{definition}
In particular, it is known that the distinguishability norm 
defines a norm on  $\VV$~\cite{RKW11}.
\begin{rem}
In quantum information theory, this norm is simply the trace norm. Here, one chooses $\mathbb{M}$ to be the set of all effect operators, i.e., $E \in \mathbb{M}$ if and only if $0 \leq E \leq \id$. 
The operational interpretation of the trace norm is given by the task of distinguishing two equiprobable quantum states $\rho$ and $\sigma$. In this case, one chooses $v = \rho -\sigma$ and $e = \id$, retrieving
\begin{align*}
    \|\rho -\sigma\|_{(\mathbb{M})} &= 2  \sup_{E\in\mathbb{M}}\Tr \left[ E \left(\rho -\sigma\right)\right] - \Tr \left[ \rho -\sigma\right] \\
    &= 2  \sup_{E\in\mathbb{M}}\Tr \left[ E \left(\rho -\sigma\right)\right] \\
    &=  \|\rho -\sigma\|_{1} \; .
\end{align*}
\end{rem}

If one additionally chooses $\mathbb{M}$ to be a subset of all effect operators, then $\mathbb{M}$ can be related to a set of binary (POVM) measurements. The corresponding \term{minimal error probability} for discriminating between two equiprobable states $\rho$ and $\sigma$ is given by
\begin{align}
    \PP^{\mathbb{M}}_{err} \left(\rho,\sigma\right)= \frac{1}{2} - \sup_{E  \in \mathbb{M}} \frac{1}{2}\left| \Tr[(\rho-\sigma)E] \right| = \frac{1}{2} - \frac{1}{4} \| \rho - \sigma \|_{\mathbb{M}} \, .
\end{align}

\section{Cone of Efficient Binary Measurements} \label{Sec: EffMeas}
 
We first study in \Cref{sec:comp.POVM} the consequences of complexity constraints for quantum measurements, defining sets of efficient two-output measurements. We then show in \Cref{sec:proper.cone} that one can naturally associate to these sets a proper cone inside of $\BB^{\dagger}(\HH)$; this is our main contribution for this section. As we then discuss in Section~\ref{Sec: CompQuantDiv}, this allows us to define a computational max-divergence via Definition~\ref{Def: CMaxDiv}. Moreover, we also discuss relations between 
our sets of efficient measurements and  corresponding notions of a \emph{computational trace norm} and \emph{computational trace distance} in \Cref{sec:comp.trac.dist}, and relate this connection to cryptographic constructions.

\subsection{Complexity-Constrained POVMs}
\label{sec:comp.POVM}
When we consider (generic) two-output measurements, we refer to POVMs $M = (E_1,E_2)$. It is known that any two-output POVM on $\HH$ is implemented by appending ancillas to the quantum system, applying a quantum circuit on the entire system, applying a projective measurement on a \textit{single} qubit in the computational basis, and tracing out the remaining subsystems. The formal way to express one specific implementation of $M$ is typically via Naimark's dilation theorem, see e.g.~\cite[Theorem 2.5.1]{H11}, which we include below for completeness and holds for arbitrary $m$-outcome POVMs.
\begin{thm}[Naimark’s Dilation Theorem in Finite Dimensions] \label{Thm: Naimark}
Any POVM $M=(E_i)_{i=1}^m$ on $\HH$ can be implemented via an isometry and a projective measurement on ancilla systems. In particular, there exists an isometry $V:\mathcal{B}^\dagger(\HH) \to \mathcal{B}^\dagger(\HH\otimes \HH_{K})$, where $\HH_{K}$ denotes a Hilbert space of dimension $m$, such that
\begin{align}
    E_k =  V^* \Bigl(\id_\HH \otimes |k\rangle\langle k|_{\HH_K} \Bigr) V \quad \forall k\in [m]. 
\end{align}
\end{thm}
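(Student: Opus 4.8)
The plan is to give the standard explicit Naimark dilation and verify the two properties it must satisfy. Fix an $m$-dimensional ancilla Hilbert space $\HH_K$ with orthonormal basis $\{\ket{k}\}_{k=1}^m$, and define the linear map $V:\HH\to\HH\otimes\HH_K$ by
\begin{align}
    V\ket{\psi}\coloneqq\sum_{k=1}^m\bigl(\sqrt{E_k}\ket{\psi}\bigr)\otimes\ket{k}_{\HH_K}\, ,
\end{align}
where $\sqrt{E_k}$ is the unique positive semidefinite square root of $E_k$ (which exists since $E_k\geq 0$). In the statement, $V$ is tacitly identified with the conjugation map $a\mapsto VaV^*$ on operator spaces; it suffices to work with $V$ as a linear map between the underlying Hilbert spaces and to take adjoints accordingly.

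First I would check that $V$ is an isometry: for all $\ket{\psi},\ket{\phi}\in\HH$,
\begin{align}
    \langle V\phi|V\psi\rangle=\sum_{k,l=1}^m\langle k|l\rangle\,\bra{\phi}\sqrt{E_l}\sqrt{E_k}\ket{\psi}=\sum_{k=1}^m\bra{\phi}E_k\ket{\psi}=\langle\phi|\psi\rangle\, ,
\end{align}
using $\langle k|l\rangle=\delta_{kl}$ and the POVM normalization $\sum_{k}E_k=\id_\HH$; hence $V^*V=\id_\HH$. Next I would verify the dilation identity. Set $P_k\coloneqq\id_\HH\otimes\ketbra{k}{k}_{\HH_K}$; these are mutually orthogonal projections with $\sum_{k}P_k=\id_{\HH\otimes\HH_K}$, i.e.\ $\{P_k\}_{k=1}^m$ is a PVM. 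Since $P_kV\ket{\psi}=\bigl(\sqrt{E_k}\ket{\psi}\bigr)\otimes\ket{k}$, only the $k$-th summand of $V\ket{\phi}$ contributes to the inner product, and
\begin{align}
    \bra{\phi}V^*P_kV\ket{\psi}=\langle V\phi|P_k|V\psi\rangle=\bra{\phi}\sqrt{E_k}\sqrt{E_k}\ket{\psi}=\bra{\phi}E_k\ket{\psi}
\end{align}
for all $\ket{\psi},\ket{\phi}\in\HH$. Therefore $E_k=V^*P_kV$ for every $k\in[m]$, which is exactly the claimed identity. As a byproduct, measuring the PVM $\{P_k\}$ on $V\rho V^*$ yields outcome $k$ with probability $\Tr[P_kV\rho V^*]=\Tr[V^*P_kV\rho]=\Tr[E_k\rho]$, so the dilation reproduces the outcome statistics of $M$; for $m=2$ the register $\HH_K$ is a single qubit and $\{P_1,P_2\}$ is the computational-basis measurement on it, matching the informal description preceding the theorem.

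There is no substantial obstacle here: the result is classical and the construction above already constitutes a complete proof once the two displayed verifications are spelled out. The only mild subtlety is notational, namely reconciling the statement's typing of $V$ as a map $\mathcal{B}^\dagger(\HH)\to\mathcal{B}^\dagger(\HH\otimes\HH_K)$ with its appearance as a Hilbert-space isometry in the formula $E_k=V^*P_kV$; this is handled by the identification $V\leftrightarrow(a\mapsto VaV^*)$ noted above, or equivalently by working throughout in the Heisenberg picture, where the channel $\rho\mapsto V\rho V^*$ and its adjoint act on $\mathcal{B}^\dagger(\HH\otimes\HH_K)$ and $\mathcal{B}^\dagger(\HH)$ respectively.
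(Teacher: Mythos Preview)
Your proof is correct and is the standard explicit Naimark dilation. The paper does not actually prove this theorem: it states it as a known result with a reference to \cite[Theorem 2.5.1]{H11}, so there is nothing to compare against beyond noting that your construction is the textbook one.
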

\begin{rem}
   One can alternatively view $V$ as a unitary, $U$, on the space $\mathcal{B}^\dagger(\HH\otimes \HH_{K})$ which acts on states of the form $\rho \otimes \ketbra{0}{0}_{\HH_K} $.
   A simple consequence of Theorem~\ref{Thm: Naimark} is that for any state $\rho$,
   \begin{align}
       \Tr \left[ E_k \rho\right] = \Tr \left[ \Bigl(\id_\HH \otimes |k\rangle\langle k|_{\HH_K} \Bigr) \Bigl(U \left(\rho \otimes \ketbra{0}{0}_{\HH_K} \right) U^* \Bigr)\right] \; .
   \end{align}
   Since only the ancillary system is being measured, one can trace out $\HH$, thus retrieving the aforementioned interpretation of Naimark's dilation.
\end{rem}
\begin{rem}[Gate complexity of the POVM]\label{rem:POVMcomplexity}
    When we refer to the complexity of a POVM, we are referring to the complexity of the (unitary) quantum circuit $U$ that was necessary 
to implement it. We adopt the approach from, e.g.,~\cite{avidan2025quantum,avidan2025fully}, where, in other words, one considers the process of adding ancillas, tracing out subsystems, and measuring in the computational basis, to be `free' operations. To characterize the complexity of a quantum circuit, $U$, one first chooses an approximately universal gate set $\mathcal{G}$ consisting of $|\mathcal{G}|$ unique gates. 
The complexity of $U$ is equal to the minimal number of gates that are necessary to generate $U$.
\end{rem}

For binary measurements, there exists a bijection between POVMs $M = (E,\id-E)$ and effect operators $E$. We take advantage of this bijection for the remainder of this section and will in abuse of notation sometimes refer to both effect operators and POVMs as measurements, depending on context. In particular, whenever we consider the complexity of an effect operator $E$, we are referring to the complexity of the associated POVM $M$.

\begin{definition}[Polynomially Generated Sets of Effect Operators]\label{def:PGSEO}
Given an approximate universal quantum gate set $\mathcal{G}$ of fixed size  $|\mathcal{G}|<\infty$, a \term{polynomially generated set of effect operators} is a family of sets of effect operators 
\begin{align}
    \{\mathbf{E}^{\mathrm{eff}}_n\}_{n\in\mathbb{N}}, \quad \mathbf{E}^{\mathrm{eff}}_n\subseteq \{ E \in  \Pos(\HH^{\otimes n}): E \leq \id \}
\end{align}
such that there exists a polynomial $p$ with the property that any element of $E \in \mathbf{E}^{\mathrm{eff}}_n$ can be implemented using at most $p(n)$ elementary gates in the sense of \Cref{rem:POVMcomplexity}. We will further impose that if $E \in \mathbf{E}^{\mathrm{eff}}_n$, then so is $\id - E \in \mathbf{E}^{\mathrm{eff}}_n$. Lastly, given $E_1\in \mathbf{E}^{\mathrm{eff}}_{n_1}$ and $E_2\in \mathbf{E}^{\mathrm{eff}}_{n_2}$, then $E_1\otimes E_2 \in  \mathbf{E}^{\mathrm{eff}}_{n_1+n_2}$, if the gate complexities of $E_i$, say $c_i$, are such that $c_1+c_2< p(n_1+n_2)$.

If each $\mathbf{E}_n^\mathrm{eff}$ is informationally complete, i.e. it holds that $\Span(\mathbf{E}_n^\mathrm{eff})=\mathcal{B}(\HH^{\otimes n})$, then we further call this family an \term{informationally complete polynomially generated set of effect operators}.
\end{definition}
One should think of polynomially generated effect operators informally as `efficiently' implementable measurement operators.
To better understand this definition, a couple of observations and remarks are in order.
Observe that we (almost) trivially have that $0,\id \in\mathbf{E}_n^\mathrm{eff}$, since these measurements corresponds to trivially outputting $0,1$, respectively, regardless of input. A way to implement this is to generate an ancilla qubit in the state $\ket{0}$ (for $1$ one additionally applies an $X$ gate) and subsequently measuring it in the computational basis, which we assume to be free operations.
We will usually assume that the polynomial $p$ in the definition is strictly super-additive,  satisfying $p(n+m)\geq p(n)+p(m)+1$, unless otherwise mentioned.

\begin{rem}[Efficient effect operators, classical post-processing, and information completeness\label{rem:postprocessing}]
Implicit in this definition is that we assume that polynomial classical post-processing with NOT and OR operations is also efficient. To motivate that $E\in\mathbf{E}_n^\mathrm{eff} \implies \id-E \in\mathbf{E}_n^\mathrm{eff}$, we highlight that the POVMs, which are the physical implementation of the effect operators, $M_{1} = (E,\id - E)$ and $M_{2} = (\id - E,E)$ differ only by a single classical NOT post-processing step. As such, if $M_{1}$ is `efficient', then $M_{2}$ should also be considered efficient, which we enforce with this condition.

Likewise, the stability under tensor product condition (assuming the polynomial $p$ grows fast enough, which is the case if it is super-additive) is motivated by our implicit assumption that polynomially many classical OR post-processing operations are also assumed to be efficient. For example, for two efficient effect operators $E_1,E_2$ the corresponding binary measurements are $M_1 = (E_1,\id - E_1)$ and $M_2 = (E_2,\id - E_2)$, and $M_3 = (E_1 \otimes E_2,\id - E_1 \otimes E_2)$ corresponds to the tensor product measurement  $E_1\otimes E_2$. It is implementable by implementing $M_1,M_2$ in parallel using at most $p(n_1)+p(n_2)$ gates and mapping the output `$ \, 00$' to $0$ and the other three outcomes to $1$. This corresponds to a logical OR gate so is only polynomial classical post processing.
 
We point out that one way to ensure that the set of effect operators is informationally complete is if for every $n$,
\begin{align*}
    U\ketbra{k}{k}U^{*} \in \mathbf{E}_n^\mathrm{eff} \quad \forall k\in [\textup{dim}(\HH)^ n] \ \forall U \in \UU \;,
\end{align*}
where $\UU$ is the set of \emph{Weyl operators} on $\HH^{\otimes n}$~\cite[Section 4.1.2]{W18} and $|k\rangle\langle k|$ is the computational basis. For the case that $\textup{dim}(\HH) =2$, this can be reduced to the Pauli group. Therefore, it corresponds to first applying (for all $i \in \NN$) a unitary $U_{i} \in \{X,Y,Z\}$ on the $i$-th qubit. Then one has to apply a measurement from some set, where the corresponding effect operators are of the form 
\begin{align*}
    \ketbra{1^{i_1}\cdots 1^{i_n}}{1^{i_1}\cdots 1^{i_n}} \; . \label{eq: infocompletequbiteffop}
\end{align*}
One way to generate the corresponding measurement would be to first apply the gate $X^{i_1}\otimes \cdots \otimes X^{i_n}$. Subsequently, one applies a computational measurement on all $n$ qubits. Since we are only considering binary measurements, one has to then map the outcome $0 \cdots 0$ to $0$ and all other outcomes to $1$. This `$2^n$-output measurement $+$ post-processing step' is generally efficient, and can be implemented by an efficient quantum circuit and a computational measurement on a single qubit. So in effect, the informationally complete condition means that the polynomial $p(n)$ cannot be too small.
\end{rem}

Unless specified otherwise, we will for the following always assume that the polynomially generated set of effect operators is informationally complete.  Using the bijection between effect operators $E$ and binary POVMs $(E, \id-E)$, we consequently get the following definition.
\begin{definition}[Polynomially Generated Sets of Binary POVMs]\label{def:poly.binary.POVMs}
    Let $\{\mathbf{E}^{\mathrm{eff}}_n\}_{n\in\mathbb{N}}$ be an (informationally complete) polynomially generated set of effect operators. The family
    \begin{align}
   \{\Meff_n\}_{n\in\mathbb{N}}, \quad \text{where} \quad  \Meff_n \coloneqq \left\{(E, \id-E) |E\in \mathbf{E}^{\mathrm{eff}}_n\right\} \; ,
    \end{align} is the associated \textit{(informationally complete) polynomially generated set of binary POVMs}.
\end{definition}
We recall that directly from the properties of \Cref{def:PGSEO} it follows that if ${M=(E, \id-E)\in\Meff_n}$, then so is ${M^\prime}=(\id-E,E)$ and if $M_i=(E_i,\id-E_i)\in\Meff_{n_i}$ for $i=1,2$, then $(E_1\otimes E_2,\id-E_1\otimes E_2)\in \Meff_{n_1+n_2}$, if the polynomial implicit in the definition of the family $\{\Eeff_n\}_{n\in\NN}$ grows sufficiently fast.

\begin{rem}
 Our definition of efficiently generated measurements encapsulates non-uniform quantum poly time generated POVMs, which have found applications in cryptography due to their ability to model an adversary with quantum advice~\cite{AD13}. Moreover, it closely mirrors, e.g., the circuit complexity definition used in~\cite{avidan2025quantum,avidan2025fully} to define computational entropies that are useful for cryptography. As such, any divergences we define in 
Section~\ref{Sec: CompQuantDiv} via $\{\Meff_n\}_{n\in\mathbb{N}}$ may similarly be relevant for cryptographic applications.
\end{rem}

\subsection{The Proper Cone of Efficient Measurements}
\label{sec:proper.cone}
Given any family $\{\mathbf{E}^{\mathrm{eff}}_n\}_{n\in\mathbb{N}}$ as defined above, for each value $n\in\mathbb{N}$ one can naturally define the smallest convex set that contains all corresponding effect operators. The same also holds true for the corresponding family $\{\Meff_n\}_{n\in\mathbb{N}}$ of binary POVMs.

\begin{definition}[Polynomially Generated Convex Sets of Effect Operators and Binary POVMs]
Let $\{\mathbf{E}^{\mathrm{eff}}_n\}_{n\in \mathbb{N}}$ denote a (informationally complete) polynomially generated set of effect operators. The corresponding \textit{(informationally complete) polynomially generated convex set of effect operators} is given by the family 
\begin{align}
\left\{\overline{\mathbf{E}}^{\mathrm{eff}}_n \right\}_{n\in\NN}, \quad \text{where} \quad \overline{\mathbf{E}}^{\mathrm{eff}}_n \coloneqq \mathrm{conv}(\mathbf{E}^{\mathrm{eff}}_n)
= \left\{\sum_{i=1}^{k} \lambda_i E^i_n \,\Bigg|\, k \in \mathbb{N},\, \lambda_i \geq 0,\, \sum_{i=1}^k \lambda_i = 1,\, E^i_n \in \mathbf{E}^{\mathrm{eff}}_n \,  \right\} \; .
\end{align}
Similarly, for the corresponding polynomially generated set of binary POVMs $\{\Meff_n\}_{n\in \mathbb{N}}$, we define the \textit{(informationally complete) polynomially generated convex set of binary POVMs} as
\begin{align}
   \{\overlineMeff_{n}\}_{n\in\mathbb{N}}, \quad \text{where} \quad  \overlineMeff_{n} \coloneqq \left\{(E, \id-E) \Big|E\in \overline{\mathbf{E}}^{\mathrm{eff}}_n\right\} \; .
    \end{align}
\end{definition}
\begin{rem}
Note that the convex sets $\overline{\mathbf{E}}^{\mathrm{eff}}_n$ and $\overlineMeff_{n}$ are closed $\forall n\in\mathbb{N}$. This is due to the fact that both $\mathbf{E}^{\mathrm{eff}}_n$ and $\Meff_n$ are finite sets in a finite-dimensional vector space. As such, their convex hulls are simplexes and therefore closed.
\end{rem}

One key property of these sets is that every measurement in $\overlineMeff_{n}$
can be efficiently approximated. This ensures that one can also view $\overlineMeff_{n}$ as a set of efficiently approximable measurements. 
To prove this property, we require the following proposition. The proof is partly based on the techniques used in~\cite{Ba15}.

\begin{prop}\label{prop:approxcat}
For every effect operator $E\in\overline{\mathbf{E}}^\text{eff}_n$, there exists a subset of $k=\frac{5 \left(n \ln{\textup{dim}(\HH)}+1\right)}{\varepsilon^2}$ effect operators $\{E_{j_1},\dots,E_{j_k}\}\subset \mathbf{E}^{\mathrm{eff}}_n$  such that
\begin{align}
      \Bigl\|E - \frac{1}{k}\sum_{i=1}^k\,E_{j_i}\Bigr\|_{\infty} \leq \varepsilon \;.
\end{align}
\end{prop}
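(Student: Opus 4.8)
The plan is to prove this by an empirical-averaging (random sampling) argument combined with a matrix concentration inequality, which is the standard way to approximate a convex combination by a uniform average over a small multiset of extreme points. First I would write $E = \sum_{i=1}^m \lambda_i E^i_n$ as a convex combination of elements $E^i_n \in \mathbf{E}^{\mathrm{eff}}_n$ with $\lambda_i \ge 0$, $\sum_i \lambda_i = 1$; this is possible by the definition of $\overline{\mathbf{E}}^{\mathrm{eff}}_n = \mathrm{conv}(\mathbf{E}^{\mathrm{eff}}_n)$. Then I would introduce i.i.d.\ random variables $X_1,\dots,X_k$, each taking the value $E^i_n$ with probability $\lambda_i$, so that $\mathbb{E}[X_i] = E$, and consider the empirical average $\bar{X} = \frac{1}{k}\sum_{i=1}^k X_i$. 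The goal is to show that with positive probability $\|\bar{X} - E\|_\infty \le \varepsilon$, which then certifies the existence of the desired multiset $\{E_{j_1},\dots,E_{j_k}\} \subset \mathbf{E}^{\mathrm{eff}}_n$.

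The key step is the concentration bound. Since each $X_i$ is a Hermitian operator on $\HH^{\otimes n}$ satisfying $0 \le X_i \le \id$, the centered summands $Y_i = X_i - E$ are Hermitian with $\|Y_i\|_\infty \le 1$ and $\mathbb{E}[Y_i] = 0$. I would apply the matrix Hoeffding / matrix Bernstein inequality (this is the technique of~\cite{Ba15}): for a sum of $k$ independent, centered, Hermitian matrices of operator norm at most $1$ on a space of dimension $d = \dim(\HH)^n$, one gets a tail bound of the form
\begin{align}
    \probP\!\left[\Bigl\|\tfrac{1}{k}\sum_{i=1}^k Y_i\Bigr\|_\infty \ge \varepsilon\right] \le 2d\,\exp\!\left(-c\,k\,\varepsilon^2\right)
\end{align}
for a suitable absolute constant $c$ (with $c \ge 1/2$ achievable via the standard operator-Hoeffding bound, and the factor $2$ absorbable). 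Plugging in $d = \dim(\HH)^n$, so $\log d = n\ln\dim(\HH)$, and choosing $k = \frac{5(n\ln\dim(\HH) + 1)}{\varepsilon^2}$, one checks that the right-hand side is strictly less than $1$: indeed $2\dim(\HH)^n e^{-c k \varepsilon^2} = 2\exp(n\ln\dim(\HH) - ck\varepsilon^2)$, and with $c$ on the order of $1$ and the slack from the ``$+1$'' and the constant $5$, the exponent is negative enough to beat the factor $2$. Hence the bad event has probability $<1$, so there exists a realization of $(X_1,\dots,X_k)$ — i.e.\ a choice of indices $j_1,\dots,j_k$ — for which $\|E - \frac{1}{k}\sum_{i=1}^k E_{j_i}\|_\infty \le \varepsilon$.

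The main obstacle, and the part requiring care, is getting the constants to line up: one must select the right matrix concentration inequality (operator Hoeffding versus Bernstein) and verify that the chosen constant $5$ and the particular form $n\ln\dim(\HH)+1$ indeed make $2\dim(\HH)^n\exp(-ck\varepsilon^2) < 1$ for all $n \ge 1$ and all $\varepsilon \in (0,1]$ — in particular handling the edge case where $\varepsilon$ is close to $1$ and $n=1$, where the ``$+1$'' in the numerator is doing real work. A secondary point worth a sentence is that $k$ as written need not be an integer, so one should tacitly replace it by $\lceil k \rceil$ (which only helps the bound). Everything else — linearity of expectation giving $\mathbb{E}[\bar X] = E$, the probabilistic-method conclusion, and the fact that each $E_{j_i}$ lies in $\mathbf{E}^{\mathrm{eff}}_n$ — is routine.
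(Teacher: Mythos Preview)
Your proposal is correct and matches the paper's proof: the paper writes $E$ as a convex combination, samples i.i.d.\ from that distribution, applies a matrix concentration inequality to the centered summands, and concludes by the probabilistic method. On the constant issue you flag, the paper uses matrix Bernstein (specifically \cite[Cor.~5.2]{MJC+14}), bounding the exponent by $-k\varepsilon^{2}/5$ for $0\le\varepsilon\le 1$ so that each one-sided tail is at most $e^{-1}<\tfrac12$ and a union bound gives positive probability; note that operator Hoeffding with the usual constant $1/8$ would \emph{not} be enough to match the stated $k$, so Bernstein is the right choice.
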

\begin{proof}

Let $m=|\mathbf{E}^{\mathrm{eff}}_n|$ denote the cardinality of $|\mathbf{E}^{\mathrm{eff}}_n|$ and let $E_1,\dots,E_m$ denote the elements of $\mathbf{E}^{\mathrm{eff}}_n$. By construction, there exist weights $\alpha_1,\dots,\alpha_m\geq0$ such that 
\begin{align*}
     \sum_{i=1}^{m} \alpha_i E_i = E \quad \text{and} \quad 
     \sum_{i=1}^{m} \alpha_i = 1 \; .
\end{align*}
Let $X$ be a self-adjoint random matrix valued variable which with probability $\alpha_i$ satisfies 
\begin{align*}
    X = E - E_i \; 
\end{align*}
Notably, this random variable satisfies
\begin{align*}
    \mathbb{E}[X] =0 \quad \text{and} \quad 
    \big\|\mathbb{E}[X^{\,2}]\big\|_\infty \leq 1 \; . 
\end{align*}
Consequently, for $X_1,X_2,\dots,X_k$, $k$ IID copies of $X$, the variance satisfies
\begin{align*}
\sigma^{2} \coloneqq \Big\|\sum_{i=1}^{k}\mathbb{E}[X_i^{\,2}]\Big\|_\infty\le k \; .
\end{align*}

Applying the matrix Bernstein inequality~\cite[Cor.~5.2]{MJC+14} for any $0\leq\varepsilon\leq1$, it holds that
\begin{align*}
\Pr\!\left[
    \lambda_{\max}\left(\frac{1}{k}\sum_{j=1}^{k}X_{j}\right)\geq \varepsilon
  \right] &=
      \Pr\!\left[
    \lambda_{\max}\left( \sum_{j=1}^{k}X_{j}\right)\geq k\eps
  \right]
  \\
  &\leq \left(\text{dim}(\HH)^n \right)
  \exp\!\Bigl(
    -\frac{k^2\varepsilon^{2}}{3k+2 k \varepsilon}
  \Bigr) \\
  &\leq \left(\text{dim}(\HH)^n \right)
  \exp\!\Bigl(
    -\frac{k^2\varepsilon^{2}}{5k}
  \Bigr) \\
  &= \left(\text{dim}(\HH)^n \right)
  \exp\!\Bigl(
    -\frac{k\varepsilon^{2}}{5}
  \Bigr) \\
   &= 
  \exp\!\Bigl(
   n \ln{\text{dim}(\HH)} -\frac{k\varepsilon^{2}}{5}
  \Bigr) \; ,
\end{align*}
where $\lambda_{\max}\left(\frac{1}{k}\sum_{j=1}^{k}X_{j}\right)$ denotes the largest eigenvalue of $\frac{1}{k}\sum_{j=1}^{k}X_{j}$.
Setting $k = \frac{5 \left(n \ln{\text{dim}(\HH)}+1\right)}{\varepsilon^2} $ hence ensures that 
\begin{align*}
    \Pr\!\left[
    \lambda_{\max}\left(\frac{1}{k}\sum_{j=1}^{k}X_{j}\right)\geq \varepsilon
  \right] \leq e^{-1}< \frac{1}{2} \; .
\end{align*}
By the same argument for $-X_i$ it holds that, for the same value $k$, we have
\begin{align*}
    \Pr\!\left[
    \lambda_{\min}\left(\frac{1}{k}\sum_{j=1}^{k}X_{j}\right)\leq -\varepsilon
  \right] \leq e^{-1}< \frac{1}{2} \; ,
\end{align*}
where $\lambda_{\min}\left(\frac{1}{k}\sum_{j=1}^{k}X_{j}\right)$ denotes the smallest eigenvalue of $\frac{1}{k}\sum_{j=1}^{k}X_{j}$.
Using a union bound argument, it follows from these two inequalities that
\begin{align*}
    \Pr\!\left[
    \lambda_{\min}\left(\frac{1}{k}\sum_{j=1}^{k}X_{j}\right)\geq -\varepsilon \, \land \, \lambda_{\max}\left(\frac{1}{k}\sum_{j=1}^{k}X_{j}\right)\leq \varepsilon
  \right] \geq 1- \frac{2}{e} > 0 \; .
\end{align*}
As such, there must exist values for $X_{1},\dots, X_{k}$ such that the maximal eigenvalue of $|\frac{1}{k}\sum_{j=1}^{k}X_{j}|$ is less than or equal to $\varepsilon$. For these specific values of $X_i$, we know that there exists $\{E_{j_i}\}_{i=1}^k\subset \mathbf{E}^{\mathrm{eff}}_n$ such that
\begin{align*}
    X_i = E - E_{j_i} \; .
\end{align*}
By construction, it must hold for these effect operators that 
\begin{align*}
      \Bigl\|E - \frac{1}{k}\sum_{i=1}^k\,E_{j_i}\Bigr\|_{\infty} \leq \varepsilon \; ,
\end{align*}
thus concluding the proof.
\end{proof}
We now discuss two consequences that follow from Proposition~\ref{prop:approxcat}. First, even if one chooses $\varepsilon=\frac{1}{\poly (n)}$, i.e. the allowed error to scale inverse polynomially in $n$, one can approximate any $E\in\overline{\mathbf{E}}^\text{eff}_n$ using an (in general non-unique and non-degenerate) polynomial subset of effect operators $\{E_{j_i}\}_{i=1}^k\subset \mathbf{E}^{\mathrm{eff}}_n$.
Secondly, our bound on the size of this set holds true for every effect operator $E\in\overline{\mathbf{E}}^\text{eff}_n$. This allows one to efficiently approximate every binary POVM $M\in \overlineMeff_{n}$ 
to inverse‑polynomial accuracy. 
Note that this in particular applies to $\alpha E$ for some $E\in \Eeff$ and $0\leq\alpha\leq 1$, as $0 \in\mathbf{E}_n^\mathrm{eff}$.
This consequence is formally stated in Corollary~\ref{cor: effimpconmix}.
\begin{cor} \label{cor: effimpconmix}
\label{col:multiplexed} 
Every  ${M\in \overlineMeff_{n}}$ can be implemented in poly-time up to inverse‑polynomial accuracy. 
\end{cor}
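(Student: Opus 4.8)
The corollary is essentially a repackaging of \Cref{prop:approxcat}, so the plan is mainly to pin down what ``implemented in poly-time up to inverse-polynomial accuracy'' means and then assemble the pieces. First I would fix $M=(E,\id-E)\in\overlineMeff_n$, so that $E\in\overline{\mathbf{E}}^{\mathrm{eff}}_n$, and fix an arbitrary target accuracy $\eps = 1/q(n)$ for some polynomial $q$. Applying \Cref{prop:approxcat} with this $\eps$ produces effect operators $E_{j_1},\dots,E_{j_k}\in\mathbf{E}^{\mathrm{eff}}_n$ with $k = 5\bigl(n\ln\textup{dim}(\HH)+1\bigr)\,q(n)^2 = \poly(n)$ and $\norm{E-\bar E}_\infty\le\eps$, where $\bar E\coloneqq\frac1k\sum_{i=1}^k E_{j_i}$.

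Next I would exhibit an efficient circuit realizing the binary POVM $(\bar E,\id-\bar E)$. Since $\bar E$ is a uniform convex combination, it can be implemented by a \emph{multiplexed} construction: introduce $\lceil\log k\rceil$ control qubits in a uniform superposition and, controlled on the register holding $i$, run the $\le p(n)$-gate circuit of \Cref{rem:POVMcomplexity} (together with its ancillas) that implements $E_{j_i}$; then perform the single-qubit computational measurement and relabel outcomes as needed. Controlling a gate on a fixed value of a $\lceil\log k\rceil$-bit register costs an additional $O(\log k)$ elementary gates per controlled block, so the total gate count is $O\bigl(k(p(n)+\log k)\bigr)=\poly(n)$; hence this is still an efficient circuit. (Equivalently, one may simply sample $i\in[k]$ uniformly and run the circuit for $E_{j_i}$, which realizes exactly the POVM with effect operator $\bar E$.) The indices $j_i$ are obtained non-constructively in \Cref{prop:approxcat}, but this is harmless in the non-uniform circuit model adopted here (cf.~\Cref{rem:POVMcomplexity}), since the circuit may depend arbitrarily on $n$ and on the fixed operator $E$.

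Finally I would verify that this implementation approximates $M$ in the relevant operational sense: for every state $\rho$,
\begin{align*}
    \bigl|\Tr[(E-\bar E)\rho]\bigr|\;\le\;\norm{E-\bar E}_\infty\,\norm{\rho}_1\;=\;\norm{E-\bar E}_\infty\;\le\;\eps ,
\end{align*}
so the outcome distributions $\MM_M(\rho)$ and $\MM_{(\bar E,\id-\bar E)}(\rho)$ are within total variation distance $\eps$, uniformly in $\rho$. Since $\eps = 1/\poly(n)$ was arbitrary, this proves the corollary. The only real work is bookkeeping — confirming that sampling/multiplexing is cheap in the stated model and that the replication factor $k$ stays polynomial when $\eps$ is inverse-polynomial — both of which follow immediately from \Cref{prop:approxcat} and the conventions of \Cref{rem:POVMcomplexity}; there is no substantive obstacle beyond this.
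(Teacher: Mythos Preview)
Your proposal is correct and follows essentially the same approach as the paper: invoke \Cref{prop:approxcat} with an inverse-polynomial $\eps$ to obtain $k=\poly(n)$ effect operators, then realize the uniform mixture $\bar E$ by a multiplexed construction (control register in uniform superposition, controlled $U_i$'s, trace out, measure), and observe the total gate count is polynomial. Your treatment is in fact slightly more thorough than the paper's, since you explicitly verify the approximation guarantee on outcome distributions via $\lvert\Tr[(E-\bar E)\rho]\rvert\le\norm{E-\bar E}_\infty$ and note the non-uniformity caveat about the indices $j_i$.
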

\begin{proof}
It suffices to show that any $E\in\overline{\mathbf{E}}_n^\text{eff}$ is polynomially preparable up to inverse polynomial error. For any $\widetilde{E} = \frac{1}{k} \sum_{i=1}^k E_{j_i}$, with $E_{j_i} \in \mathbf{E}^{\mathrm{eff}}_n$ and $k = \poly (n)$, we can define a circuit $D_n$:
\begin{enumerate}
    \item Given a control register, prepare a control state $|\lambda'\rangle_C = \sum_{i=1}^k \frac{1}{k} |i\rangle_C$, which requires $\OO(\log k)$ gates. 
    \item Apply the multiplexed unitary $\sum_{i=1}^k |i\rangle\langle i|_C \otimes U_i$, where each $U_i$ corresponds to the unitary operator that implements $E_{j_i}$. This circuit can be implemented by $\OO\left(\text{poly}(n) \cdot k \cdot \log(k)\right)$ gates and $\OO(\log(k))$ ancillas~\cite{SBM06,CBV+24}. 
    \item Trace out the control registers, along with all other registers that will not be measured. 
    \item Measure the remaining qubit register.
\end{enumerate}
Since the unitary operators corresponding to the effects are generated in poly-time by assumption and $k = \poly(n)$ by \Cref{prop:approxcat}, any $E\in \overline{\mathbf{E}}^{\mathrm{eff}}_n$ can be approximated up to inverse polynomial precision in polynomial time.
\end{proof}
While Proposition~\ref{prop:approxcat} 
provides an upper bound on the Schatten $\infty$-norm, also known as the operator norm, this can straightforwardly be translated into an upper bound on the difference between the corresponding measurement channels in the diamond norm, see e.g.~\cite[Definition~$6.18$]{khatri2020principles}. As discussed in Section~\ref{Sec: InfoTheorConc}, one should view any binary POVM $M=(E_1,E_2)$ as a channel
\begin{align}
    \MM_M : \BB(\HH^{\otimes n}) &\to \BB(\CC^2), \qquad
    \rho \mapsto \sum_{i=0}^{1} \Tr \left[ E_{i+1} \rho\right] \ketbra{i}{i} \; ,
\end{align}
where $\CC^2$ is a two-dimensional Hilbert space.
\begin{cor}\label{cor:diamondnormbound}
For every binary POVM $M\in\overline{\mathbf{M}}^\text{eff}_n$, there exists a subset of $k=\frac{5 \left(n \ln{\text{dim}(\HH)}+1\right)}{\varepsilon^2}$ effect operators $\{E_{j_i}\}_{i=1}^k\subset \mathbf{E}^{\mathrm{eff}}_n$ such that the measurement maps $\mathcal{M}$ corresponding to $M$ and $\mathcal{M}^\prime$ to and $M^\prime=$ $(\frac{1} {k}\sum_{i=1}^k\,E_{j_i},\id-\frac{1} {k}\sum_{i=1}^k\,E_{j_i})$ respectively,
satisfy
\begin{align}
   \big\|\MM - \MM^\prime\big\|_{\diamond} \leq2\varepsilon \; .
\end{align}
\end{cor}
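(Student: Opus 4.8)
The plan is to treat Corollary~\ref{cor:diamondnormbound} as essentially a restatement of Proposition~\ref{prop:approxcat} combined with one standard norm comparison. First I would invoke Proposition~\ref{prop:approxcat} to obtain the subset $\{E_{j_1},\dots,E_{j_k}\}\subset\mathbf{E}^{\mathrm{eff}}_n$ with $k=\tfrac{5(n\ln\dim(\HH)+1)}{\varepsilon^2}$ and $\|E-\overline{E}\|_\infty\le\varepsilon$, where $\overline{E}\coloneqq\tfrac{1}{k}\sum_{i=1}^{k} E_{j_i}$ is exactly the ``$0$''-effect of $M^\prime$. From there the only task is to upgrade this operator-norm bound on the effects to a diamond-norm bound on the associated measurement channels $\MM,\MM^\prime$.

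The key step is to write out the difference channel explicitly. Setting $\Delta\coloneqq\MM-\MM^\prime$, one checks that for every $\rho\in\BB(\HH^{\otimes n})$,
\[
\Delta(\rho)=\Tr[(E-\overline{E})\rho]\,\bigl(\ketbra{0}{0}-\ketbra{1}{1}\bigr),
\]
since the two ``$0$''-effects are $E,\overline{E}$, the two ``$1$''-effects are $\id-E,\id-\overline{E}$, and the $\ketbra{1}{1}$-contributions differ from the $\ketbra{0}{0}$-ones only by an overall sign. To handle the diamond norm I would then tensor with an ancilla register of arbitrary dimension $d$: applying $\Delta\otimes\id_d$ to a state $\rho$ on $\HH^{\otimes n}\otimes\CC^d$ gives $(\Delta\otimes\id_d)(\rho)=\bigl(\ketbra{0}{0}-\ketbra{1}{1}\bigr)\otimes\sigma$ with $\sigma\coloneqq\ptr{\HH^{\otimes n}}{\bigl((E-\overline{E})\otimes\id_d\bigr)\rho}$, so that $\|(\Delta\otimes\id_d)(\rho)\|_1=2\|\sigma\|_1$. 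Using trace-norm/operator-norm duality together with the adjoint of the partial trace, $\|\sigma\|_1=\sup_{\|B\|_\infty\le1}\bigl|\Tr[((E-\overline{E})\otimes B)\rho]\bigr|\le\|E-\overline{E}\|_\infty\le\varepsilon$, uniformly in $d$ and in the state $\rho$. Taking the supremum over $d$ and $\rho$ gives $\|\Delta\|_\diamond\le2\varepsilon$. (Alternatively, one may invoke the fact that a channel with classical output has diamond norm equal to its induced $1\to1$ trace norm, cf.~\cite[Definition~$6.18$]{khatri2020principles}, and simply evaluate $\sup_{\|\rho\|_1\le1}\|\Delta(\rho)\|_1=2\sup_{\|\rho\|_1\le1}|\Tr[(E-\overline{E})\rho]|=2\|E-\overline{E}\|_\infty$.)

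I do not expect a genuine obstacle here: all the nontrivial content --- the matrix-Bernstein concentration argument producing a polynomial-size subconvex combination --- is already contained in Proposition~\ref{prop:approxcat}, and the rest is routine. The one point that warrants care is the stabilization built into the diamond norm, i.e.\ verifying that the estimate is independent of the ancilla dimension $d$; this is exactly what the partial-trace bound above secures, which is why I would present that argument explicitly rather than rely on the ``classical output'' shortcut.
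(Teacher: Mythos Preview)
Your proposal is correct and follows essentially the same approach as the paper: invoke Proposition~\ref{prop:approxcat} for the operator-norm bound on the effects, write out the difference channel with an ancilla, use the block-diagonal (classical-output) structure, and bound the partial-trace term via duality/H\"older. The only cosmetic difference is that you exploit the symmetry $(\id-E)-(\id-\overline{E})=-(E-\overline{E})$ up front to collapse the sum to $2\|\sigma\|_1$, whereas the paper keeps the two summands separate and bounds each by $\varepsilon$; the arguments are otherwise identical.
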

The proof of this is standard and can be found in \Cref{app:diamondnormbound}. With these tools, we are now in a position to define the cone related to our sets of efficient effect operators.
\begin{definition}[Cones of Polynomially Generated Effect Operators]
    Given an (informationally complete) polynomially generated convex set of effect operators $\{\overline{\mathbf{E}}^{\mathrm{eff}}_n\}_{ n \in \NN}$, the 
\term{family of cones of (informationally complete) polynomially generated effect operators} is given by 
    \begin{align}
    \left\{\EffCone\right\}_{n\in \mathbb{N}}, \quad \text{ where } \quad  \EffCone = \bigcup_{\lambda \geq 0} \lambda\overline{\mathbf{E}}^{\mathrm{eff}}_n = {\textup{cone}\left(\mathbf{E}_n^\mathrm{eff}\right)} \; .
\end{align}    

\end{definition}

\begin{rem}
We note that this cone is the computational analog to the cones used in, e.g.,~\cite{RSB24,RKW11,GC24}. Apart from constraining the measurements to be efficiently generated, the main difference is that we consider binary measurements. Their work considers general $m$-output POVMs. In principle, we could define computational analogs to their sets of POVMs provided in~\cite[Eq.~(4.1)]{RSB24} or ~\cite[Eq.~(76)]{RKW11}. In their setting, ${E_1,\dots, E_m \in \mathbf{E}^{\mathrm{eff}}_n}$ if there exists an efficiently implementable measurement given by $M=(E_1,\dots,E_m)$. Elements in $\overline{\mathbf{E}}^{\mathrm{eff}}_n$ would still correspond to convex mixtures of elements in $\mathbf{E}^{\mathrm{eff}}_n$. However, as there is no more $1$-to-$1$ correspondence between the set of efficient effect operators $\mathbf{E}^{\mathrm{eff}}_n$ and the set of efficient measurements $\mathbf{M}^{\mathrm{eff}}_n$, \Cref{prop:approxcat} would no longer suffice to ensure that every measurement in $\overline{\mathbf{M}}^{\mathrm{eff}}_n$ is efficiently preparable. This issue becomes particularly exacerbated when $m$ is exponential in $n$, which is, e.g., the case when measuring $n$ qubits in the computational basis. Finding techniques to generalize our theory to the case for $m$-output measurements while adequately capturing the computational constraints thus remains an interesting open question.
\end{rem}

\begin{lem}
    The family of cones of informationally complete polynomially generated effect operators  $\{\EffCone\}_{ n \in \NN}$ is a family of proper cones.
\end{lem}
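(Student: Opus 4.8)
The plan is to verify the four defining properties of a proper cone (convex, closed, pointed, solid) for each $\EffCone = \cone(\mathbf{E}_n^{\mathrm{eff}})$. Convexity is immediate: $\EffCone = \bigcup_{\lambda\geq 0}\lambda\,\overline{\mathbf{E}}_n^{\mathrm{eff}}$ is a union of scalings of a convex set containing $0$, hence convex (this is a standard fact about $\cone(K)$ for $K$ convex, which is essentially built into the definition). Closedness follows from finiteness: $\mathbf{E}_n^{\mathrm{eff}}$ is a finite set, so $\overline{\mathbf{E}}_n^{\mathrm{eff}}$ is a polytope (a simplex, as already noted in the remark after the definition of the convex sets), and the conic hull of a polytope that does not contain $0$ in a certain degenerate way — more precisely, the conic hull of a compact set not containing the origin in its affine span issues — is closed. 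The cleanest route: since $0 \notin \overline{\mathbf{E}}_n^{\mathrm{eff}}$ is false (we have $0 \in \mathbf{E}_n^{\mathrm{eff}}$), I instead use that $\overline{\mathbf{E}}_n^{\mathrm{eff}}$ is the convex hull of finitely many points all lying in the slab $\{0 \leq E \leq \id\}$, and appeal to the fact that the conic hull of a finite point set (a finitely generated cone) is always closed — this is a classical result (e.g. Weyl–Minkowski), so $\EffCone$ is closed.

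For pointedness, I must show $\EffCone \cap (-\EffCone) = \{0\}$. Every nonzero element of $\EffCone$ has the form $\lambda E$ with $\lambda > 0$ and $E \in \overline{\mathbf{E}}_n^{\mathrm{eff}}$, hence $E \geq 0$ and $E \neq 0$, so $\Tr[E] > 0$ and thus $\Tr[\lambda E] > 0$. If such an element also lay in $-\EffCone$, it would equal $-\mu E'$ with $\mu \geq 0$, $E' \in \overline{\mathbf{E}}_n^{\mathrm{eff}}$, forcing $\Tr[\lambda E] = -\mu\Tr[E'] \leq 0$, a contradiction. Hence $\EffCone \cap (-\EffCone) = \{0\}$. (Equivalently: $\EffCone \subseteq \Pos(\HH^{\otimes n})$, which is pointed, so $\EffCone$ is pointed.)

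For solidity I use the informational completeness hypothesis: $\Span(\mathbf{E}_n^{\mathrm{eff}}) = \mathcal{B}(\HH^{\otimes n})$, equivalently (restricting to Hermitian operators, which is where these effect operators live) $\Span(\mathbf{E}_n^{\mathrm{eff}}) = \mathcal{B}^\dagger(\HH^{\otimes n}) = \VV$. Since $\EffCone \supseteq \mathbf{E}_n^{\mathrm{eff}}$ and $\Span(\EffCone) \supseteq \Span(\mathbf{E}_n^{\mathrm{eff}}) = \VV$, the cone is solid; in finite dimensions this gives $\mathrm{int}(\EffCone) \neq \emptyset$, as recalled in the preliminaries. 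Concretely, a point in the interior can be exhibited as a strictly positive convex combination of a spanning subset of the $E_i$'s. I expect the main obstacle to be the closedness argument — one must be slightly careful that "conic hull of finitely many vectors is closed" is the right tool and that it applies verbatim here (it does, since $\mathbf{E}_n^{\mathrm{eff}}$ is genuinely finite by \Cref{def:PGSEO}); everything else is a short, essentially bookkeeping verification. I would also remark that the argument only uses finiteness and informational completeness, so the "informationally complete" qualifier in the statement is exactly what is needed for solidity, while the other three properties hold for any polynomially generated family.
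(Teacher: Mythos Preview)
Your proof is correct and follows essentially the same structure as the paper's: verify convexity, closedness, pointedness, and solidity in turn, using positive semidefiniteness for pointedness and informational completeness for solidity. Your closedness argument via finite generation (Weyl--Minkowski) is in fact more careful than the paper's, which simply asserts that closedness of the generating set $\overline{\mathbf{E}}^{\mathrm{eff}}_n$ suffices.
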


\begin{proof}
The cones are convex and closed because their generating sets $\overline{\mathbf{E}}^{\mathrm{eff}}_n$ are convex and closed. To see that the cones must be pointed, first note that every element $E \in \EffCone$ is positive semidefinite. As such, if $F\in \EffCone$ and $-F\in\EffCone$, it follows that $F=0$. Thus, every cone $\EffCone$ is pointed. Lastly, solidness, i.e. having a non-empty interior, follows from the information completeness condition imposed on $\overline{\mathbf{E}}^{\mathrm{eff}}_n$. To see this one can show, using a standard counter argument, e.g.,~\cite[Section 2.1.3]{BV04}, that closed convex sets with empty interior cannot span the whole space.
\end{proof}

To each $\EffCone$, one can generate its dual cone. This corresponding family of dual cones is defined below.
\begin{definition}[Family of Dual Cones]\label{def:dual_cone}
The family of \textit{dual cones} of  $\{\EffCone\}_{ n \in \NN}$ is denoted by
\begin{align}
    \{\DualEffCone\}_{n \in \NN} \coloneqq  \{\Co^{\mathbf{E}_\mathrm{eff} \,*}_n\}_{n \in \NN} = \left\{ Y \in \BB(\HH^{\otimes n}) \,\Big|\, \Tr(Y^* X) \geq 0 \ \forall X \in \EffCone \right\}_{n \in \NN}.
\end{align}
\end{definition}

\begin{rem} 
Since we work in finite dimensional vector spaces the dual cone is also the pre-dual cone, so its elements should be thought of as all `states' which are positive w.r.t. the partial order $\leq_{C_n^{S_\text{eff}}}$ induced by measurements in $\Eeff_n$. It is easy to see that the family of dual cones $\{\Co^{\mathcal{S_{\mathrm{eff}}}}_n\}_{n \in \NN}$ contains the family of cones of all positive semidefinite operators, i.e., $\Pos(\HH^{\otimes n}) \subseteq \Co^{\mathcal{S_{\mathrm{eff}}}}_n$, so in particular contains all quantum states.
\end{rem}
One interesting question that arises is whether all valid (binary) POVMs with effect operators in $\EffCone$ can be efficiently approximated. Since we can efficiently approximate convex mixtures of efficiently implementable effect operators, including $0,\id$, it follows that for any $0\leq\alpha\leq1$ one can approximate $\alpha E$ for any $E\in\mathbf{E}^\textup{eff}_n$, as mentioned above. 
The question now is, whether this also holds for $\alpha E$ when $\alpha>1$, if $\alpha E$ is part of some POVM, i.e. $E\in\overlineEeff_n$ and $\alpha E\leq\id$. 
We will now argue that for certain cases one should not expect this to be true. 

As is discussed in, e.g.,~\cite{Gold08}, one cannot efficiently implement all functions of the form
\begin{align*}
    f: \{ 0,1\}^n \to \{0,1\} \; .
\end{align*}
As such, one binary measurement that would generally be inefficient is, given a (uniformly) random function $f$, to first measure $n$ qubits in the computational basis and then map the output $x \in \{ 0,1\}^n$ to $f(x) \in \{0,1\}$.
However, as we discussed in \Cref{rem:postprocessing}, there exists a setting, i.e. a polynomial and approximately universal gate-set, in which the effect operators 
\begin{align*}
    \ketbra{x}{x}  \coloneqq \ketbra{1^{x_1}\dots 1^{x_n}}{1^{x_1}\dots 1^{x_n}} \in \mathbf{E}_n^\text{eff}
\end{align*}
are efficiently implementable. As such, the effect operator doing just that is
\begin{align*}
    E_f = \sum_{x : f(x)=0} \ketbra{x}{x} \in \mathcal{C}^{\textbf{E}_\text{eff}}_n
\end{align*}
and lies in the cone $\mathcal{C}^{\textbf{E}_\text{eff}}_n$, but would correspond to a binary measurement which is not efficient. 
Let us consider such a computationally inefficient function $f$ with $ |\{x: f(x)=0\}|=\exp\left(O(n)\right) $.
We highlight that the corresponding effect operators $E_f \notin \overline{\mathbf{E}}^{\mathrm{eff}}_n$ as they are not convex mixtures of efficient effect operators. However, there exists an $E^\prime \in \overline{\mathbf{E}}^{\mathrm{eff}}_n$ such that $E_f=\alpha E^\prime$ and ${\alpha = |\{x: f(x)=0\}|}$, by construction of $\overline{\mathbf{E}}^{\mathrm{eff}}_n$. In this case, since the average size of $|\{x: f(x)=0\}|$ scales exponentially in $n$, the eigenvalues of $E^\prime$ scale inverse exponentially and it is due to this scaling that $E^\prime$ becomes efficiently implementable.

\begin{rem}[Inefficient measurements vs. $\left\{\EffCone\right\}_{n\in \mathbb{N}}$]
Our efficient cone of effect operators is non-uniform, in the sense that its elements are prepared by families of polynomial-size circuits without a uniform generation requirement. Many natural effects and measurements nevertheless lie outside this cone due to computational hardness. Next we give explicit non-membership examples. 

Since $\Eeff_n$ is finite for each $n$, the cone $\EffCone$ is generated by finitely many rays. A rank-one projector $P_\psi\coloneqq\ket{\psi}\!\bra{\psi}$ can belong to $\EffCone$ only if one of these generators is proportional to $P_\psi$. Hence, for all but finitely many $\psi$, $P_\psi\notin \EffCone$. In addition, by a standard counting argument, with overwhelming probability a Haar–random rank-one projector $P_\psi$ requires exponential circuit size to implement~\cite{NC00,Gold08}. Moreover,~\cite{JW23} construct explicit families of states with exponential circuit lower bounds; thus the corresponding rank-one projectors are, eventually, not in $\Eeff_n$ (even up to inverse-polynomial approximation).

For QMA-complete local Hamiltonians, if one could (given the instance description) efficiently compile a polynomial-size circuit that implements a projector onto the ground space to constant precision, this would yield a BQP algorithm for a QMA-complete promise problem, implying $\mathrm{QMA}\subseteq\mathrm{BQP}$, which is considered unlikely~\cite{KKR06}. Finally, under standard pseudorandomness assumptions (against non-uniform quantum distinguishers), any measurement that separates pseudorandom states or unitaries from Haar-random distributions with non-negligible bias cannot be efficient—otherwise one could break the initial assumptions~\cite{JLS18,MPMY24,ABF+23}.
\end{rem}

\subsection{Computational Trace Norm and Trace Distance} \label{sec:comp.trac.dist}

Having constructed our bases and cones of efficiently implementable measurements one straightforwardly via \Cref{def:distinguishabilityNorm} gets corresponding distinguishability norms, which also appear in the computational quantum cryptographic setting, see e.g.~\cite{JLS18,ABF+23,BCQ23}.

Without computational constraints, where every measurement is implementable, the best possible distinguishability advantage between two quantum states $\rho$ and $\sigma$ is related to the trace distance $\| \rho - \sigma\|_1$. However, if the measurements are instead constrained to be implementable with poly-size quantum operations, the information-theoretic trace distance loses its operational meaning. In such cases, one instead considers the \term{computational distance}. This distance can essentially be recovered by a suitable \term{distinguishability norm}, see \Cref{def:distinguishabilityNorm}, which in this context we call a \term{computational trace norm}.

Let us recall that the convex hull of efficient effect operators $\{\overline{\mathbf{E}}^{\mathrm{eff}}_n\}_{n \in \NN}$ is a closed, convex, and informationally complete set, which additionally contains the trivial measurement $\id \in \overline{\mathbf{E}}^{\mathrm{eff}}_n$  and satisfies the condition
\begin{align}
    E \in \overline{\mathbf{E}}^{\mathrm{eff}}_n \implies  \id-E\in\overline{\mathbf{E}}^{\mathrm{eff}}_n  \quad \forall E \in \overline{\mathbf{E}}^{\mathrm{eff}}_n
\end{align}
for all $n \in \NN$.
For such sets, it follows from~\cite[Section~V]{RKW11} that 
\begin{align}
\|v\|_{\overlineMeff_n}\coloneqq \sup_{E\in\overline{\mathbf{E}}^{\mathrm{eff}}_n}\langle 2E-\id,v\rangle   \label{eq: normconcomtracdist}
\end{align}
defines a norm on the vector space of Hermitian operators $v \in \BB^{\dagger}(\HH^{\otimes n})$ for any $n \in \NN$. 
This naturally allows one to define a \term{computational trace norm}, where $\rho,\sigma\in \States(\HH^{\otimes n})$, $v= \rho-\sigma$, and $\Tr \left[v\right]=0$. 
\begin{definition}[Computational Trace Norm]
Given any two quantum states ${\rho,\sigma\in \States(\HH^{\otimes n})}$ and a family of informationally complete polynomially generated convex set of binary POVMs $\{\overline{\mathbf{M}}_n^\text{eff}\}_{n\in\mathbb{N}}$,  the \term{computational trace norm} between these states is given by
\begin{align}
\|\rho - \sigma\|_{\overlineMeff_n} \coloneqq 2\sup_{E\in\overline{\mathbf{E}}^{\mathrm{eff}}_n}\Tr \left[ E \left(\rho -\sigma\right)\right] \; . \label{eq: defcomptracenorm}
\end{align}
\end{definition}
For general $\rho,\sigma\in \Pos(\HH^{\otimes n})$, one gets  via Eq.~\eqref{eq: normconcomtracdist}, the more general
\begin{align}
\|\rho - \sigma\|_{\overlineMeff_n} =2\sup_{E\in\overline{\mathbf{E}}^{\mathrm{eff}}_n}\Tr \left[ E \left(\rho -\sigma\right)\right]-\Tr \left[  \rho -\sigma\right] \; .
\end{align}
\begin{rem}
  It was shown in~\cite{MWW09} (and further discussed in~\cite{RKW11}), that norms defined via the theory of proper cones (such as the one in Eq.~\eqref{eq: normconcomtracdist}) naturally satisfy the triangle inequality, positive definiteness and are absolute homogeneous, i.e.
  \begin{itemize}
  \item Triangle Inequality: for all $\rho, \sigma,\tau \in \States (\HH^{\otimes n})$
  \begin{align}
      \|\rho - \sigma\|_{\overlineMeff_n} \leq  \|\rho - \tau\|_{\overlineMeff_n} +  \|\tau -\sigma \|_{\overlineMeff_n} \; .
  \end{align}
  \item  Absolute Homogeneity: for all $\rho, \sigma \in \States (\HH^{\otimes n})$ and $\lambda \in \RR$
  \begin{align}
      \|\lambda\left(\rho - \sigma\right)\|_{\overlineMeff_n} = \left|\lambda\right|  \|\rho - \sigma \|_{\overlineMeff_n} \; .
  \end{align}
  \item Positive Definiteness: for all $\rho, \sigma \in \States (\HH^{\otimes n})$
  \begin{align}
      \|\rho - \sigma\|_{\overlineMeff_n} =0 \implies \rho = \sigma \; . 
  \end{align}
\end{itemize}
\end{rem}
One straightforward property of the computational trace norm is that the optimization appearing on the RHS of Eq.~\eqref{eq: defcomptracenorm} can be restricted to the set of efficient measurements $\mathbf{E}^{\mathrm{eff}}_n$. In other words, the optimal distinguishing measurement lies in $\Meff_{n}$.
\begin{lem} \label{lem: comptrnormEeffred}
    Given any two quantum states $\rho,\sigma\in \States(\HH^{\otimes n})$, the computational trace norm can alternatively be expressed as:
    \begin{align}
         \|\rho - \sigma\|_{\overlineMeff_n} =  \begin{cases}
2\max_{E\in\mathbf{E}^{\mathrm{eff}}_n}\Tr \left[ E \left(\rho -\sigma\right)\right]\; , \\
    \max_{M = (E_1,E_2) \in \Meff_n} \sum_{i=1}^2 \left|\Tr[E_i( \rho - \sigma)] \right| \; ,\\
     \max_{M \in \Meff_n} \|\MM_M(\rho) - \MM_M(\sigma)\|_{1}
\end{cases}
    \end{align}
\end{lem}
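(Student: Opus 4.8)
The plan is to establish the three displayed expressions in turn, each time peeling off one layer of the optimization, starting from the definition $\|\rho-\sigma\|_{\overlineMeff_n}=2\sup_{E\in\overline{\mathbf{E}}^{\mathrm{eff}}_n}\Tr[E(\rho-\sigma)]$ from \eqref{eq: defcomptracenorm}. First I would reduce the convex hull $\overline{\mathbf{E}}^{\mathrm{eff}}_n$ back to $\mathbf{E}^{\mathrm{eff}}_n$: the functional $E\mapsto\Tr[E(\rho-\sigma)]$ is linear and $\overline{\mathbf{E}}^{\mathrm{eff}}_n=\conv(\mathbf{E}^{\mathrm{eff}}_n)$ is the convex hull of a \emph{finite} set, hence a polytope, so the supremum is attained at an extreme point and every extreme point of $\overline{\mathbf{E}}^{\mathrm{eff}}_n$ lies in $\mathbf{E}^{\mathrm{eff}}_n$; this gives $\sup_{E\in\overline{\mathbf{E}}^{\mathrm{eff}}_n}\Tr[E(\rho-\sigma)]=\max_{E\in\mathbf{E}^{\mathrm{eff}}_n}\Tr[E(\rho-\sigma)]$, which is the first expression. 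Using that $\mathbf{E}^{\mathrm{eff}}_n$ is closed under $E\mapsto\id-E$ (\Cref{def:PGSEO}) together with $\Tr[\rho]=\Tr[\sigma]=1$, so that $\Tr[(\id-E)(\rho-\sigma)]=-\Tr[E(\rho-\sigma)]$, one moreover gets $\max_{E\in\mathbf{E}^{\mathrm{eff}}_n}\Tr[E(\rho-\sigma)]=\max_{E\in\mathbf{E}^{\mathrm{eff}}_n}\bigl|\Tr[E(\rho-\sigma)]\bigr|$; this identity is the bridge to the other two forms.

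For the second expression, note that any $M=(E_1,E_2)=(E,\id-E)\in\Meff_n$ satisfies $\Tr[E_2(\rho-\sigma)]=-\Tr[E_1(\rho-\sigma)]$ by trace preservation, hence $\sum_{i=1}^2\bigl|\Tr[E_i(\rho-\sigma)]\bigr|=2\bigl|\Tr[E(\rho-\sigma)]\bigr|$; maximizing over $M\in\Meff_n$ is then the same as maximizing $2|\Tr[E(\rho-\sigma)]|$ over $E\in\mathbf{E}^{\mathrm{eff}}_n$, which by the previous paragraph equals $2\max_{E\in\mathbf{E}^{\mathrm{eff}}_n}\Tr[E(\rho-\sigma)]$. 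For the third expression, observe that for $M=(E_1,E_2)\in\Meff_n$ the difference $\MM_M(\rho)-\MM_M(\sigma)=\sum_{i=0}^1\Tr[E_{i+1}(\rho-\sigma)]\ketbra{i}{i}$ is diagonal in the computational basis, so its Schatten $1$-norm is exactly $\sum_{i=0}^1|\Tr[E_{i+1}(\rho-\sigma)]|$, i.e.\ the objective already treated in the second case; taking the maximum over $M\in\Meff_n$ thus yields the same value, closing the chain of equalities.

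I do not expect a genuine obstacle here, as the argument is elementary; the only points that require a little care are (i) that the supremum in \eqref{eq: defcomptracenorm} is actually attained, which is precisely why finiteness of $\mathbf{E}^{\mathrm{eff}}_n$ (equivalently, that $\overline{\mathbf{E}}^{\mathrm{eff}}_n$ is a polytope) is used, and (ii) that the passage between $\Tr[E(\rho-\sigma)]$ and $\bigl|\Tr[E(\rho-\sigma)]\bigr|$ relies both on closure of $\mathbf{E}^{\mathrm{eff}}_n$ under complementation and on the fact that we compare genuine states, so $\Tr[\rho-\sigma]=0$; for general $\rho,\sigma\in\Pos(\HH^{\otimes n})$ one would have to carry along the correction term $-\Tr[\rho-\sigma]$ appearing in the displayed identity just before the lemma.
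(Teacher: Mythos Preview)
Your proof is correct and follows essentially the same route as the paper: both reduce the convex hull to $\mathbf{E}^{\mathrm{eff}}_n$ via linearity of the objective and the fact that $\extr(\overline{\mathbf{E}}^{\mathrm{eff}}_n)\subset\mathbf{E}^{\mathrm{eff}}_n$, then use closure under $E\mapsto\id-E$ together with $\Tr[\rho-\sigma]=0$ to pass to the absolute-value form, and finally identify the diagonal trace-norm expression with the sum $\sum_i|\Tr[E_i(\rho-\sigma)]|$. Your presentation is slightly more streamlined in that you isolate the identity $\max_E\Tr[E(\rho-\sigma)]=\max_E|\Tr[E(\rho-\sigma)]|$ as an explicit bridge before treating (ii) and (iii), whereas the paper weaves this observation into the proof of (ii) directly.
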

\begin{proof}
This follows from well-known arguments; for completeness we have included the proof in \Cref{app:comptrnormEeffred}.
\end{proof}

In quantum information literature, the trace distance is defined via the trace norm; see e.g.~\cite[Section~$6.1$]{khatri2020principles}. In the computational setting, one takes the same approach.
\begin{definition}[Computational Trace Distance]
Given any two quantum states ${\rho,\sigma\in \States(\HH^{\otimes n})}$ and a family of informationally complete polynomially generated convex set of binary POVMs $\{\overline{\mathbf{M}}_n^\text{eff}\}_{n\in\mathbb{N}}$,  the \term{computational trace distance} between these states is given by
\begin{align}
\CompTrDis \left(\rho,\sigma\right)
\coloneqq\frac{1}{2}\|\rho - \sigma\|_{\overlineMeff_n} \; . \label{eq: defcomptracedist}
\end{align}
\end{definition}
As discussed in \Cref{Sec:preliminaries} the minimum error probability for discriminating between two quantum states via efficient measurements can be related to the computational trace distance. In particular, the minimum error probability using only computationally efficient binary measurements from $\Meff_{n}$ is given by
\begin{align}
    \CompMinErrPr_{err} (\rho , \sigma) &= \frac{1}{2}\left(1- \CompTrDis \left(\rho,\sigma\right)\right)
    = \frac{1}{2} - \frac{1}{4} \| \rho - \sigma\|_{\overlineMeff_n} \; .
\end{align}
\begin{rem}[Computational trace distance and cryptography]
\label{rem:crypto} To further relate this to computational quantum cryptography, let us consider the following. Two families of quantum states $\{\rho_n\}_{n \in \NN}, \{\sigma_n\}_{n \in \NN} \in \{\States(\CC^{2^n})\}_{n \in \NN} $ are said to be \term{computationally indistinguishable} if for every quantum polynomial-time distinguisher $\mathcal{D}$ (possibly non-uniform, with quantum advice\footnote{If one is given quantum advice, i.e., additional quantum side-information, then this should be incorporated into the states $\rho_{n}, \sigma_{n}$.} $\alpha_n$),
\begin{align}
    \left| \probP [\mathcal{D}^{\alpha_n} (\rho_{n}) = 1] - \probP [\mathcal{D}^{\alpha_n} (\sigma_{n}) = 1]\right| \leq \negl(n) \,.
\end{align}
This expression is known as the \term{computational distance} and, for every $n \in \NN$, can be related to the computational trace distance via
\begin{align}
    \CompTrDis \left(\rho_n,\sigma_n\right) = \sup_{\mathcal{D} \in \Meff_n} \left| \probP [\mathcal{D}^{\alpha_n} (\rho_{n}) = 1] - \probP [\mathcal{D}^{\alpha_n} (\sigma_{n}) = 1]\right| \, .
\end{align}
Under standard quantum cryptographic assumptions (e.g., the existence of pseudorandom quantum states and pseudorandom unitaries), there are families of states that are indistinguishable by any QPT adversary. Therefore, explicit constructions of these cryptographic primitives such as pseudorandom and pseudoentangled quantum states~\cite{JLS18,ABF+23,ABV23,LREJ25} are also computational indistinguishable ---under cryptographic assumptions--- within our framework.
\end{rem}
It was shown in~\cite[Lemma 2.5]{GE24} that the computational distance is approximately sub-additive. In the following lemma, we express their result in terms of 
$ \| \cdot \|_{\overlineMeff_{n}}$  
\begin{lem}[Sub-additivity of $ \| \cdot \|_{\overlineMeff_{n}}$~\cite{GE24}]
\label{lem:comp.dist.sub}
    Given any two quantum states $\rho,\sigma\in \States(\HH^{\otimes n})$, the computational trace norm is approximately sub-additive under tensor products, i.e.,
    \begin{align}
        \| \rho^{\otimes m} - \sigma^{\otimes m}\|_{\overlineMeff_{n\cdot m}} \leq m \max_{i \in \{1,\dots, m \}}\| \rho_{i} - \sigma_{i}\|_{\overlineMeff_n} \; ,
    \end{align}
    where $\rho_{i} = \rho^{\otimes i-1}\otimes\rho \otimes\sigma^{\otimes n-i}$, $\sigma_{i} = \rho^{\otimes i-1}\otimes\sigma \otimes\sigma^{\otimes n-i}$.
\end{lem}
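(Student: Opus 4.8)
The plan is to reduce the statement to the known sub-additivity of the computational distance from~\cite[Lemma 2.5]{GE24} by unwinding the definition of $\|\cdot\|_{\overlineMeff_n}$ in terms of optimal distinguishing measurements, as provided by \Cref{lem: comptrnormEeffred}. Concretely, I would first fix an arbitrary binary POVM $M = (E,\id-E)\in \overlineMeff_{n\cdot m}$ achieving (up to the convex-hull optimization) the value $\|\rho^{\otimes m}-\sigma^{\otimes m}\|_{\overlineMeff_{n\cdot m}}$, so that this quantity equals $2\,\Tr[E(\rho^{\otimes m}-\sigma^{\otimes m})]$ by \Cref{lem: comptrnormEeffred}. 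The key observation is the telescoping identity
\begin{align}
    \rho^{\otimes m} - \sigma^{\otimes m} = \sum_{i=1}^m \left( \rho_i - \sigma_i \right),
\end{align}
where $\rho_i = \rho^{\otimes i-1}\otimes \rho \otimes \sigma^{\otimes m-i}$ and $\sigma_i = \rho^{\otimes i-1}\otimes \sigma\otimes\sigma^{\otimes m-i}$ (note these are the hybrid states, matching the statement with the dimension count $n\cdot m$ understood for the ambient system).

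Second, I would substitute this into the expression for the norm to get $\|\rho^{\otimes m}-\sigma^{\otimes m}\|_{\overlineMeff_{n\cdot m}} = 2\sum_{i=1}^m \Tr[E(\rho_i-\sigma_i)] \leq \sum_{i=1}^m 2\Tr[E(\rho_i-\sigma_i)]$. The crucial point is that for each fixed $i$, the \emph{same} effect operator $E$ — which is efficiently (approximately) implementable on the full $n\cdot m$-qubit register by \Cref{cor: effimpconmix} — is a valid, efficiently implementable measurement to distinguish $\rho_i$ from $\sigma_i$; but $\rho_i$ and $\sigma_i$ differ only on the $i$-th block (of $n$ qubits), while agreeing on the other $m-1$ blocks. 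One then argues that distinguishing $\rho_i$ from $\sigma_i$ with an efficient measurement on the whole system reduces to an efficient measurement on the single differing block, because the remaining $(m-1)n$ qubits are in a fixed (product) state that can be prepared "for free" or absorbed into the circuit preparation — this is precisely the content of the hybrid argument underlying~\cite[Lemma 2.5]{GE24}. Hence $2\Tr[E(\rho_i-\sigma_i)] \leq \|\rho_i - \sigma_i\|_{\overlineMeff_n}$ in the appropriate sense, and summing over $i$ and bounding each term by the maximum yields the claimed inequality $\|\rho^{\otimes m}-\sigma^{\otimes m}\|_{\overlineMeff_{n\cdot m}} \leq m\max_i \|\rho_i-\sigma_i\|_{\overlineMeff_n}$.

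The main obstacle — and the only nontrivial step — is making precise the claim that an efficient distinguisher on the $n\cdot m$-qubit system translates to an efficient distinguisher on an $n$-qubit system when the states differ only on one block. In the non-uniform circuit model this requires care: the circuit for $E$ acting on all $nm$ qubits, together with hard-wired preparation of the fixed blocks $\rho^{\otimes i-1}\otimes\sigma^{\otimes m-i}$, must still be polynomial-size in $n$; this is where the super-additivity/growth assumption on the polynomial $p$ in \Cref{def:PGSEO} and the tensor-stability property ($E_1\otimes E_2\in\mathbf{E}^{\mathrm{eff}}_{n_1+n_2}$) enter, but one must check that preparing and feeding in the fixed product ancilla states is itself efficient (it need not be, if $\rho,\sigma$ are not themselves efficiently preparable!). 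The cleanest route is therefore to simply cite~\cite[Lemma 2.5]{GE24} directly — their lemma is stated at the level of the computational distance and already handles exactly this hybrid bookkeeping — and merely perform the translation to the notation $\|\cdot\|_{\overlineMeff_n}$ via the identity $\CompTrDis(\rho,\sigma) = \frac12\|\rho-\sigma\|_{\overlineMeff_n}$ together with \Cref{lem: comptrnormEeffred}. Under that approach the proof is essentially a one-line restatement, and I would present it as such rather than reproving the hybrid argument from scratch.
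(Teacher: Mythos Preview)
Your telescoping identity $\rho^{\otimes m}-\sigma^{\otimes m}=\sum_{i=1}^m(\rho_i-\sigma_i)$ is exactly the core of the paper's proof, and your final recommendation to cite~\cite[Lemma 2.5]{GE24} matches what the paper does. However, the paper's argument is considerably simpler than the route you sketch: after the telescoping step it applies the triangle inequality of the norm $\|\cdot\|_{\overlineMeff_{n\cdot m}}$ directly, obtaining
\[
\|\rho^{\otimes m}-\sigma^{\otimes m}\|_{\overlineMeff_{n\cdot m}}\leq \sum_{j=1}^m\|\rho_j-\sigma_j\|_{\overlineMeff_{n\cdot m}}\leq m\max_i\|\rho_i-\sigma_i\|_{\overlineMeff_{n\cdot m}}.
\]
There is no need to fix an optimal $E$ and then argue termwise; that just reproves the triangle inequality.

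More importantly, the ``obstacle'' you spend most of your effort on --- reducing a distinguisher on $n\cdot m$ qubits to one on $n$ qubits by preparing the fixed blocks --- does not actually arise. The subscript $\overlineMeff_n$ on the right-hand side of the lemma is a notational slip: the hybrids $\rho_i,\sigma_i$ live on $\HH^{\otimes n\cdot m}$, and the norm on the right is the full-system norm. The remark immediately following the lemma makes this explicit: the extra registers in $\rho_i,\sigma_i$ are treated as given for free, not as states one must prepare, precisely because (as you correctly note) $\rho,\sigma$ need not be efficiently preparable. So your worry is well-founded as a reading of the stated inequality, but the intended and proved statement sidesteps it entirely.
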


\begin{proof}
 This follows directly from the triangle inequality property of the computational trace norm. For completeness we provide the proof in \Cref{app:comp.dist.sub}, using the same steps as~\cite[Lemma 2.5]{GE24}.
\end{proof}

The $\rho_{i},\sigma_{i} $ cannot be viewed as simply holding a single copy of $\rho,\sigma$. For both states one has the same additional registers which hold multiple copies of $\rho$ and $\sigma$. These states have a computational cost associated to them, so there is a difference between having access to them for free and needing to first generate them. The triangle inequality can of course also be used to prove
     \begin{align}
        \| \rho_{1}\otimes\rho_{2} - \sigma_{1}\otimes\sigma_{2}\|_{\overlineMeff_{n}} \leq \min   \begin{cases}
    \| \rho_{1}\otimes \left(\rho_{2} - \sigma_{2} \right)\|_{\overlineMeff_{n}} +  \| \left(\rho_{1}-\sigma_{1} \right)\otimes\sigma_{2} \|_{\overlineMeff_{n}} \\
     \| \sigma_{1}\otimes \left(\rho_{2} - \sigma_{2} \right)\|_{\overlineMeff_{n}} +  \| \left(\rho_{1}-\sigma_{1} \right)\otimes\rho_{2} \|_{\overlineMeff_{n}}
\end{cases}
        \; .
    \end{align}

\section{Computational Quantum Divergences} \label{Sec: CompQuantDiv}
We introduce two new computational quantum divergences in this section and give them operational meaning. In Section~\ref{Sec: CompMaxDiv}, we first present a fully quantum computational max-divergence defined via our cone constructions. Here, we utilize the fact that the information-theoretic max-divergence can be expressed as an optimization over elements in a cone. In Section~\ref{Sec: EffMeas}, we constructed a cone of efficient binary measurements. This will effectively be the computational analog to the cone of positive semidefinite operators used to define the information-theoretic max-divergence, which is retrieved in the asymptotic circuit-depth limit. 
In Section~\ref{Sec: CompMeasRenDiv}, we then derive a family of computational measured {\Renyi} divergences. Measured {\Renyi} divergences are becoming well-studied objects in QIT literature, see e.g.~\cite{fang2025variational,RSB24,MH23}. In~\cite{fang2025variational} they were, among others, critical to deriving a generalized quantum asymptotic equipartition theorem (QAEP). This generalized QAEP can be used to derive a variety of statements, including a generalized Quantum Stein's lemma. Our computational measured divergences can naturally be viewed as computational versions of these information-theoretic measured divergences.

Importantly, in \Cref{thm:measured=conic}, we recover the computational max-divergence as a special case ($\alpha \to \infty $) of the computational measured {\Renyi} divergences, which can be viewed as one of the main results of this section. This gives the conic max-divergence an operational interpretation and naturally mimics the information-theoretic result of~\cite[Eq.~III.65]{MH23}, where it is shown that
\begin{align}
    D^{\mathbf{M}}_{\max}(\rho\|\sigma) = D_{\max}(\rho\|\sigma)
\end{align}
if $\mathbf{M} $ is the set of all binary POVMs.\footnote{It is also closely related to~\cite[Appendix A]{Mosonyi_2014}, where they show a slightly less general result, in that $\mathbf{M} $ is now the set of all POVMs, not just binary measurements.}
Lastly, following the approach of~\cite{RKW11}, we consider in \Cref{sec:fidelity} a \emph{computational (binary) fidelity} and study its relation to the other quantities, including a straightforward equivalence to the computational measured Rényi divergence at $\alpha = 1/2$ and a computational analogue of the Fuchs-van der Graaf inequality. Lastly, we study an explicit separation between computational measured Rényi divergences when $\alpha \in (0,1/2]$ and their informational counterparts by using the pseudoentanglement construction proposed by~\cite{GE24}.

\subsection{Computational Max-Divergence} \label{Sec: CompMaxDiv}
In the rest of this paper, let $\{\Eeff_n\}_{n\in\NN}$ be an informationally complete polynomially generated set of effect operators, $\{\Meff_n\}_{n\in\NN}$ its corresponding informationally complete polynomially generated set of binary POVMs, and its associated family of proper cones $\{\mathcal{C}_n^{\Eeff}\}_{n\in\NN}$ .
For each proper cone of efficient effect operators $\EffCone$, one can define a corresponding max-divergence using \Cref{Def: CMaxDiv}. For any such cone, we call the corresponding divergence, the \term{computational max-divergence}. We recall that this family of cones implicitly depends on a polynomial $n\mapsto p(n)$, which we sometimes state explicitly when it becomes relevant for clarity, however, most of the time we omit it to simplify notation. 

\begin{definition}[Computational Max-Divergence]\label{def:ComConeMax}
Given a family of proper cones $\{\EffCone\}_{n\in\NN}$ and any two positive semidefinite operators $\rho,\sigma \in \text{Pos}(\HH^{\otimes n})$
with $\operatorname{Tr}\left[\rho\right] > 0$ and $\rho \ll \sigma$, we denote the corresponding \term{computational max-divergence} by:
\begin{align}\label{equ:def:ConeDmax}
    \CompMaxDiv(\rho \|\sigma) \coloneqq \log \inf\{\lambda \in \RR | \rho \leq_{\Co^{\mathcal{S_{\mathrm{eff}}}}_n} \lambda \sigma\}  =\log \sup_{v \in \EffCone} \frac{\Tr[ v\rho ]}{\Tr[ v \sigma ]} \; ,
\end{align}
where we recall that $\Co^{\mathcal{S_{\mathrm{eff}}}}_n$ is the dual cone to $\EffCone$ and thus also proper.
\end{definition}
We will occasionally also denote the computational max-divergence by $\CompMaxDiv^{p(n)}$ to highlight the underlying polynomial $p(n)$ that is used in the definition of the family $\{\EffCone\}_{n\in\NN}$. The equality in Eq.~\eqref{equ:def:ConeDmax} holds whenever $\EffCone$ is a proper cone, see~\cite[Eqs.~4-6]{RKW11} and~\cite[Proposition 10]{GC24}.

\begin{rem}
Since we construct the cone of efficient effects  $\mathcal{C}_n^{\textbf{E}_\text{eff}}$ rather implicitly, depending on some fixed polynomial and approximately universal gate set, existence of a tractable membership or linear-minimization oracle is unclear. Projection or Frank–Wolfe–type methods would require such an oracle to solve the membership problem. Hence we cannot provide a general algorithm for approximating the computational divergences that are based on these cones and introduced in the following section. Designing relaxations or hierarchies that admit verifiable oracles is an interesting open direction.
\end{rem}
\subsubsection{Super- and Sub-additivity Properties} \label{sec: ConMaxDivAdditivity}
Recall that the elements of $\mathbf{E}^{\mathrm{eff}}_n$ are given by \textit{all} POVM elements which can be generated with circuit size of (at most) $p(n)$ and possible polynomial classical post-processing with NOT and OR operations. So if we assume this polynomial to be strictly super-additive, i.e. $p(n+m)\geq p(n)+p(m)+1$, then the family $\{\mathbf{E}^{\mathrm{eff}}_n\}_{n\in\NN}$ and the family of its induced cones $\{\EffCone\}_{n\in\NN}$ are stable under tensor products, in the sense that if $E_i\in \Eeff_{n_i}$, then $E_1\otimes E_2\in \Eeff_{n_1+n_2}$ and the same holds for $\EffCone$.
With this in mind we are now ready to study some of the super-additivity and sub-additivity properties of the computational max-divergence, where we follow the proof techniques from~\cite{RLW24}.
\begin{lem}[Super-additivity of $\CompMaxDiv$]
\label{lem:superadd.max}
    The computational max-divergence $\CompMaxDiv(\rho \| \sigma)$ is  super-additive, i.e. for any $\rho = \rho_{1} \otimes \rho_{2}$, $\sigma = \sigma_{1} \otimes \sigma_{2} $ such that $\rho_{i},\sigma_{i}\in \Pos (\HH^{\otimes n_i})$,
    \begin{align}
         \CompMaxDiv^{p(n_1 + n_2)}(\rho \|\sigma) \geq \CompMaxDiv^{p(n_1)}(\rho_{1} \|\sigma_{1}) + \CompMaxDiv^{p(n_2)}(\rho_{2} \|\sigma_{2}) \; .
    \end{align}
\end{lem}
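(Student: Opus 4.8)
The plan is to use the dual (variational) formulation of the computational max-divergence from Eq.~\eqref{equ:def:ConeDmax}, namely
\begin{align*}
  \CompMaxDiv^{p(n_1+n_2)}(\rho\|\sigma) = \log\sup_{v\in\mathcal{C}^{\mathbf{E}_\mathrm{eff}}_{n_1+n_2}}\frac{\Tr[v\rho]}{\Tr[v\sigma]}\,,
\end{align*}
together with the tensor-stability of the family of cones under the strictly super-additive polynomial assumption $p(n_1+n_2)\geq p(n_1)+p(n_2)+1$. The key observation is that if $v_1\in\mathcal{C}^{\mathbf{E}_\mathrm{eff}}_{n_1}$ and $v_2\in\mathcal{C}^{\mathbf{E}_\mathrm{eff}}_{n_2}$, then $v_1\otimes v_2\in\mathcal{C}^{\mathbf{E}_\mathrm{eff}}_{n_1+n_2}$: this is exactly the stated tensor-product stability of the cones $\{\EffCone\}_{n\in\NN}$ (it holds because a product of two efficiently generatable effect operators, suitably rescaled into the cones, is again efficiently generatable when the circuit budget is super-additive). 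Hence the supremum defining $\CompMaxDiv^{p(n_1+n_2)}$ ranges over a set that \emph{contains} all product elements $v_1\otimes v_2$.

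First I would fix $\varepsilon>0$ and pick near-optimal witnesses: $v_1\in\mathcal{C}^{\mathbf{E}_\mathrm{eff}}_{n_1}$ with $\Tr[v_1\rho_1]/\Tr[v_1\sigma_1]\geq \exp(\CompMaxDiv^{p(n_1)}(\rho_1\|\sigma_1)) - \varepsilon$, and similarly $v_2$ for the second pair. (If either max-divergence is $+\infty$, I would instead take a sequence of witnesses whose ratios diverge; the argument below still shows the left side is $+\infty$.) Then $v_1\otimes v_2$ lies in $\mathcal{C}^{\mathbf{E}_\mathrm{eff}}_{n_1+n_2}$, and using multiplicativity of the trace under tensor products,
\begin{align*}
  \frac{\Tr[(v_1\otimes v_2)(\rho_1\otimes\rho_2)]}{\Tr[(v_1\otimes v_2)(\sigma_1\otimes\sigma_2)]}
  = \frac{\Tr[v_1\rho_1]}{\Tr[v_1\sigma_1]}\cdot\frac{\Tr[v_2\rho_2]}{\Tr[v_2\sigma_2]}\,.
\end{align*}
Taking $\log$ and the supremum over $\mathcal{C}^{\mathbf{E}_\mathrm{eff}}_{n_1+n_2}$ on the left gives $\CompMaxDiv^{p(n_1+n_2)}(\rho\|\sigma) \geq \log\big((\exp(\CompMaxDiv^{p(n_1)}(\rho_1\|\sigma_1))-\varepsilon)(\exp(\CompMaxDiv^{p(n_2)}(\rho_2\|\sigma_2))-\varepsilon)\big)$; letting $\varepsilon\to 0$ yields the claimed inequality. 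I should also note that the denominators are strictly positive: $\rho\ll\sigma$ on the tensor product (which follows from $\rho_i\ll\sigma_i$) guarantees that the supremum is over witnesses with $\Tr[v\sigma]>0$, so no division-by-zero issue arises, and the same holds factor-wise.

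The only genuinely delicate point — and the one I would state carefully rather than gloss over — is the tensor-stability of the dual cones, i.e. that $v_1\otimes v_2$ is actually admissible in $\mathcal{C}^{\mathbf{E}_\mathrm{eff}}_{n_1+n_2}$. This is where the strict super-additivity $p(n_1+n_2)\geq p(n_1)+p(n_2)+1$ of the underlying polynomial is used, exactly as flagged in \Cref{sec: ConMaxDivAdditivity}: it ensures the circuit budget at system size $n_1+n_2$ accommodates running the two sub-circuits in parallel (plus the single OR-type post-processing gate), so the generating sets — and hence the cones they generate — are closed under tensor products. Everything else is the elementary observation that the ratio $\Tr[v\rho]/\Tr[v\sigma]$ is multiplicative on product witnesses and that enlarging the feasible set in a supremum can only increase it. This mirrors the proof technique of~\cite{RLW24} for the information-theoretic max-divergence, with the cone of positive semidefinite operators replaced by $\EffCone$.
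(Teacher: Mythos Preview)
Your proposal is correct and follows essentially the same approach as the paper: both use the dual (variational) expression for $\CompMaxDiv$, invoke the tensor-product stability of $\{\EffCone\}_{n\in\NN}$ under the strictly super-additive polynomial assumption, and conclude via multiplicativity of the trace on product witnesses. The only cosmetic difference is that the paper works directly with the supremum (splitting $\log\sup$ of a product into a sum of $\log\sup$'s), whereas you introduce $\varepsilon$-near-optimal witnesses and take $\varepsilon\to 0$; these are equivalent formulations. One small terminological slip: you write ``tensor-stability of the dual cones'' but you mean the primal cones $\EffCone$ themselves, which is indeed what you use.
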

\begin{proof}
 By construction and super-additivity of $p$ we have for $E_i \in \mathbf{E}^{\mathrm{eff}}_{n_i}$ (for $i \in \{1,2\}$), that ${E_1 \otimes E_2 \in \mathbf{E}^{\mathrm{eff}}_{n_1+n_2}}$. Due to convexity, it then naturally follows that if  $E_i \in \overline{\mathbf{E}}^{\mathrm{eff}}_{n_i}$ (for $i \in \{1,2\}$), then $E_1 \otimes E_2 \in \overline{\mathbf{E}}^{\mathrm{eff}}_{n_1+n_2}$. We refer to \Cref{rem:postprocessing} for a more detailed discussion on this.

For any $v_i \in \Co^{\mathbf{E}^{\mathrm{eff}}}_{n_i} $, there exists a $\lambda_{i}$ and $E_i \in \overline{\mathbf{E}}^{\mathrm{eff}}_{n_i}$ such that $v_i = \lambda_{i} E_{i}$. Since $E_1 \otimes E_2 \in \overline{\mathbf{E}}^{\mathrm{eff}}_{n_1+n_2} $, it must hold that $v_{1} \otimes v_{2} \in \Co^{\mathbf{E}^{\mathrm{eff}}}_{n_1+n_2}$.
As such
    \begin{align*}
        \CompMaxDiv^{p(n_1 + n_2)} (\rho \|\sigma)  &= \log \sup_{v \in  \Co^{\mathbf{E}^{\mathrm{eff}}}_{n_1+n_2}} \frac{\Tr[v\left(\rho_{1} \otimes \rho_{2}\right)]}{\Tr[ v\left(\sigma_{1} \otimes \sigma_{2}\right)  ]}\\
        &\geq \log \sup_{v_{i} \in  \Co^{\mathbf{E}^{\mathrm{eff}}}_{n_i}} \frac{\Tr[\left(v_1 \otimes v_2\right)\left(\rho_{1} \otimes \rho_{2}\right)]}{\Tr[ \left(v_1 \otimes v_2 \right)\left(\sigma_{1} \otimes \sigma_{2}\right)  ]}\\
    &= \log \sup_{v_{1} \in  \Co^{\mathbf{E}^{\mathrm{eff}}}_{n_1}} \frac{\Tr[v_1 \rho_{1} ]}{\Tr[ v_1 \sigma_{1}  ]} +  \log \sup_{v_{2} \in  \Co^{\mathbf{E}^{\mathrm{eff}}}_{n_2}} \frac{\Tr[v_2\rho_{2}]}{\Tr[  v_2\sigma_{2}  ]} \\
    &= \CompMaxDiv^{p(n_1)}(\rho_{1} \|\sigma_{1}) + \CompMaxDiv^{p(n_2)}(\rho_{2} \|\sigma_{2}) \, .
    \end{align*}
\end{proof}
The same proof can be extended to operators of the form $\rho^{\otimes m},\sigma^{\otimes m} \in \Pos(\HH^{\otimes n\cdot m})$. Here, one simply finds that
    \begin{align}
         \CompMaxDiv^{p(n
         \cdot m)}(\rho^{\otimes m} \|\sigma^{\otimes m}) \geq m\CompMaxDiv^{p(n)}(\rho \|\sigma) \; .
    \end{align}

In other words, the super-additivity property follows from the observation that, informally, tensor products of efficient measurements are also an efficient measurement. This measurement can efficiently be reduced to a binary measurement, using logical OR gates if $m \in \poly(n)$.

Sub-additivity cannot be proven via the same techniques. Nevertheless, acting on quantum states, it is possible to prove that the computational max-divergence must also be approximately sub-additive.
\begin{lem}[Sub-additivity of $\CompMaxDiv$] \label{lem: subaddmaxdiv}
The computational max-divergence $\CompMaxDiv(\rho \| \sigma)$ is approximately  sub-additive for quantum states. For any $\rho = \rho_{1} \otimes \rho_{2}$, $\sigma = \sigma_{1} \otimes \sigma_{2} $ such that ${\rho_{i},\sigma_{i}\in \States (\HH^{\otimes n_i})}$ with cost $\kappa(\rho_{i})$, $\kappa(\sigma_{i})$,\footnote{Here, the cost is the minimal circuit size required to prepare them using gates from the prior fixed gate set $\mathcal{G}$.}, it holds that
    \begin{align}
         \CompMaxDiv^{p(n_1 + n_2)}(\rho \|\sigma) \leq \CompMaxDiv^{p(n_1+n_2)+\kappa}(\rho_{1} \|\sigma_{1}) + \CompMaxDiv^{p(n_1+n_2)+\kappa}(\rho_{2} \|\sigma_{2}) \; ,
    \end{align}
 where $\kappa \coloneqq \max\{\kappa(\rho_{1}), \kappa(\rho_{2}),\kappa(\sigma_{1}), \kappa(\sigma_{2})\}$.
\end{lem}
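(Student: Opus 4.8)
The plan is to exploit the dual formulation $\CompMaxDiv(\rho\|\sigma)=\log\sup_{v\in\EffCone}\Tr[v\rho]/\Tr[v\sigma]$ together with the fact that, on the product state $\sigma=\sigma_1\otimes\sigma_2$, any efficient effect operator $v\in\Co^{\mathbf{E}^{\mathrm{eff}}}_{n_1+n_2}$ can be ``tested'' against $\rho$ by inserting the auxiliary states $\sigma_1,\sigma_2$ one factor at a time, exactly as in the proof of \Cref{lem: subaddmaxdiv} for the information-theoretic case and the sub-additivity of the computational trace norm (\Cref{lem:comp.dist.sub}). Concretely, I would write
\[
\Tr[v\,(\rho_1\otimes\rho_2)] = \Tr\big[(v^{(1)})\,\rho_1\big], \qquad v^{(1)} \coloneqq \ptr{2}{v\,(\id\otimes\rho_2)},
\]
so that $v^{(1)}$ is, up to normalization, an effect operator on $\HH^{\otimes n_1}$ obtained from $v$ by appending a copy of $\rho_2$ and tracing it out. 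The point is that this new effect operator is \emph{efficient with a slightly larger polynomial budget}: implementing $v$ costs at most $p(n_1+n_2)$ gates, and preparing the ancilla $\rho_2$ costs at most $\kappa(\rho_2)\le\kappa$ further gates, so $v^{(1)}\in\overline{\mathbf{E}}^{\mathrm{eff}}_{n_1}$ with respect to the inflated polynomial $p(n_1+n_2)+\kappa$. This is precisely why the statement is formulated with the shifted budget $p(n_1+n_2)+\kappa$ rather than $p(n_1)$.

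The key steps, in order, are: (i) fix an optimal (or near-optimal) $v\in\Co^{\mathbf{E}^{\mathrm{eff}}}_{n_1+n_2}$ achieving $\CompMaxDiv^{p(n_1+n_2)}(\rho\|\sigma)$ in the dual formula; (ii) introduce the partially-contracted operators $v^{(1)}=\ptr{2}{v(\id\otimes\sigma_2)}$ acting on the first factor and $v^{(2)}=\ptr{1}{v(\rho_1\otimes\id)}$ acting on the second — note the asymmetry, the first uses $\sigma_2$ and the second uses $\rho_1$, mirroring the one-factor-at-a-time argument of \Cref{lem:comp.dist.sub}; (iii) verify the multiplicative/telescoping identity
\[
\frac{\Tr[v(\rho_1\otimes\rho_2)]}{\Tr[v(\sigma_1\otimes\sigma_2)]} = \frac{\Tr[v(\rho_1\otimes\rho_2)]}{\Tr[v(\rho_1\otimes\sigma_2)]}\cdot\frac{\Tr[v(\rho_1\otimes\sigma_2)]}{\Tr[v(\sigma_1\otimes\sigma_2)]} = \frac{\Tr[v^{(2)}\rho_2]}{\Tr[v^{(2)}\sigma_2]}\cdot\frac{\Tr[v^{(1)}\rho_1]}{\Tr[v^{(1)}\sigma_1]},
\]
taking logarithms to split the product into a sum; (iv) bound each factor by the corresponding supremum, using that $v^{(1)}\in\Co^{\mathbf{E}^{\mathrm{eff}}}_{n_1}$ and $v^{(2)}\in\Co^{\mathbf{E}^{\mathrm{eff}}}_{n_2}$ \emph{with the polynomial $p(n_1+n_2)+\kappa$}, which yields $\CompMaxDiv^{p(n_1+n_2)+\kappa}(\rho_1\|\sigma_1)+\CompMaxDiv^{p(n_1+n_2)+\kappa}(\rho_2\|\sigma_2)$; (v) deal with the trivial edge cases (vanishing denominators, $v$ not attaining the supremum so one passes to a sequence, and the $\rho\ll\sigma$ hypothesis ensuring finiteness).

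The main obstacle I anticipate is step (iv): showing rigorously that $v^{(1)}$ and $v^{(2)}$ genuinely land in the efficient cones with the claimed budget. One must argue that the operation ``append a copy of the ancilla state $\sigma_2$ (resp.\ $\rho_1$), run the circuit implementing $v$, and trace the ancilla out'' is itself realizable as an efficient binary POVM in the sense of \Cref{rem:POVMcomplexity} — the ancilla preparation circuit composes with the circuit for $v$, costing at most $p(n_1+n_2)+\kappa$ gates, and tracing out is free. A subtlety is that $v^{(1)}$ is a positive operator with $v^{(1)}\le c\,\id$ for some $c$ bounded by the normalization of $v$, so after rescaling it is a bona fide effect operator; this rescaling is harmless in the ratio $\Tr[v^{(1)}\rho_1]/\Tr[v^{(1)}\sigma_1]$ since the divergence is defined via a homogeneous-degree-zero quotient over the cone, but one should state it cleanly. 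The other minor technical care needed is the asymmetric bookkeeping in the telescoping identity — choosing to insert $\sigma_2$ in the first contraction and $\rho_1$ in the second (rather than $\sigma_1,\sigma_2$ in both) is what makes the two middle ratios cancel correctly, and this mirrors exactly the structure already used in \Cref{app:comp.dist.sub}.
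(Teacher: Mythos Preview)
Your proposal is correct and rests on the same operational insight as the paper: contracting an efficient effect operator against an efficiently preparable ancilla state yields an efficient effect operator on the remaining factor, with gate budget inflated by the ancilla's preparation cost. The paper, however, organizes the algebra additively rather than multiplicatively. It fixes $\log\lambda_i = \CompMaxDiv^{p(n_1+n_2)+\kappa}(\rho_i\|\sigma_i)$, invokes the identity
\[
(\lambda_1\lambda_2)\,\sigma_1\otimes\sigma_2 - \rho_1\otimes\rho_2 \;=\; (\lambda_1\sigma_1-\rho_1)\otimes\lambda_2\sigma_2 \;+\; \rho_1\otimes(\lambda_2\sigma_2-\rho_2),
\]
and verifies that each summand has nonnegative pairing with every $E\in\mathbf{E}^{\mathrm{eff}}_{n_1+n_2}$ --- precisely via the ancilla-contraction argument you describe in step~(iv), using $\sigma_2$ and $\rho_1$ as the respective ancillas (the same pair you chose). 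Your multiplicative telescoping is the dual reformulation: the paper works in the primal picture, showing $\lambda_1\lambda_2\,\sigma-\rho$ lies in the dual cone by testing against all efficient $E$, whereas you bound the supremum characterization of Eq.~\eqref{equ:def:ConeDmax} directly. The paper's route has the minor advantage of never dividing by the intermediate quantity $\Tr[v(\rho_1\otimes\sigma_2)]$, so your edge case~(v) about vanishing denominators does not arise; otherwise the two arguments are interchangeable.
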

\begin{proof}
The proof technique is again inspired by~\cite{RLW24}. For any $\lambda_{1},\lambda_{2} \in \RR$, it holds that
   \begin{align*}
        (\lambda_{1}\cdot\lambda_{2}) \sigma_{1} \otimes \sigma_{2} - \rho_{1} \otimes \rho_{2} = \left( \lambda_{1} \sigma_{1} - \rho_{1}\right) \otimes \lambda_{2} \sigma_{2} + \rho_{1} \otimes \left(\lambda_{2} \sigma_{2} -\rho_{2} \right) \; .
    \end{align*}
Let us consider any $\lambda =\lambda_{1}\cdot\lambda_{2}\geq 0$ for which it holds that 
\begin{align} \label{eq: subaddcond}
    \Tr[(\lambda\sigma-\rho) E]\geq 0 \; , \qquad \forall E \in \mathbf{E}^{\mathrm{eff}}_{n_1+n_2} \; .
\end{align}
By construction it must similarly hold that 
\begin{align*}
    \Tr[(\lambda\sigma-\rho) v]\geq 0 \; , \qquad \forall v \in \Co^{\mathbf{E}^{\mathrm{eff}}}_{n_1+n_2} \; ,
\end{align*}
and, by \Cref{lem: dualexp},  one then finds that
\begin{align*}
    \CompMaxDiv^{p(n_1 + n_2)}(\rho \|\sigma) \leq \log  \lambda \; .
\end{align*}
Our goal is to show that $\lambda=\lambda_{1}\cdot\lambda_{2}$ satisfies Eq.~\eqref{eq: subaddcond}, if one sets
    \begin{align}
       \log \lambda_{i} = \CompMaxDiv^{p(n_1+n_2)+\kappa +1}(\rho_{i} \|\sigma_{i}) \; . \label{eq: lambdaidef}
    \end{align}
Note that, in particular, Eq.~\eqref{eq: lambdaidef} implies $\lambda_{i} \geq 0$ for $ i \in \{1,2\}$. Since it takes at most $\kappa$ extra gates to generate $\sigma_2$, for any $E \in \mathbf{E}^{\mathrm{eff}}_{n_1+n_2}$ acting on $\HH^{\otimes n_1} \otimes \HH^{\otimes n_2}$ there exists another effect operator $E^\prime$ on $\HH^{\otimes n_1}$ that first additionally prepares $\sigma_2$ and then implements $E$. By construction, it must hold that 
\begin{align*}
     \Tr[(\lambda_{1}\sigma_{1}-\rho_{1})\otimes \sigma_2 E] =  \Tr[(\lambda_{1}\sigma_{1}-\rho_{1}) E^\prime] \;. 
\end{align*}
Moreover, since $E^\prime$ uses at most $p(n_1+n_2)+\kappa$ gates, it follows from the definition of $\lambda_1$ that 
\begin{align*}
    \Tr[(\lambda_{1}\sigma_{1}-\rho_{1})\otimes \sigma_2 E] =  \Tr[(\lambda_{1}\sigma_{1}-\rho_{1}) E^\prime] \geq 0 \; .
\end{align*}
Similarly, it must hold for any $E \in \mathbf{E}^{\mathrm{eff}}_{n_1+n_2}$ that
\begin{align}
    \Tr[\rho_{1} \otimes(\lambda_{2}\sigma_{2}-\rho_{2}) E]  \geq 0 \; .
\end{align}
Combining this, we find for $\lambda =\lambda_{1}\cdot\lambda_{2}$ and any $E \in \mathbf{E}^{\mathrm{eff}}_{n_1+n_2}$that 
\begin{align*}
    \Tr[(\lambda\sigma-\rho) E] &= \lambda_{2}  \Tr[(\lambda_{1}\sigma_{1}-\rho_{1})\otimes\sigma_2 E] + \Tr[\rho_{1} \otimes (\lambda_2\sigma_2-\rho_2) E]
    \geq 0 \; ,
\end{align*}
thus satisfying Eq.~\eqref{eq: subaddcond}. As a result, it must hold that
\begin{align*}
     \CompMaxDiv^{p(n_1 + n_2)}(\rho \|\sigma) &\leq \log  \lambda \\
     &= \log  \lambda_1 + \log  \lambda_2 \\
     &= \CompMaxDiv^{p(n_1+n_2)+\kappa}(\rho_{1} \|\sigma_{1}) + \CompMaxDiv^{p(n_1+n_2)+\kappa}(\rho_{2} \|\sigma_{2}) \; .
\end{align*}
\end{proof}
\begin{rem}
It generally makes sense to consider quantum states $\rho$ and $\sigma$ that are efficiently preparable, i.e. where $\kappa$ scales polynomially in $n_1,n_2$. This way, the max-divergences that appear in \Cref{lem: subaddmaxdiv} are all related to cones generated by efficient effect operators. Moreover, the proof can straightforwardly be modified to hold for positive semidefinite operators $\rho, \sigma \in \Pos \left( \HH^{\otimes n_1} \otimes \HH^{\otimes n_2}\right)$. However, for any $\rho_{i} \in \Pos \left( \HH^{\otimes n_i}\right)$, $\kappa(\rho_{i})$ would then be the gate complexity required to generate the corresponding normalized quantum state $\frac{\rho_{i}}{\Tr \left[ \rho_{i} \right]}$ and analogously for $\sigma_{i}$.
\end{rem}
For states of the form $\rho^{\otimes m},\sigma^{\otimes m} \in \States( \HH^{\otimes n \cdot m})$, the same proof can be slightly modified, using~\cite{RLW24}
\begin{align}\label{equ:powersumequality}
    \lambda^m \sigma^{\otimes m} - \rho^{\otimes m} = \sum_{j=1}^{m}\lambda^{j-1} \rho^{\otimes m-j}\otimes(\lambda\sigma-\rho)\otimes\sigma^{\otimes j-1} \; .
\end{align}
One then retrieves the bound
\begin{align} \label{eq: IIDsubaddbounds}
         \CompMaxDiv^{ p(n\cdot m)}(\rho^{\otimes m} \|\sigma^{\otimes m}) \leq m\CompMaxDiv^{p(n\cdot m)+(m-1)\kappa}(\rho \|\sigma) \; ,
    \end{align}
where $\kappa \coloneqq \max\{\kappa(\rho), \kappa(\sigma)\}$. For fixed $\rho,\sigma$, one should expect the RHS of Eq.~\eqref{eq: IIDsubaddbounds} to converge to the information-theoretic max-divergence in the limit $m\to\infty$. This is due to the fact that while the states are fixed, the gate complexity of the allowed measurement is increasing as a function of $m$. This inequality is not unexpected, as it always holds that
\begin{align}
    \CompMaxDiv(\rho^{\otimes m} \|\sigma^{\otimes m}) \leq D_\text{max}(\rho^{\otimes m} \|\sigma^{\otimes m}) = m D_\text{max}(\rho \|\sigma) \; .
\end{align}

Lastly, let us relate the computational max-divergence with the computational trace distance. 

\begin{lem}[Computational Max-divergence and Computational Trace Distance]
\label{lem:comp.max.norm}
Given two quantum states $\rho,\sigma\in \States(\HH^{\otimes n})$, we have that
\begin{align}
\CompMaxDiv(\rho \| \sigma) \geq \log\left(1 + \CompTrDis \left(\rho,\sigma\right)\right) \; .
\end{align}
\end{lem}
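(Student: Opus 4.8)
The plan is to plug the effect operator that certifies the computational trace distance into the variational formula for the computational max-divergence. Recall from \Cref{def:ComConeMax} that $\CompMaxDiv(\rho\|\sigma)=\log\sup_{v\in\EffCone}\Tr[v\rho]/\Tr[v\sigma]$, and from \Cref{lem: comptrnormEeffred} (equivalently, directly from the definitions of the computational trace norm and distance) that $\CompTrDis(\rho,\sigma)=\sup_{E\in\overline{\mathbf{E}}^{\mathrm{eff}}_n}\Tr[E(\rho-\sigma)]$; this last supremum is attained, since $\overline{\mathbf{E}}^{\mathrm{eff}}_n$ is the (closed, bounded) convex hull of a finite set. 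Because $\overline{\mathbf{E}}^{\mathrm{eff}}_n\subseteq\EffCone$, any such $E$ is a feasible $v$ in the variational formula, which is what makes the comparison possible.

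First I would fix a maximizer $E^\star\in\overline{\mathbf{E}}^{\mathrm{eff}}_n$ of $\Tr[E(\rho-\sigma)]$, assume $\Tr[E^\star\sigma]>0$ (the degenerate case is treated below), and use the identity $\Tr[E^\star\rho]=\Tr[E^\star\sigma]+\Tr[E^\star(\rho-\sigma)]$ to get
\begin{align*}
\CompMaxDiv(\rho\|\sigma)\;\geq\;\log\frac{\Tr[E^\star\rho]}{\Tr[E^\star\sigma]}\;=\;\log\!\left(1+\frac{\Tr[E^\star(\rho-\sigma)]}{\Tr[E^\star\sigma]}\right).
\end{align*}
The two elementary facts I would then invoke are: (i) $\Tr[E^\star\sigma]\le\Tr[\sigma]=1$, since $0\le E^\star\le\id$ and $\sigma$ is a state; and (ii) $\Tr[E^\star(\rho-\sigma)]\ge 0$, since $E=0$ is admissible so the maximal value is non-negative. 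Together, (i) and (ii) give $\Tr[E^\star(\rho-\sigma)]/\Tr[E^\star\sigma]\ge\Tr[E^\star(\rho-\sigma)]=\CompTrDis(\rho,\sigma)$, and monotonicity of $\log$ then yields $\CompMaxDiv(\rho\|\sigma)\ge\log\!\left(1+\CompTrDis(\rho,\sigma)\right)$, completing the main case.

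It remains to dispose of the boundary case $\Tr[E^\star\sigma]=0$. Since $E^\star,\sigma\ge 0$, this forces $E^\star\sigma=0$, i.e. $\supp(\sigma)\subseteq\ker(E^\star)$; combined with $\rho\ll\sigma$ (inherited from the hypotheses of \Cref{def:ComConeMax}) this also gives $E^\star\rho=0$, hence $\Tr[E^\star(\rho-\sigma)]=0$ and $\CompTrDis(\rho,\sigma)=0$. By positive definiteness of $\|\cdot\|_{\overlineMeff_n}$ this means $\rho=\sigma$, so the claim reduces to $0\ge\log 1$. I do not expect a genuine obstacle: the only step requiring care — and the closest thing to one — is guaranteeing that the ratio in the variational formula is well defined, which is precisely what the $\rho\ll\sigma$ assumption together with this case split provide.
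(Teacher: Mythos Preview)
Your proof is correct and essentially dual to the paper's: you lower-bound $\CompMaxDiv$ by plugging the trace-distance optimizer $E^\star$ into the supremum formulation $\log\sup_{v}\Tr[v\rho]/\Tr[v\sigma]$, whereas the paper works from the primal side, showing that every feasible $\lambda$ in $\inf\{\lambda:\rho\le_{\DualEffCone}\lambda\sigma\}$ satisfies $\lambda\ge 1+\CompTrDis(\rho,\sigma)$. Both arguments reduce to the same two ingredients ($\Tr[E\sigma]\le 1$ for effect operators and the effect-operator characterization of $\CompTrDis$); the only cosmetic difference is that your dual approach requires the $\Tr[E^\star\sigma]=0$ case split, while the paper instead disposes of the $\CompMaxDiv=\infty$ case trivially at the outset.
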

\begin{proof}
The proof is straight forward and can be found in \Cref{app:comp.max.norm}.
\end{proof}

\subsection{Computational Measured {\Renyi} Divergences}  \label{Sec: CompMeasRenDiv}
This section is devoted to computational measured Rényi divergences. Here, we consider an informationally complete polynomially generated set of binary POVMs $\{\Meff_n\}_{n\in\mathbb{N}}$, who's underlying polynomial is strictly super-additive. For each set $\Meff_n$, we will define a measured {\Renyi} divergence. 

\begin{definition}[Computational Measured Rényi Divergence]
Given any two positive semidefinite operators $\rho,\sigma \in \Pos (\HH^{\otimes n})$ and an element $\Meff_n$ of an informationally complete polynomially generated set of binary POVMs $\{\Meff_n\}_{n\in\mathbb{N}}$, we denote the corresponding \term{computational measured Rényi divergence} for any $\alpha \in \left(0,1\right) \cup \left(1,\infty\right)$ by:
\begin{align}
\CompDiv^{\Meff_n}_{\alpha}(\rho\|\sigma) \coloneqq \sup_{M \in \Meff_n} \SandRenyi_{\alpha}(\MM_M(\rho)\|\MM_M(\sigma)) \; ,
\end{align}
where $\SandRenyi_\alpha(\rho\|\sigma)$ is the sandwiched Rényi divergence (\Cref{def:sandwiched.renyi}) and $\MM(\cdot)$ is the measurement channel corresponding to the POVM $M$.
\end{definition}
Moreover, we use $\CompQ^{\Meff_n}_{\alpha}(\rho\|\sigma)$ to denote
\begin{align}\label{eq:comp.Q.Renyi}
    \
     \CompQ^{\Meff_n}_{\alpha}(\rho\|\sigma) &\coloneqq \begin{cases}
				\inf_{M \in \Meff_n} Q_{\alpha}(\MM(\rho)\|\MM(\sigma)) & \text{ if }  \alpha \in (0,1) \\
				\sup_{M \in \Meff_n} Q_{\alpha}(\MM(\rho)\|\MM(\sigma)) & \text{ if }  \alpha \in (1,\infty)
			\end{cases} \; .
\end{align}
\begin{definition} [Computational Measured Relative Entropy \& Max-divergence]
\label{def:comp.rel.max.diver}
Given any two positive semidefinite operators $\Pos (\HH^{\otimes n})$ with $\operatorname{Tr}\left[\rho\right] > 0$ and $\rho \ll \sigma$ and an element $\Meff_n$ of an informationally complete polynomially generated set of binary POVMs $\{\Meff_n\}_{n\in\mathbb{N}}$, the  \term{computational measured relative entropy} and \term{max-divergence} between $\rho$, $\sigma$ are, respectively, given by:
		\begin{align}
\CompDiv^{\Meff_n}(\rho\|\sigma) &\coloneqq 
				\sup_{M \in \Meff_n}\frac{\Tr[\MM \left(\rho\right)(\log\MM \left(\rho\right)-\log\MM \left(\sigma\right))]}{\Tr\left[\MM \left(\rho\right)\right]} = \lim_{\alpha\nearrow 1}\CompDiv^{\Meff_n}_{\alpha}(\rho\|\sigma)\; ,  \\
                \CompDiv^{\Meff_n}_\text{max}(\rho\|\sigma) &\coloneqq 
				\sup_{M \in \Meff_n}\log\inf\{\lambda\in\RR|\MM \left(\rho\right)\leq\lambda \MM \left(\sigma\right)\}= \lim_{\alpha\nearrow\infty}\CompDiv^{\Meff_n}_{\alpha}(\rho\|\sigma) \; .
		\end{align}
\end{definition}
By taking the appropriate limits, one can retrieve the computational measured relative entropy and max-divergence from the computational measured {\Renyi} divergences, see e.g.~\cite[Lemma 2]{RSB24}.
We note that a priori one could have also defined the computational measured {\Renyi} divergences using 
$\overlineMeff_{n}$ instead of $\Meff_n$. We show in \Cref{lem: equivalanceMandConvM} below, however, that these definitions are equivalent.

\begin{lem} \label{lem: equivalanceMandConvM}
For any $\mathbb{D}\in\{\SandRenyi_\alpha,D,D_\text{max}\}$, where $\alpha  \in \left(0,1\right) \cup \left(1,\infty\right)$, it holds that
\begin{align}
\CompbbDiv^{\Meff_n}(\rho\|\sigma) = \CompbbDiv^{\overlineMeff_{n}}(\rho\|\sigma) \; .
\end{align}
\end{lem}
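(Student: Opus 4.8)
The plan is to prove the two inequalities separately. The inclusion $\mathbf{E}^{\mathrm{eff}}_n\subseteq\overline{\mathbf{E}}^{\mathrm{eff}}_n$ gives $\Meff_n\subseteq\overlineMeff_n$, so $\CompbbDiv^{\Meff_n}(\rho\|\sigma)\le\CompbbDiv^{\overlineMeff_{n}}(\rho\|\sigma)$ is immediate from the definition of the supremum; the content is the reverse inequality, where I want to argue that a \emph{mixed} binary POVM cannot outperform the best extreme one. So I fix $M=(E,\id-E)\in\overlineMeff_n$, write $E=\sum_{i=1}^{k}\lambda_i E_i$ with $E_i\in\mathbf{E}^{\mathrm{eff}}_n$, $\lambda_i\ge 0$, $\sum_i\lambda_i=1$, and set $M_i\coloneqq(E_i,\id-E_i)\in\Meff_n$.

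The main step is a data-processing argument through a lifted, generally non-binary measurement. I introduce the POVM $N$ on the outcome set $\{1,\dots,k\}\times\{0,1\}$ with $N_{(i,0)}\coloneqq\lambda_i E_i$ and $N_{(i,1)}\coloneqq\lambda_i(\id-E_i)$, noting $\sum_{i,b}N_{(i,b)}=\sum_i\lambda_i\id=\id$. Writing $\mathcal{N}$ for its measurement channel and $f$ for the deterministic classical relabelling $(i,b)\mapsto b$, one has $\MM_M=f\circ\mathcal{N}$. Since $\mathcal{N}(\rho)$, $\mathcal{N}(\sigma)$, $\MM_M(\rho)$ and $\MM_M(\sigma)$ are all classical (diagonal in the outcome basis), I invoke data processing for the \emph{classical} Rényi divergence, relative entropy and max-divergence—valid for every $\alpha\in(0,\infty]$ with no restriction, in contrast to the quantum sandwiched case—to get $\mathbb{D}(\MM_M(\rho)\|\MM_M(\sigma))\le\mathbb{D}(\mathcal{N}(\rho)\|\mathcal{N}(\sigma))$ for each $\mathbb{D}\in\{\SandRenyi_\alpha,D,D_\text{max}\}$.

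Next I evaluate the right-hand side using the block structure of $N$: the $(i,b)$-entry of $\mathcal{N}(\rho)$ equals $\lambda_i$ times the $b$-entry of $\MM_{M_i}(\rho)$, and likewise for $\sigma$, with the \emph{same} weights $\lambda_i$. Writing $p^{(i)}\coloneqq\MM_{M_i}(\rho)$, $q^{(i)}\coloneqq\MM_{M_i}(\sigma)$, this gives $Q_\alpha(\mathcal{N}(\rho)\|\mathcal{N}(\sigma))=\sum_i\lambda_i Q_\alpha(p^{(i)}\|q^{(i)})$, $D(\mathcal{N}(\rho)\|\mathcal{N}(\sigma))=\sum_i\lambda_i D(p^{(i)}\|q^{(i)})$ and $D_\text{max}(\mathcal{N}(\rho)\|\mathcal{N}(\sigma))=\max_i D_\text{max}(p^{(i)}\|q^{(i)})$. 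For $D$ and $D_\text{max}$ this already yields $\le\max_i$; for $\SandRenyi_\alpha$ I apply $t\mapsto\tfrac{1}{\alpha-1}\log(t/\Tr\rho)$, which is increasing for $\alpha>1$ and decreasing for $\alpha<1$, so a convex combination of the $Q_\alpha(p^{(i)}\|q^{(i)})$—lying between their min and max—is sent to a value at most $\max_i\SandRenyi_\alpha(p^{(i)}\|q^{(i)})$ in both regimes. Chaining the two bounds gives $\mathbb{D}(\MM_M(\rho)\|\MM_M(\sigma))\le\max_i\mathbb{D}(\MM_{M_i}(\rho)\|\MM_{M_i}(\sigma))\le\CompbbDiv^{\Meff_n}(\rho\|\sigma)$, and taking the supremum over $M\in\overlineMeff_n$ closes the argument.

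As an alternative route, I could instead observe that for fixed $\rho,\sigma$ the quantity $\mathbb{D}(\MM_{(E,\id-E)}(\rho)\|\MM_{(E,\id-E)}(\sigma))$ depends on $E$ only through the affine image $(\Tr[E\rho],\Tr[E\sigma])$ and is quasi-convex in that pair—convex for $\alpha\le 1$ (including the relative entropy); for $\alpha>1$ its sublevel sets coincide with those of the convex function $Q_\alpha$; for $\alpha=\infty$ it is a monotone function of a maximum of ratios of affine functions—so its supremum over the polytope $\overline{\mathbf{E}}^{\mathrm{eff}}_n=\conv(\mathbf{E}^{\mathrm{eff}}_n)$ is attained at a vertex in $\mathbf{E}^{\mathrm{eff}}_n$. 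I do not expect a serious obstacle; the points requiring care are (i) that data processing (or, in the alternative, quasi-convexity) is applied to genuinely classical states, so that the argument survives for $\alpha<1/2$ where the quantum sandwiched Rényi divergence fails DPI, and (ii) that the support conditions such as $\rho\ll\sigma$ are tracked so that the $+\infty$ conventions of \Cref{def:sandwiched.renyi} only ever render an inequality trivially true.
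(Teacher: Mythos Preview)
Your proof is correct. Your main route, however, is genuinely different from the paper's. The paper argues directly via joint (quasi-)convexity: since $D$ and $\SandRenyi_\alpha$ for $\alpha\in(0,1)$ are jointly convex on classical states, and $D_\text{max}$ and $\SandRenyi_\alpha$ for $\alpha\ge 1$ are jointly quasi-convex, the supremum over $\overlineMeff_n$ is attained on its extreme points, which lie in $\Meff_n$. This is precisely your ``alternative route''. Your primary argument instead lifts any $M\in\overlineMeff_n$ to a non-binary POVM $N$ with outcomes $\{1,\dots,k\}\times\{0,1\}$, applies classical DPI under the coarse-graining $(i,b)\mapsto b$, and then exploits the block structure $\mathcal{N}(\rho)_{(i,b)}=\lambda_i\,p^{(i)}_b$ to reduce to a convex combination of the $Q_\alpha(p^{(i)}\|q^{(i)})$, followed by the monotonicity of $t\mapsto\tfrac{1}{\alpha-1}\log(t/\Tr\rho)$. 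This is more hands-on but has the virtue of making the $\alpha<1/2$ case transparent: you only ever invoke DPI for \emph{classical} R\'enyi divergences, where it holds for all $\alpha>0$. The paper's route is shorter and cites the convexity properties, while yours rederives the needed inequality from first principles via an explicit coarse-graining.
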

\begin{proof}
Since $\Meff_n \subset \overlineMeff_{n}$, it always holds that
\begin{align}
\CompbbDiv^{\Meff_n}(\rho\|\sigma) \leq \CompbbDiv^{\overlineMeff_{n}}(\rho\|\sigma) \; .
\end{align}
To prove the reverse inequality, we first note that
the relative entropy $D$ and $\SandRenyi_{\alpha}$ are jointly convex on classical states in the range ${\alpha \in \left(0,1\right)}$; the max-divergence $D_\text{max}$ and $\SandRenyi_\alpha$ are jointly quasi-convex for ${\alpha\in \left[1,\infty\right)}$~\cite{T16}. Denoting the extremal points of $\overlineMeff_{n}$ by $\extr\left(\overlineMeff_{n}\right)$, a consequence of the (quasi-) convexity is that
\begin{align*}
\CompbbDiv^{\overlineMeff_{n}}(\rho\|\sigma)&\coloneqq \sup_{M \in \overlineMeff_{n}}\mathbb{D}(\MM(\rho)\|\MM(\sigma)) 
=\sup_{M\in\extr\left(\overlineMeff_{n}\right)}\mathbb{D}(\MM(\rho)\|\MM(\sigma)) \; . 
\end{align*}
Since the extremal points of $\overlineMeff_{n}$ satisfy 
$\extr\left(\overlineMeff_{n}\right)\subset \Meff_n$ by construction of $\overlineMeff_{n}$, 
it holds for all $\mathbb{D}\in\{D,\SandRenyi_\alpha,D_{\text{max}}\}$ that 
\begin{align*}
\CompbbDiv^{\overlineMeff_{n}}(\rho\|\sigma)&=\sup_{M\in\extr\left(\overlineMeff_{n}\right)}\mathbb{D}(\MM(\rho)\|\MM(\sigma)) \leq \sup_{M\in{\Meff_n}}\mathbb{D}(\MM(\rho)\|\MM(\sigma)
= \CompbbDiv^{\mathbf{M}_n^{\mathrm{eff}}}(\rho\|\sigma)  \; .  
\end{align*} 
\end{proof}

\subsubsection{Properties of Measured {\Renyi} Divergences}
In~\cite{RSB24}, properties of generic measured {\Renyi} divergences are derived. For completeness, in this section,  we rephrase the results of~\cite[Lemmas 2 \& 5]{RSB24} in our setting of efficient measurements. 
\begin{lem} \label{lem: MeasRenDivProp}
    For any element $\Meff_n$ of an informationally complete polynomially generated set of binary POVMs $\{\Meff_n\}_{n\in\mathbb{N}}$ and $\alpha,\beta \in \left(0,1\right) \cup \left(1,\infty\right)$, the corresponding computational measured {\Renyi} entropies satisfy
    \begin{enumerate}
        \item (Non-negativity): For any two normalized quantum states $\rho,\sigma \in \States (\HH^{\otimes n})$,
        \begin{align}
     \CompDiv_{\alpha}^{\Meff_{n}}(\rho\|\sigma) \geq 0 \; .
        \end{align}
        \item (Monotonicity in $\alpha$): For any $\alpha \leq \beta$ and $\rho,\sigma \in \Pos(\HH^{\otimes n})$,
              \begin{align}
 \CompDiv_{\alpha}^{\Meff_{n}}(\rho\|\sigma) \leq  \CompDiv_{\beta}^{\Meff_{n}}(\rho\|\sigma) \; .
              \end{align}
        \item (Joint (Quasi-) Convexity):  For any two normalized quantum states $\rho,\sigma \in \States (\HH^{\otimes n})$, we have that $\CompQ^{\Meff_n}_{\alpha}(\rho\|\sigma)$ is jointly concave in $\rho,\sigma$ if $\alpha \in \left(0,1\right)$ and jointly convex if $\alpha \in \left(1,\infty\right)$. Moreover, $\CompDiv^{\Meff_n}_{\alpha} (\rho\|\sigma)$ is jointly convex if $\alpha \in \left(0,1\right)$ and jointly quasi-convex if $\alpha \in \left(1,\infty\right)$.
        \item (Data Processing Inequality):  For any $\rho,\sigma \in \Pos(\HH^{\otimes n})$, let $\EE \in CPTP \left(\HH^{\otimes n},\HH^{\otimes n}\right)$ be a completely positive trace-preserving map such that for any $E \in\Eeff_n$, $\EE^* \left( E\right) \in \overline{\mathbf{E}}^\text{eff}_{n}$. Then
              \begin{align}
 \CompDiv_{\alpha}^{\Meff_{n}}(\rho\|\sigma) \geq  \CompDiv_{\alpha}^{\Meff_{n}}(\EE(\rho)\|\EE(\sigma)) \; .
              \end{align}
    \end{enumerate}
\end{lem}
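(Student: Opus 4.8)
The plan is to observe that all four properties descend from the corresponding properties of the classical Rényi divergence applied to the post-measurement distributions, combined with elementary facts about how suprema and infima interact with monotonicity, convexity, and quasi-convexity; this essentially ports the arguments of \cite[Lemmas 2 \& 5]{RSB24} into the efficient-measurement setting. The common tools I would use throughout are: (i) for any binary POVM $M$ the channel $\MM_M$ is affine (indeed linear) in its argument; (ii) $\SandRenyi_\alpha$ restricted to classical states coincides with the classical Rényi divergence, and $\SandRenyi_\alpha=\tfrac{1}{\alpha-1}\log Q_\alpha$ on normalized states; and (iii) $0,\id\in\Eeff_n$, so the trivial POVM $(0,\id)\in\Meff_n$ is always available.

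For non-negativity (item 1), I would simply note that $(0,\id)\in\Meff_n$ maps any normalized $\rho,\sigma$ to the same distribution $\ketbra{1}{1}$, hence $\SandRenyi_\alpha(\MM(\rho)\|\MM(\sigma))=0$ for that choice, so the supremum over $M\in\Meff_n$ is $\geq 0$. For monotonicity in $\alpha$ (item 2), I would invoke that the classical Rényi divergence $\alpha\mapsto\SandRenyi_\alpha(p\|q)$ is non-decreasing (as recalled in \Cref{Sec: InfoTheorConc}), so $\SandRenyi_\alpha(\MM_M(\rho)\|\MM_M(\sigma))\leq\SandRenyi_\beta(\MM_M(\rho)\|\MM_M(\sigma))$ for every fixed $M$; taking the supremum over $M\in\Meff_n$ on both sides preserves the inequality.

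For the (quasi-)convexity statements (item 3) I would start from the classical facts (see \cite{T16}) that on classical states $Q_\alpha$ is jointly concave for $\alpha\in(0,1)$ and jointly convex for $\alpha\in(1,\infty)$. Precomposing with the affine maps $\rho\mapsto\MM_M(\rho)$ and $\sigma\mapsto\MM_M(\sigma)$ preserves this, so for each fixed $M$ the function $(\rho,\sigma)\mapsto Q_\alpha(\MM_M(\rho)\|\MM_M(\sigma))$ is jointly concave (resp.\ convex). Since $\CompQ_\alpha^{\Meff_n}$ is an infimum of these for $\alpha\in(0,1)$ and a supremum for $\alpha\in(1,\infty)$, and an infimum of concave functions is concave while a supremum of convex functions is convex, the claimed statement for $\CompQ_\alpha^{\Meff_n}$ follows. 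For $\CompDiv_\alpha^{\Meff_n}$ I would then use $\SandRenyi_\alpha=\tfrac{1}{\alpha-1}\log Q_\alpha$: for $\alpha\in(0,1)$, $\log$ of a positive concave function is concave and dividing by the negative number $\alpha-1$ turns this into joint convexity of $(\rho,\sigma)\mapsto\SandRenyi_\alpha(\MM_M(\rho)\|\MM_M(\sigma))$ for each $M$, and a supremum of convex functions is convex; for $\alpha\in(1,\infty)$, $\log$ of a positive convex function is only quasi-convex (its sublevel sets are sublevel sets of $Q_\alpha$, hence convex), dividing by the positive number $\alpha-1$ keeps it quasi-convex, and a supremum of quasi-convex functions is quasi-convex since the relevant sublevel set is an intersection of convex sets. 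The one point requiring care here is not to overclaim joint convexity for $\alpha>1$ and to track the sign of $\alpha-1$ at the division step.

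Finally, for the data-processing inequality (item 4), the key observation is that the hypothesis on $\mathcal{E}$ is exactly what is needed for pulling a binary POVM back through $\mathcal{E}$ to stay inside $\overlineMeff_n$: given $M=(E,\id-E)\in\Meff_n$, both $E$ and $\id-E$ lie in $\Eeff_n$, so $\mathcal{E}^*(E)\in\overline{\mathbf{E}}^{\mathrm{eff}}_n$, and by trace preservation $\mathcal{E}^*(\id-E)=\id-\mathcal{E}^*(E)$, whence $M':=(\mathcal{E}^*(E),\id-\mathcal{E}^*(E))\in\overlineMeff_n$. Using $\operatorname{Tr}[E\,\mathcal{E}(\tau)]=\operatorname{Tr}[\mathcal{E}^*(E)\,\tau]$ one gets $\MM_M(\mathcal{E}(\tau))=\MM_{M'}(\tau)$ for $\tau\in\{\rho,\sigma\}$, so $\SandRenyi_\alpha(\MM_M(\mathcal{E}(\rho))\|\MM_M(\mathcal{E}(\sigma)))=\SandRenyi_\alpha(\MM_{M'}(\rho)\|\MM_{M'}(\sigma))\leq\CompDiv_\alpha^{\overlineMeff_n}(\rho\|\sigma)$, and by \Cref{lem: equivalanceMandConvM} the right-hand side equals $\CompDiv_\alpha^{\Meff_n}(\rho\|\sigma)$; taking the supremum over $M\in\Meff_n$ on the left yields the claim. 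Across the whole lemma the only real obstacle is bookkeeping — keeping the concave/convex/quasi-convex trichotomy and the $\alpha$-ranges consistent in item 3, and invoking \Cref{lem: equivalanceMandConvM} at the right moment in item 4 — there is no genuinely hard step.
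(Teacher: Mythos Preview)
Your proposal is correct and follows essentially the same route as the paper's proof, which also reduces everything to properties of the classical R\'enyi divergence and invokes \cite[Lemmas 2 \& 5]{RSB24}. The only cosmetic differences are that the paper argues item~1 via non-negativity of the classical R\'enyi divergence for every $M$ (rather than your trivial-POVM witness), and in item~3 it first establishes the concavity/convexity of $\CompQ_\alpha^{\Meff_n}$ and then composes with $t\mapsto\frac{1}{\alpha-1}\log t$ at the aggregate level, whereas you do this composition per measurement before taking the supremum; both orderings are valid and your handling of item~4 via \Cref{lem: equivalanceMandConvM} matches the paper exactly.
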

The first three properties are directly inherited from the information-theoretic sandwiched {\Renyi} divergence. The last property can be understood as stating that we only consider $\EE$ such that for every $M \in\Meff_n$ there exists another measurement 
$\bar{M} \in \overlineMeff_{n}$ which satisfies $\Bar{\MM}\left(\rho\right) = \MM \circ \EE \left(\rho\right)$ for every $\rho$. In this case, optimizing over $M \circ \EE$ is more restrictive than simply optimizing over $M$.
\begin{proof}
The proof for \Cref{lem: MeasRenDivProp} directly follows from~\cite{RSB24}. However, for completeness, we include it in~\Cref{app:MeasRenDivProp}.    
\end{proof}

Next let us prove a Pinsker inequality for the computational measured Rényi divergences.
\begin{lem}[Computational Pinsker inequality]
\label{lem:pinsker}
 Given two quantum states $\rho,\sigma\in \States(\HH^{\otimes n})$, we have that for $\alpha \in (0,\infty)$, 
\begin{align}
    \CompDiv^{\Meff_n}_{\alpha} (\rho \|\sigma) \geq \frac{\min\{1,\alpha\}}{2\ln 2}\| \rho - \sigma\|_{\overlineMeff_n}^2 \,.
\end{align}
\end{lem}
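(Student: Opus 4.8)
The strategy is to reduce the computational statement to the classical Pinsker-type inequality for Rényi divergences, exploiting that the computational measured Rényi divergence is a supremum of classical Rényi divergences over efficient binary measurements. First I would recall the classical fact: for any two probability distributions $P,Q$ on a two-point set (or more generally) and any $\alpha\in(0,\infty)$, the classical Rényi divergence satisfies $D_\alpha(P\|Q)\geq \frac{\min\{1,\alpha\}}{2\ln 2}\,\|P-Q\|_1^2$. This follows from the classical Pinsker inequality $D(P\|Q)\geq \frac{1}{2\ln 2}\|P-Q\|_1^2$ together with monotonicity of $D_\alpha$ in $\alpha$ (which handles $\alpha\geq 1$ immediately, since then $D_\alpha\geq D_1=D$), and for $\alpha\in(0,1)$ one uses the known refinement $D_\alpha(P\|Q)\geq \alpha\, D(P\|Q)$ (see e.g.~the monotonicity/convexity properties collected in~\cite{T16}), again combined with classical Pinsker. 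So the classical input is $D_\alpha(P\|Q)\geq \frac{\min\{1,\alpha\}}{2\ln 2}\|P-Q\|_1^2$ for all $\alpha\in(0,\infty)$.

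\textbf{Lifting to the computational setting.} Given $\rho,\sigma\in\States(\HH^{\otimes n})$, for every binary POVM $M=(E,\id-E)\in\Meff_n$ the measured states $\MM_M(\rho),\MM_M(\sigma)$ are classical distributions on two outcomes, so applying the classical inequality to them gives
\begin{align*}
    \SandRenyi_\alpha(\MM_M(\rho)\|\MM_M(\sigma)) \geq \frac{\min\{1,\alpha\}}{2\ln 2}\,\|\MM_M(\rho)-\MM_M(\sigma)\|_1^2 \,.
\end{align*}
Taking the supremum over $M\in\Meff_n$ on both sides, the left-hand side becomes $\CompDiv^{\Meff_n}_{\alpha}(\rho\|\sigma)$ by definition, and on the right-hand side, since $x\mapsto x^2$ is monotone on nonnegative reals, $\sup_M \|\MM_M(\rho)-\MM_M(\sigma)\|_1^2 = \big(\sup_M \|\MM_M(\rho)-\MM_M(\sigma)\|_1\big)^2$; by the third case of \Cref{lem: comptrnormEeffred}, this last supremum equals $\|\rho-\sigma\|_{\overlineMeff_n}$. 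Combining these identifications yields
\begin{align*}
    \CompDiv^{\Meff_n}_{\alpha}(\rho\|\sigma) \geq \frac{\min\{1,\alpha\}}{2\ln 2}\,\|\rho-\sigma\|_{\overlineMeff_n}^2 \,,
\end{align*}
which is the claim. (Here we implicitly use $\CompDiv^{\Meff_n}_{\alpha}=\CompDiv^{\overlineMeff_n}_{\alpha}$ from \Cref{lem: equivalanceMandConvM} when matching the norm on the right, though it is cleaner to argue directly with $\Meff_n$ throughout since \Cref{lem: comptrnormEeffred} already expresses the computational trace norm as a maximum over $\Meff_n$.)

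\textbf{Main obstacle.} The only genuine content is the classical two-point inequality $D_\alpha(P\|Q)\geq \frac{\min\{1,\alpha\}}{2\ln 2}\|P-Q\|_1^2$ for $\alpha\in(0,1)$; the $\alpha\geq 1$ case is immediate from monotonicity in $\alpha$ and classical Pinsker. For $\alpha\in(0,1)$ I would cite the standard bound $D_\alpha\geq \alpha D_1$ (a consequence of $\frac{\alpha-1}{\alpha}$-weighted concavity arguments for the classical Rényi divergence, documented in the references on Rényi divergences such as~\cite{T16}) and then invoke classical Pinsker for $D_1$. If one prefers a self-contained route, the two-outcome case can also be verified directly: writing $P=(p,1-p)$, $Q=(q,1-q)$, one reduces to a single-variable inequality and checks it by elementary calculus, but this is a routine computation I would relegate to an appendix or a citation rather than carry out in the main text. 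Care is needed only to confirm the edge behaviour ($P\perp Q$, or $\ker$ conditions) is consistent, but in those degenerate cases the left-hand side is $+\infty$ or the inequality is trivial, so no difficulty arises.
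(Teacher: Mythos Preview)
Your overall architecture---apply a classical Pinsker-type bound to the post-measurement distributions $\MM_M(\rho),\MM_M(\sigma)$ and then take the supremum over $M\in\Meff_n$, identifying the right-hand side via \Cref{lem: comptrnormEeffred}---is exactly the paper's proof, and your treatment of $\alpha\geq 1$ via monotonicity in $\alpha$ is correct and identical to what the paper does.

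The gap is in the $\alpha\in(0,1)$ case. The ``known refinement'' $D_\alpha(P\|Q)\geq \alpha\,D(P\|Q)$ that you invoke is \emph{false}. A two-point counterexample: take $P=(\tfrac12,\tfrac12)$ and $Q=(0.01,0.99)$. Then (in nats) $D(P\|Q)=\tfrac12\ln 50+\tfrac12\ln\tfrac{50}{99}\approx 1.61$, while $D_{1/2}(P\|Q)=-2\ln\bigl(\sqrt{0.005}+\sqrt{0.495}\bigr)\approx 0.51$, so $D_{1/2}<\tfrac12 D$. (More drastically, when $P\not\ll Q$ one has $D=\infty$ but $D_\alpha<\infty$ for $\alpha<1$.) What \emph{is} true---and what the paper cites, via Gilardoni~\cite{Gi10}---is the direct bound $D_\alpha(P\|Q)\geq \frac{\alpha}{2}\|P-Q\|_1^2$ (nats) for $\alpha\in(0,1]$, established without routing through $D_1$. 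Your proposed fallback of direct calculus on $(p,1-p)$ versus $(q,1-q)$ would recover exactly this inequality, but that is the actual content of the $\alpha<1$ case, not a dispensable alternative: you should either carry it out or cite~\cite{Gi10} in place of the incorrect intermediate step.
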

\begin{proof}
As discussed in~\cite{RSB24}, the bound for $\alpha \in (0, 1]$ follows directly from the classical result~\cite{Gi10}. Additionally, we convert the logarithm to base 2 and take the supremum over all efficient measurements. The bound for $\alpha \in (1, \infty)$ follows from the monotonicity in $\alpha$ property of the computational measured Rényi divergence (see \Cref{lem: MeasRenDivProp}).
\end{proof}
\subsubsection{Equivalence Between Conic and Measured Max-Divergence}
We are now ready to prove one of the main results of this section, namely that the initially geometrically defined computational max-divergence ($\CompDiv_\text{max}$ from \Cref{def:ComConeMax}) and via the measured {\Renyi} divergences constructed computational measured max-divergence ($\CompDiv^{\Meff_n}_\text{max}$ from \Cref{def:comp.rel.max.diver}) are equivalent in our setting. The fact that these two divergences are actually equivalent is a priori not clear and this result highlights the strength of applying conic techniques to the computational setting.

This also naturally complements the known information-theoretic result, which states that there exists an optimal measurement which leaves the max-divergence invariant~\cite{MH23,Mosonyi_2014}.
\begin{theorem}[Max-Divergence Equivalence]\label{thm:measured=conic}
Let the families $\{\Eeff_n\}_{n\in\NN}, \{\Meff_n\}_{n\in\NN},$ and $\{\EffCone\}_{n\in\NN}$ be as in \Cref{Sec: EffMeas} and informationally complete and define $\CompDiv_\text{max}$ and $\CompDiv^{\Meff_n}_\text{max}$ w.r.t.~these as in \Cref{Sec: CompMaxDiv} and \Cref{Sec: CompMeasRenDiv}, respectively. Then for any $n \in \NN$ and two positive semidefinite operators $\rho,\sigma\in \States( \HH^{\otimes n})$, it  holds that
\begin{align}\label{equ:MaxDivEqu}
\CompDiv_\textup{max}(\rho\|\sigma) =\CompDiv^{\Meff_n}_\textup{max}(\rho\|\sigma) \eqqcolon \lim_{\alpha\to\infty}\CompDiv^{\Meff_n}_\alpha(\rho\|\sigma)  \; .
\end{align}
\end{theorem}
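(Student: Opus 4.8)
The plan is to prove the nontrivial first equality in \eqref{equ:MaxDivEqu} (the second one is the definition of $\CompDiv^{\Meff_n}_{\text{max}}$ recorded in \Cref{def:comp.rel.max.diver}) by reducing both $\CompMaxDiv(\rho\|\sigma)$ and $\CompDiv^{\Meff_n}_{\text{max}}(\rho\|\sigma)$ to the single quantity $\log\sup_{E}\Tr[E\rho]/\Tr[E\sigma]$, the supremum running over an appropriate family of efficient effect operators. Throughout I use the standing hypothesis $\rho\ll\sigma$ (in force for both sides to be defined): since every $E\in\EffCone$ is positive semidefinite, $\Tr[E\sigma]=0$ forces $\Tr[E\rho]=0$, so no spurious $+\infty$ appears and every supremum below may be restricted to $\{E:\Tr[E\sigma]>0\}$, which always contains $E=\id$. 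For the conic side, \Cref{lem: dualexp} applied to the proper cone $\EffCone$ (whose dual is $\DualEffCone$) gives $\CompMaxDiv(\rho\|\sigma)=\log\sup_{v\in\EffCone}\Tr[v\rho]/\Tr[v\sigma]$; and since $\EffCone=\cone(\mathbf{E}^{\mathrm{eff}}_n)$, every nonzero $v$ in it equals $\lambda E$ with $\lambda>0$ and $E\in\overline{\mathbf{E}}^{\mathrm{eff}}_n=\conv(\mathbf{E}^{\mathrm{eff}}_n)$, while the ratio is scale-invariant, so $\CompMaxDiv(\rho\|\sigma)=\log\sup_{E\in\overline{\mathbf{E}}^{\mathrm{eff}}_n}\Tr[E\rho]/\Tr[E\sigma]$.

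For the measured side, a binary POVM $M=(E,\id-E)$ produces diagonal (classical) outputs, so $D_{\text{max}}(\MM_M(\rho)\|\MM_M(\sigma))=\log\max\{\Tr[E\rho]/\Tr[E\sigma],\ \Tr[(\id-E)\rho]/\Tr[(\id-E)\sigma]\}$, any term with vanishing denominator being dropped. Taking the supremum over $E\in\mathbf{E}^{\mathrm{eff}}_n$ and using that $\mathbf{E}^{\mathrm{eff}}_n$ is closed under $E\mapsto\id-E$ (\Cref{def:PGSEO}), the two arguments of the maximum absorb into one another and $\CompDiv^{\Meff_n}_{\text{max}}(\rho\|\sigma)=\log\sup_{E\in\mathbf{E}^{\mathrm{eff}}_n}\Tr[E\rho]/\Tr[E\sigma]$. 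It then remains to see that optimizing this ratio over $\overline{\mathbf{E}}^{\mathrm{eff}}_n$ gives the same value as over $\mathbf{E}^{\mathrm{eff}}_n$: the functional $E\mapsto\Tr[E\rho]/\Tr[E\sigma]$ is a ratio of affine maps, hence quasi-convex on $\{E:\Tr[E\sigma]>0\}$, so writing $E=\sum_j p_jE_j$ as a convex combination of the finitely many elements of $\mathbf{E}^{\mathrm{eff}}_n$ and using $\Tr[E_j\sigma]=0\Rightarrow\Tr[E_j\rho]=0$ yields $\frac{\sum_j p_j\Tr[E_j\rho]}{\sum_j p_j\Tr[E_j\sigma]}\le\max_{j:\,\Tr[E_j\sigma]>0}\frac{\Tr[E_j\rho]}{\Tr[E_j\sigma]}$, i.e.\ the supremum over the polytope $\overline{\mathbf{E}}^{\mathrm{eff}}_n$ is attained at a vertex, hence inside $\mathbf{E}^{\mathrm{eff}}_n$. (Equivalently, this last step is \Cref{lem: equivalanceMandConvM} applied with $\mathbb{D}=D_{\text{max}}$, noting $\overline{\mathbf{E}}^{\mathrm{eff}}_n$ is itself closed under $E\mapsto\id-E$.)

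Chaining the three identities gives $\CompMaxDiv(\rho\|\sigma)=\CompDiv^{\Meff_n}_{\text{max}}(\rho\|\sigma)$; combined with the defining relation $\CompDiv^{\Meff_n}_{\text{max}}=\lim_{\alpha\to\infty}\CompDiv^{\Meff_n}_\alpha$ (\Cref{def:comp.rel.max.diver}, the interchange of $\sup_M$ and $\lim_\alpha$ being licensed by monotonicity of $\SandRenyi_\alpha$ in $\alpha$) this establishes \eqref{equ:MaxDivEqu}. I do not expect a genuine obstacle: the statement is the conic/computational transcription of the classical fact that the max-divergence is attained by an optimal binary test~\cite{MH23,Mosonyi_2014}. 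The only points requiring care are bookkeeping ones---the degenerate case $\Tr[E\sigma]=0$, handled uniformly by the assumption $\rho\ll\sigma$, and the standard observation that a quasi-linear ratio over a polytope is extremized at a vertex---and even the latter can be bypassed by simply citing \Cref{lem: dualexp} together with \Cref{lem: equivalanceMandConvM} and reducing the whole statement to the classical $D_{\text{max}}$ formula for binary channels.
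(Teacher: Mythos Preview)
Your proof is correct and follows essentially the same route as the paper's: both reduce $\CompMaxDiv$ via the dual expression and scale-invariance to $\log\sup_{E\in\overlineEeff_n}\Tr[E\rho]/\Tr[E\sigma]$, compute the classical $D_{\max}$ on binary measurement outputs as the maximum of the two ratios, use closure under $E\mapsto\id-E$, and invoke \Cref{lem: equivalanceMandConvM} (or, as you do, its direct quasi-convexity content) to pass between $\Eeff_n$ and $\overlineEeff_n$. Your explicit handling of the degenerate case $\Tr[E\sigma]=0$ via $\rho\ll\sigma$ is a nice addition that the paper leaves implicit.
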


\begin{proof}
Using the approach of~\cite{RSB24}, we first note that the computational max-divergence can alternatively be expressed as
\begin{align} \label{eq: dualexpequivproof}
    \CompMaxDiv(\rho \|\sigma) =\log \sup_{v \in \EffCone} \frac{\Tr[ v\rho ]}{\Tr[ v \sigma ]} \; .
\end{align}
By construction, for every $v$, there exists a $\lambda \geq  0$ and an element $E \in \overline{\mathbf{E}}^{\mathrm{eff}}_{n}$ such that $v=\lambda E$. Since the RHS of Eq.~\eqref{eq: dualexpequivproof} is invariant under the scaling of $\lambda$, it must also hold that
\begin{align*}
    \CompMaxDiv(\rho \|\sigma) &=\log \sup_{E \in \overline{\mathbf{E}}^{\mathrm{eff}}_{n}} \frac{\Tr[ E\rho ]}{\Tr[ E \sigma ]}
    = \sup_{M=(E^1,E^2)\in\overlineMeff_{n}}\log \max_{i\in\{1,2\}}\frac{\Tr[E^i\rho]}{\Tr[E^i\sigma]} \; .
\end{align*}
The last equality holds due to the condition that $E \in \overline{\mathbf{E}}^{\mathrm{eff}}_{n}$ implies $\id -E \in \overline{\mathbf{E}}^{\mathrm{eff}}_{n}$.
We will now derive the same expression for the computational measured max-divergence. 
Let $M=(E^1,E^2) \in \overlineMeff_n$ be a binary measurement with associated measurement map $\mathcal{M}$, then it is easy to check that the condition $\MM \left(\rho\right)\leq\lambda \MM \left(\sigma\right)$ is equivalent to 
\begin{align*}
    \lambda \geq \max_{i\in\{1,2\}}\frac{\Tr[E^i\rho]}{\Tr[E^i\sigma]} \; .
\end{align*}
Hence it follows that
\begin{align*}
   \inf\{\lambda\in\RR|\MM \left(\rho\right)\leq\lambda \MM \left(\sigma\right)\} = \max_{i\in\{1,2\}}\frac{\Tr[E^i\rho]}{\Tr[E^i\sigma]} \; .
\end{align*}
Combining these observations with \Cref{lem: equivalanceMandConvM}, we directly find that
\begin{align*}
\CompDiv^{\Meff_n}_\text{max}(\rho\|\sigma) &=\CompDiv^{\overlineMeff_n}_\text{max}(\rho\|\sigma) \\
&=\sup_{M=(E^1,E^2) \in \overlineMeff_n}\log\inf\{\lambda\in\RR|\MM \left(\rho\right)\leq\lambda \MM \left(\sigma\right)\}\\
&=    \sup_{M =(E^1,E^2)\in \overlineMeff_n}\log  \max_{i\in\{1,2\}}\frac{\Tr[E^i\rho]}{\Tr[E^i\sigma]} \\
&= \CompMaxDiv(\rho \|\sigma) \; ,
   \end{align*}
which concludes the proof.
\end{proof}
One interesting consequence of \Cref{thm:measured=conic} can be stated as follows. If two sets of measurements generate the same cone, then their corresponding measured max-divergences are also equal to each other. As we discussed in \Cref{Sec: EffMeas} the cone $\EffCone$ may contain POVM elements which are on their own not efficiently preparable, however, due to the above argument, they do not aid in distinguishability when using the max-divergence as a distance measure.

\begin{rem}
That these two divergences ---one defined via the geometric construction of cones of efficient effect operators, and the other via the optimization over measured $\alpha=\infty$-{\Renyi} \ divergences--- are equivalent is a priori not clear in general. Let us consider a slightly more general setting where one starts with a family of admissable (not necessarily binary) POVMs $\mathbf{M}$ and generates from this the cone $\mathcal{C}^E$ as the smallest cone that contains all effect operators (POVM elements) appearing in all $M\in \mathbf{M}$. This is, among others, the setting of~\cite{RSB24}. Then, if one defines these two max-divergences w.r.t.~these objects, one can show that the measured max-divergence is always upper bounded by the conical max-divergence; see~\cite[Proposition 13]{RSB24}. In other words, the ``$\leq$" inequality of \eqref{equ:MaxDivEqu} is always true. Intuitively, this is because every POVM element is also element of the cone $\mathcal{C}^E$. The other inequality and thus equality, however, may not always be true. This direction requires an extra condition. Up to rescaling, every element of the cone $\mathcal{C}^E$ needs to either be an effect operator that appears in some \term{binary} POVM in $\mathbf{M}$ or needs to be expressible as a convex mixture of such binary measurements. In our case this is not an issue, since we consider only binary POVMs. 
However, if one, e.g., considers $\mathbf{M}$ to be the set of LO measurements, which are necessarily generally non-binary, then the above proof of equality generally breaks down~\cite{RSB24}. This is because generating the required two-output measurements would require classical communication, which is not allowed.
\end{rem}

\subsection{Regularized Computational Measured {\Renyi} Divergences} 
Let us define the regularized versions of the computational max-divergence and the computational measured {\Renyi} divergences. We now pass from finite-block, per-instance quantities to \emph{asymptotic rates}. In computational settings, an efficient tester is constrained not only in circuit size but also in the number of samples; accordingly, we adopt a \emph{poly-sample} regime in which the number of copies satisfies $m(k)\in \poly(k)$. Since we are studying asymptotic behavior, we henceforth consider families of states $\{\rho_k\}_{k\in\NN}$ and $\{\sigma_k\}_{k\in\NN}$. Again, we fix families of efficient measurements $\{\Meff_n\}_{n\in\NN}$. While the single-instance measures are evaluated at each fixed $k$, the regularized quantities are meaningful primarily at the family level.

\begin{definition}[Regularized Computational Max-divergence]
    Given two family of states $\{\rho_k\}_{k\in\NN},$ $\{\sigma_k\}_{k\in\NN} \in \{\Ss(\HH^{\otimes k})\}_{k \in \NN}$, their \term{regularized computational max-divergence} when $m(k) \in \poly(k)$ is given by,
    \begin{align}
        \CompMaxDiv^{\infty}(\{\rho_k\}_k \|\{\sigma_k\}_k) \coloneqq \limsup_{k\rightarrow\infty}\frac{1}{m(k)}\CompMaxDiv(\rho_k^{\otimes m(k)} \|\sigma_k^{\otimes m(k)} )\; .
    \end{align}
\end{definition}
For the simple case in which $\rho_k = \rho^{\otimes k}$ ($\sigma_k = \sigma^{\otimes k}$) for all $k \in \NN$, this regularization (denoted by $\CompMaxDiv^{\infty}(\rho \| \sigma)$) admits a convenient equivalent form.
\begin{lem}
Given two quantum states $\rho,\sigma \in \Ss(\HH)$ such that, their regularized computational max-divergence can be expressed as,
      \begin{align}
        \CompMaxDiv^{\infty}(\rho \| \sigma) = \lim_{m \to \infty}\frac{1}{m}\CompMaxDiv(\rho_k^{\otimes m} \|\sigma_k^{\otimes m}) = \sup_{m>0}\frac{1}{m}\CompMaxDiv(\rho_k^{\otimes m} \|\sigma_k^{\otimes m})\; .
    \end{align}
\end{lem}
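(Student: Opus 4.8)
The plan is to recognize that the sequence $a_m \coloneqq \CompMaxDiv^{p(m)}(\rho^{\otimes m}\|\sigma^{\otimes m})$ is \emph{superadditive} and then to invoke Fekete's lemma. If the standing hypothesis $\rho\ll\sigma$ fails then every $a_m=+\infty$ and all three expressions are trivially $+\infty$, so we may assume $\rho\ll\sigma$, whence each $a_m$ is finite; note also $a_m\ge 0$ for all $m$, e.g.\ by \Cref{lem:comp.max.norm}, or directly because $\id\in\mathbf{E}_m^{\mathrm{eff}}$ contributes the ratio $1$ in the dual expression \eqref{equ:def:ConeDmax}.

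First I would establish the superadditivity bound
\begin{align*}
a_{k+l}\ \geq\ a_k + a_l \qquad \forall\, k,l\in\NN\, .
\end{align*}
This is exactly the instance of \Cref{lem:superadd.max} obtained by setting $\rho_1=\rho^{\otimes k}$, $\sigma_1=\sigma^{\otimes k}$, $\rho_2=\rho^{\otimes l}$, $\sigma_2=\sigma^{\otimes l}$ (so $n_1=k$, $n_2=l$ and $\rho_1\otimes\rho_2=\rho^{\otimes(k+l)}$, $\sigma_1\otimes\sigma_2=\sigma^{\otimes(k+l)}$), which yields
\begin{align*}
\CompMaxDiv^{p(k+l)}(\rho^{\otimes(k+l)}\|\sigma^{\otimes(k+l)})\ \geq\ \CompMaxDiv^{p(k)}(\rho^{\otimes k}\|\sigma^{\otimes k}) + \CompMaxDiv^{p(l)}(\rho^{\otimes l}\|\sigma^{\otimes l})\, .
\end{align*}
The one point to be careful about here is the bookkeeping of the polynomial circuit budget: since $p$ is strictly superadditive, $p(k+l)\geq p(k)+p(l)+1$, so the tensor product of a $p(k)$-efficient effect operator with a $p(l)$-efficient one is an admissible element of $\mathbf{E}_{k+l}^{\mathrm{eff}}$, which is precisely what makes the proof of \Cref{lem:superadd.max} go through for the $m$-copy cones. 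This is the main (though mild) obstacle; everything else is immediate.

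Finally I would apply the superadditive form of Fekete's lemma: for any superadditive sequence $(a_m)_{m\geq 1}$ with values in $[0,+\infty]$, the limit $\lim_{m\to\infty}a_m/m$ exists in $[0,+\infty]$ and equals $\sup_{m\geq 1}a_m/m$; when $\rho\ll\sigma$ this supremum is moreover finite, being bounded by $D_\text{max}(\rho\|\sigma)$ via $a_m\le D_\text{max}(\rho^{\otimes m}\|\sigma^{\otimes m})=m\,D_\text{max}(\rho\|\sigma)$. Consequently
\begin{align*}
\lim_{m\to\infty}\frac{1}{m}\CompMaxDiv(\rho^{\otimes m}\|\sigma^{\otimes m})\ =\ \sup_{m>0}\frac{1}{m}\CompMaxDiv(\rho^{\otimes m}\|\sigma^{\otimes m})\, ,
\end{align*}
and since the $\limsup$ of a convergent sequence equals its limit, the regularized computational max-divergence $\CompMaxDiv^{\infty}(\rho\|\sigma)$ --- defined as the $\limsup$ over copies in this simple i.i.d.\ case --- coincides with both expressions, which completes the argument.
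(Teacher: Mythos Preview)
Your proposal is correct and follows essentially the same approach as the paper: the paper's proof simply says it follows from the super-additivity property of the computational max-divergence (\Cref{lem:superadd.max}) together with Fekete's lemma, which is exactly what you do. Your additional remarks on the circuit-budget bookkeeping, finiteness when $\rho\ll\sigma$, and the trivial $+\infty$ case are helpful elaborations but do not change the underlying argument.
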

\begin{proof}
    The proof follows from the super-additivity property of the computational max-divergence (see \Cref{lem:superadd.max}) together with Fekete's lemma.
\end{proof}
\begin{definition}[Regularized Computational Measured {\Renyi} Divergences]
    Given two families of states $\{\rho_k\}_{k\in\NN},\{\sigma_k\}_{k\in\NN} \in \{\Ss(\HH^{\otimes k})\}_{k \in \NN}$, for any $\alpha \in (0,1) \cup (1,\infty)$ their \term{regularized computational measured {\Renyi} divergence} when $m(k) \in \poly(k)$ is given by,
    \begin{align}
        \CompDiv_{\alpha}^{\Meff, \infty}(\{\rho_k\}_k \|\{\sigma_k\}_k) \coloneqq   \limsup_{  k \rightarrow \infty} \frac{1}{m(k)}  \CompDiv_{\alpha}^{\Meff_{k\cdot m(k)}}(\rho_k^{\otimes m(k)} \| \sigma_k^{\otimes m(k)})   \; .
    \end{align}
\end{definition}

\begin{definition}[Regularized Computational Measured Relative Entropy]
    Given two families of quantum states $\{\rho_k\}_{k\in\NN},\{\sigma_k\}_{k\in\NN} \in \{\Ss(\HH^{\otimes k})\}_{k \in \NN}$, their \term{regularized computational measured relative entropy} is given by,
    \begin{align}
        \CompDiv_{}^{\Meff, \infty}(\{\rho_k\}_k \|\{\sigma_k\}_k) \coloneqq   \limsup_{  k \rightarrow \infty} \frac{1}{m(k)} \CompDiv^{\Meff_{k\cdot m(k)}}(\rho_k^{\otimes m(k)} \| \sigma_k^{\otimes m(k)})   \; .
    \end{align}
\end{definition}

\begin{rem}
In fact, for binary measurements super-additivity fails for general $\alpha \neq \infty$~\cite{MH23}. It remains an interesting open question whether the regularized computational measured relative entropy can be equivalently  expressed as a supremum, similar to the computational max-divergence.
\end{rem}
We conclude by relating the various regularized divergences to each other.
\begin{lem}  \label{lem: regularizeddivergenceineq}
For any two  families of quantum states $\{\rho_k\}_{k\in\NN},\{\sigma_k\}_{k\in\NN} \in \{\Ss(\HH^{\otimes k})\}_{k \in \NN}$, $\alpha\in (0,1)$, and ${\beta\in (1,\infty)}$, it holds that
\begin{align}
\CompDiv^{\Meff, \infty}_{\alpha}
(\{\rho_k\}_k \|\{\sigma_k\}_k) \leq \CompDiv^{\Meff,\infty}_{}
(\{\rho_k\}_k \|\{\sigma_k\}_k) \leq \CompDiv^{\Meff,\infty}_{\beta}
(\{\rho_k\}_k \|\{\sigma_k\}_k) \leq \CompDiv_{\max}^{\infty}(\{\rho_k\}_k \|\{\sigma_k\}_k)\; .
\end{align}
\end{lem}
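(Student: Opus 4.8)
The plan is to reduce the whole statement to a pointwise (fixed\nobreakdash-$k$) chain of inequalities between finite\nobreakdash-block quantities and then pass to the $\limsup$. Fix $k\in\NN$, write $n\coloneqq k\cdot m(k)$, and set $\rho\coloneqq\rho_k^{\otimes m(k)}$, $\sigma\coloneqq\sigma_k^{\otimes m(k)}\in\States(\HH^{\otimes n})$. The key input is the monotonicity in $\alpha$ of the computational measured {\Renyi} divergences, \Cref{lem: MeasRenDivProp}(2); since by \Cref{def:comp.rel.max.diver} we have $\CompDiv^{\Meff_n}=\lim_{\alpha\nearrow 1}\CompDiv^{\Meff_n}_{\alpha}$ and $\CompDiv^{\Meff_n}_{\max}=\lim_{\alpha\nearrow\infty}\CompDiv^{\Meff_n}_{\alpha}$, this monotonicity extends to the endpoints $\alpha\in\{1,\infty\}$, so for any $\alpha\in(0,1)$ and $\beta\in(1,\infty)$ one obtains
\begin{align*}
\CompDiv^{\Meff_n}_{\alpha}(\rho\|\sigma)\;\le\;\CompDiv^{\Meff_n}(\rho\|\sigma)\;\le\;\CompDiv^{\Meff_n}_{\beta}(\rho\|\sigma)\;\le\;\CompDiv^{\Meff_n}_{\max}(\rho\|\sigma)\; .
\end{align*}
(If some of these quantities equal $+\infty$, the inequalities are understood in $[0,\infty]$ and remain valid.)

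Next I would identify the rightmost term with the conic max\nobreakdash-divergence. Since $\rho,\sigma$ are genuine states, \Cref{thm:measured=conic} applies and gives $\CompDiv^{\Meff_n}_{\max}(\rho\|\sigma)=\CompMaxDiv(\rho\|\sigma)$, which is exactly the finite\nobreakdash-block quantity whose normalized $\limsup$ defines $\CompMaxDiv^{\infty}(\{\rho_k\}_k\|\{\sigma_k\}_k)$. Thus the pointwise chain above reads, after dividing by $m(k)>0$,
\begin{align*}
\tfrac{1}{m(k)}\CompDiv^{\Meff_n}_{\alpha}(\rho\|\sigma)\;\le\;\tfrac{1}{m(k)}\CompDiv^{\Meff_n}(\rho\|\sigma)\;\le\;\tfrac{1}{m(k)}\CompDiv^{\Meff_n}_{\beta}(\rho\|\sigma)\;\le\;\tfrac{1}{m(k)}\CompMaxDiv(\rho\|\sigma)\; .
\end{align*}

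Finally I would take $\limsup_{k\to\infty}$ of each side. Using that $\limsup$ is monotone (if $a_k\le b_k$ for all $k$ then $\limsup_k a_k\le\limsup_k b_k$) and that each of the four normalized families is precisely the one appearing in the definitions of $\CompDiv^{\Meff,\infty}_{\alpha}$, $\CompDiv^{\Meff,\infty}$, $\CompDiv^{\Meff,\infty}_{\beta}$, and $\CompMaxDiv^{\infty}$ respectively, the claimed chain of inequalities follows immediately. The argument is essentially bookkeeping: the only minor points to check are the extension of \Cref{lem: MeasRenDivProp}(2) to the endpoints $\alpha\in\{1,\infty\}$ (immediate from the limiting relations in \Cref{def:comp.rel.max.diver}) and the applicability of \Cref{thm:measured=conic} to the tensor\nobreakdash-power states $\rho_k^{\otimes m(k)},\sigma_k^{\otimes m(k)}$ (immediate, as these are states on $\HH^{\otimes k\cdot m(k)}$). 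There is no substantive obstacle here — all the content resides in \Cref{lem: MeasRenDivProp} and \Cref{thm:measured=conic}, which were already established.
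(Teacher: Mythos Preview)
Your proof is correct and follows essentially the same approach as the paper, which simply invokes the monotonicity in $\alpha$ from \Cref{lem: MeasRenDivProp}; you have merely spelled out the bookkeeping (passage to the endpoints $\alpha\in\{1,\infty\}$, the identification $\CompDiv^{\Meff_n}_{\max}=\CompMaxDiv$ via \Cref{thm:measured=conic}, and taking the $\limsup$) that the paper leaves implicit.
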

\begin{proof}
    This follows directly from the monotonicity property provided in \Cref{lem: MeasRenDivProp}.
\end{proof}

\subsection{Computational Fidelity}\label{sec:fidelity}
In this section we define the \term{computational (binary) fidelity} using the approach from~\cite{RKW11}. This is the computational analog to the \term{fidelity}~\cite{T16,Uhl76,Fuc96} between two post-measurement single-bit outputs and measures the similarity between two quantum states when restricted to binary efficient measurements. Additionally, we show that the resulting fidelity is effectively the computational measured {\Renyi} divergence for $\alpha=\frac{1}{2}$.  Lastly, we relate it to the \emph{computational trace distance}  and the \emph{computational max-divergence}, following the techniques from~\cite{RKW11}.

\begin{definition}[Computational (Binary) Fidelity] 
 Given any two quantum states $\rho,\sigma\in \States(\HH^{\otimes n})$ the \term{computational (binary) fidelity} is given by
\begin{align}
    \CompFid(\rho, \sigma) \coloneqq \inf_{M = (E_1,E_2) \in \overlineMeff_n}  \left(\sum_{i = 1}^{2} \sqrt{\Tr[E_i\rho]\Tr[E_i\sigma]} \right)^2\; .
\end{align}    
\end{definition}
\begin{lem}[Computational Fidelity and Computational Measured Rényi Divergence]\label{lem:fid.comp.Renyi}
Given two quantum states $\rho,\sigma\in \States(\HH^{\otimes n})$, it follows that
 \begin{align}
    \CompDiv_{1/2}^{\Meff_n}(\rho \|\sigma) = -\log \CompFid(\rho, \sigma) \,.
\end{align}
\end{lem}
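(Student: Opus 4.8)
The plan is to simply unwind both definitions and recognize the classical fidelity in between. First I would observe that for any binary POVM $M = (E_1,E_2) \in \Meff_n$ the measurement channel outputs the commuting (diagonal) states $\MM_M(\rho) = \Tr[E_1\rho]\ketbra{0}{0} + \Tr[E_2\rho]\ketbra{1}{1}$ and likewise $\MM_M(\sigma) = \Tr[E_1\sigma]\ketbra{0}{0} + \Tr[E_2\sigma]\ketbra{1}{1}$. Since these commute, the sandwiched R\'enyi divergence collapses to the classical one, and using $\SandRenyi_{1/2}(p\|q) = -\log F(p,q)$ on classical states (see \Cref{Sec: InfoTheorConc}) together with the classical fidelity formula, one gets
\begin{align*}
\SandRenyi_{1/2}(\MM_M(\rho)\|\MM_M(\sigma)) = -2\log\sum_{i=1}^2\sqrt{\Tr[E_i\rho]\Tr[E_i\sigma]} = -\log\left(\sum_{i=1}^2\sqrt{\Tr[E_i\rho]\Tr[E_i\sigma]}\right)^{2} \, ,
\end{align*}
with the convention that the right-hand side is $+\infty$ precisely when $\MM_M(\rho)\perp\MM_M(\sigma)$ (which is the $\rho\not\perp\sigma$ branch of \Cref{def:sandwiched.renyi} read on classical states).

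Next I would take the supremum over $M \in \Meff_n$. Because $x \mapsto -\log x$ is continuous and strictly decreasing on $(0,\infty)$ (and sends $0$ to $+\infty$), the supremum passes through the logarithm as an infimum:
\begin{align*}
\CompDiv_{1/2}^{\Meff_n}(\rho\|\sigma) = \sup_{M = (E_1,E_2) \in \Meff_n}\SandRenyi_{1/2}(\MM_M(\rho)\|\MM_M(\sigma)) = -\log \inf_{M = (E_1,E_2) \in \Meff_n}\left(\sum_{i=1}^2\sqrt{\Tr[E_i\rho]\Tr[E_i\sigma]}\right)^{2} \, .
\end{align*}
Finally I would invoke \Cref{lem: equivalanceMandConvM} with $\mathbb{D} = \SandRenyi_{1/2}$ (admissible since $1/2 \in (0,1)$), which yields $\CompDiv_{1/2}^{\Meff_n}(\rho\|\sigma) = \CompDiv_{1/2}^{\overlineMeff_n}(\rho\|\sigma)$; equivalently, the infimum above may be extended from $\Meff_n$ to its convex hull $\overlineMeff_n$ without changing its value. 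By definition that infimum over $\overlineMeff_n$ is exactly $\CompFid(\rho,\sigma)$, so $\CompDiv_{1/2}^{\Meff_n}(\rho\|\sigma) = -\log\CompFid(\rho,\sigma)$.

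I do not expect any genuine obstacle here: the statement is essentially a dictionary translation between the two definitions. The only two points that need a sentence of justification are (i) that $\SandRenyi_{1/2}$ evaluated on the commuting post-measurement pair equals minus the logarithm of the classical fidelity of the outcome distributions, and (ii) the appeal to \Cref{lem: equivalanceMandConvM} in order to identify the optimization over efficient binary POVMs $\Meff_n$ with the one over their convex hull $\overlineMeff_n$ appearing in the definition of $\CompFid$. Degenerate cases are automatically consistent: if $\CompFid(\rho,\sigma) = 0$, i.e. some efficient binary measurement distinguishes $\rho$ and $\sigma$ perfectly, then both sides equal $+\infty$.
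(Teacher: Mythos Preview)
Your proposal is correct and follows essentially the same route as the paper's proof: both unwind the definition of $\SandRenyi_{1/2}$ on the classical post-measurement distributions to obtain the squared Bhattacharyya sum, and both invoke \Cref{lem: equivalanceMandConvM} to pass between the optimization over $\Meff_n$ and over $\overlineMeff_n$. The paper routes the computation through the $\CompQ_{1/2}^{\Meff_n}$ notation rather than writing out the diagonal states explicitly, but this is purely cosmetic.
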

\begin{proof}
 The proof can be found in \Cref{app:fid.comp.Renyi}
\end{proof}
Since the computational fidelity is related to the computational measured {\Renyi} divergence for $\alpha=\frac{1}{2}$, due to \Cref{lem: equivalanceMandConvM} it automatically follows that 
\begin{align}
    \inf_{M = (E_1,E_2) \in \overlineMeff_n}  \left(\sum_{i = 1}^{2} \sqrt{\Tr[E_i\rho]\Tr[E_i\sigma]} \right)^2 = \inf_{M = (E_1,E_2) \in \Meff_n}  \left(\sum_{i = 1}^{2} \sqrt{\Tr[E_i\rho]\Tr[E_i\sigma]} \right)^2 \; ,
\end{align}
meaning that one could have equivalently defined the computational fidelity by solely optimizing over measurements $\Meff_{n}$.
The computational fidelity and computational trace distance additionally straightforwardly satisfy a Fuchs-van de Graaf type inequality.

\begin{lem}[Computational Fuchs-van de Graaf]
\label{lem:comp.fuchs.van.Graaf}
 Given two quantum states $\rho,\sigma \in \Ss(\HH^{\otimes n}) $, it holds that
 \begin{align}
     1-\sqrt{\CompFid(\rho, \sigma)} \leq \CompTrDis(\rho, \sigma) \leq \sqrt{ 1-\CompFid(\rho, \sigma)} 
 \end{align}
\end{lem}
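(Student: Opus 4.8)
The plan is to reduce everything to the ordinary (information–theoretic) Fuchs–van de Graaf inequality applied to the post-measurement states, and then push the resulting per-measurement inequalities through the appropriate supremum/infimum. The two quantities in the statement are optimizations over the \emph{same} family of binary measurements: by \Cref{lem: comptrnormEeffred}, $\CompTrDis(\rho,\sigma)=\sup_{M\in\overlineMeff_n}\tfrac12\|\MM_M(\rho)-\MM_M(\sigma)\|_1$, while by definition $\CompFid(\rho,\sigma)=\inf_{M\in\overlineMeff_n}F(\MM_M(\rho),\MM_M(\sigma))$, where for a binary POVM $M=(E_1,E_2)$ the post-measurement objects $\MM_M(\rho)$ and $\MM_M(\sigma)$ are the (classical) states on $\CC^2$ with entries $\Tr[E_i\rho]$ and $\Tr[E_i\sigma]$. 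Since these are in particular quantum states, the standard Fuchs–van de Graaf inequality applies to them verbatim.

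Concretely, first I would fix an arbitrary $M\in\overlineMeff_n$ and write $\mu_M\coloneqq\MM_M(\rho)$, $\nu_M\coloneqq\MM_M(\sigma)$. The standard Fuchs–van de Graaf inequality gives
\[
1-\sqrt{F(\mu_M,\nu_M)}\;\le\;\tfrac12\|\mu_M-\nu_M\|_1\;\le\;\sqrt{1-F(\mu_M,\nu_M)}\,.
\]
For the upper bound: since $F(\mu_M,\nu_M)\ge\inf_{M'\in\overlineMeff_n}F(\mu_{M'},\nu_{M'})=\CompFid(\rho,\sigma)$, the right-hand inequality yields $\tfrac12\|\mu_M-\nu_M\|_1\le\sqrt{1-\CompFid(\rho,\sigma)}$ for every $M$; taking the supremum over $M$ and invoking \Cref{lem: comptrnormEeffred} gives $\CompTrDis(\rho,\sigma)\le\sqrt{1-\CompFid(\rho,\sigma)}$. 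For the lower bound: since $\tfrac12\|\mu_M-\nu_M\|_1\le\CompTrDis(\rho,\sigma)$ (again \Cref{lem: comptrnormEeffred}), the left-hand inequality gives $1-\sqrt{F(\mu_M,\nu_M)}\le\CompTrDis(\rho,\sigma)$ for every $M$; taking the supremum over $M$ on the left and using that $\sup_M\bigl(1-\sqrt{F(\mu_M,\nu_M)}\bigr)=1-\sqrt{\inf_M F(\mu_M,\nu_M)}=1-\sqrt{\CompFid(\rho,\sigma)}$ (monotonicity and continuity of $\sqrt{\cdot}$) gives the claimed lower bound.

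There is no substantial obstacle here; the proof is essentially bookkeeping once one notices that $\CompTrDis$ and $\CompFid$ range over the same set $\overlineMeff_n$ of binary measurements. The only point of care is that the measurement (near-)optimal for $\CompFid$ need not be the one optimal for $\CompTrDis$; this is handled by first bounding uniformly in $M$ (replacing $F(\mu_M,\nu_M)$ by $\CompFid(\rho,\sigma)$, resp.\ $\tfrac12\|\mu_M-\nu_M\|_1$ by $\CompTrDis(\rho,\sigma)$) and only then taking the supremum. If one prefers not to invoke the quantum Fuchs–van de Graaf inequality on classical states, one can instead use the elementary binary version: for $p=(p_0,1-p_0)$, $q=(q_0,1-q_0)$ one has $\tfrac12\|p-q\|_1=|p_0-q_0|$ and $F(p,q)=(\sqrt{p_0q_0}+\sqrt{(1-p_0)(1-q_0)})^2$, and the two bounds follow by a short direct computation, without changing the structure above.
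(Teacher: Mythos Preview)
Your proof is correct and takes essentially the same approach as the paper: apply the standard Fuchs--van de Graaf inequality to the post-measurement states $\MM_M(\rho),\MM_M(\sigma)$ for each fixed $M$, then optimize over $M$ using that $\CompTrDis$ and $\CompFid$ are, respectively, a supremum and an infimum of the corresponding measured quantities over the same family of binary POVMs. Your write-up is slightly more explicit about the sup/inf bookkeeping, but the argument is identical in substance to the paper's.
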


\begin{proof}
    These inequalities follow directly from~\cite[Proposition 19]{RKW11} applied to our family of informationally complete polynomially generated set of binary POVMs $\Meff_n$. For completeness, we include a proof in \Cref{app:comp.fuchs.van.Graaf}.
\end{proof}
We conclude this section by relating the fidelity to the computational max-divergence. Applying the computational Fuchs-van de Graaf inequality to this bound then yields a result similar to \Cref{lem:comp.max.norm}.
\begin{lem}
\label{lem:fid.max.div}
Given two quantum states $\rho,\sigma\in \States(\HH^{\otimes n})$, we have that
\begin{align}
 \CompFid(\rho,\sigma) \geq 2^{-\CompDiv_{\textup{max}}(\rho \|\sigma)} \; .
\end{align}
\end{lem}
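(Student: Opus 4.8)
The plan is to derive the bound directly from three results already in the excerpt: the identification of the computational fidelity with the $\alpha=\tfrac12$ computational measured {\Renyi} divergence (\Cref{lem:fid.comp.Renyi}), the monotonicity of the computational measured {\Renyi} divergences in $\alpha$ (\Cref{lem: MeasRenDivProp}), and the equivalence of the conic computational max-divergence with the $\alpha\to\infty$ computational measured {\Renyi} divergence (\Cref{thm:measured=conic}).

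Concretely, I would first use \Cref{lem:fid.comp.Renyi} to rewrite the left-hand side as $-\log\CompFid(\rho,\sigma)=\CompDiv^{\Meff_n}_{1/2}(\rho\|\sigma)$. Since $\tfrac12\le\alpha$ for every $\alpha\in[\tfrac12,\infty)$, monotonicity in $\alpha$ gives $\CompDiv^{\Meff_n}_{1/2}(\rho\|\sigma)\le\CompDiv^{\Meff_n}_{\alpha}(\rho\|\sigma)$, and taking $\alpha\to\infty$ together with \Cref{def:comp.rel.max.diver} yields $\CompDiv^{\Meff_n}_{1/2}(\rho\|\sigma)\le\CompDiv^{\Meff_n}_{\text{max}}(\rho\|\sigma)$. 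Finally, \Cref{thm:measured=conic} identifies $\CompDiv^{\Meff_n}_{\text{max}}(\rho\|\sigma)=\CompMaxDiv(\rho\|\sigma)$, so $-\log\CompFid(\rho,\sigma)\le\CompMaxDiv(\rho\|\sigma)$; exponentiating (base $2$, using that $x\mapsto 2^{-x}$ is decreasing) then gives $\CompFid(\rho,\sigma)\ge 2^{-\CompMaxDiv(\rho\|\sigma)}$. The degenerate case $\CompMaxDiv(\rho\|\sigma)=\infty$ is immediate since then the right-hand side is $0$.

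Since every ingredient is already available, I do not expect a genuine obstacle; the only step meriting a word of care is the $\alpha\to\infty$ limit, where one must invoke that it exists and equals $\CompDiv^{\Meff_n}_{\text{max}}$ — which is exactly \Cref{def:comp.rel.max.diver} (cf.~\cite[Lemma~2]{RSB24}). If one prefers a self-contained argument avoiding both the limit and \Cref{lem:fid.comp.Renyi}, I would instead proceed directly: for any $\lambda>2^{\CompMaxDiv(\rho\|\sigma)}$ one has $\rho\le_{\DualEffCone}\lambda\sigma$, so $\Tr[E_i\rho]\le\lambda\Tr[E_i\sigma]$ for each effect $E_i$ of any binary POVM $M=(E_1,E_2)\in\overlineMeff_n$; hence $\sqrt{\Tr[E_i\rho]\Tr[E_i\sigma]}\ge\Tr[E_i\rho]/\sqrt\lambda$, and summing over $i$ with $\Tr[E_1\rho]+\Tr[E_2\rho]=1$ shows $\bigl(\sum_{i=1}^2\sqrt{\Tr[E_i\rho]\Tr[E_i\sigma]}\bigr)^2\ge 1/\lambda$ for every such $M$, whence $\CompFid(\rho,\sigma)\ge 1/\lambda$; letting $\lambda\downarrow 2^{\CompMaxDiv(\rho\|\sigma)}$ concludes.
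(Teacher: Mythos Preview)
Your main argument is correct and is essentially identical to the paper's proof: both combine \Cref{lem:fid.comp.Renyi}, monotonicity in $\alpha$ from \Cref{lem: MeasRenDivProp}, and \Cref{thm:measured=conic} to obtain $\CompDiv^{\Meff_n}_{1/2}(\rho\|\sigma)\le\CompMaxDiv(\rho\|\sigma)$ and then exponentiate. Your alternative direct argument via $\Tr[E_i\rho]\le\lambda\Tr[E_i\sigma]$ is also valid and offers a pleasant self-contained route that avoids the $\alpha\to\infty$ limit, but the paper does not pursue it.
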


\begin{proof}
By monotonicity of the measured {\Renyi} $\alpha$ divergences, see \Cref{lem: MeasRenDivProp}, and \Cref{thm:measured=conic} it follows that
\begin{align*}
     \CompDiv_{1/2}^{\Meff_n}(\rho \|\sigma) \leq  \CompDiv_{\max}^{\Meff_n}(\rho \|\sigma) = \CompDiv_{\text{max}}(\rho \|\sigma)\;.
\end{align*}
Applying \Cref{lem:fid.comp.Renyi}, one then finds that
\begin{align*}
    \CompFid(\rho,\sigma) \geq 2^{-\CompDiv_{\text{max}}(\rho \|\sigma)}\; .
\end{align*}
\end{proof}
The previous bounds allow us to derive another relation between the computational max-divergence and the computational trace distance.
\begin{cor}
\label{cor:comp.dist.max.div}
Given two quantum states $\rho,\sigma\in \States(\HH^{\otimes n})$, we have that
\begin{align}
\CompMaxDiv(\rho \| \sigma) \geq -\log\left(1 - \left(\CompTrDis \left(\rho,\sigma\right)\right)^2\right) \; .
\end{align}
\end{cor}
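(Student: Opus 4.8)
The plan is to obtain the bound by concatenating \Cref{lem:fid.max.div} with the upper bound in the computational Fuchs--van de Graaf inequality (\Cref{lem:comp.fuchs.van.Graaf}). First I would invoke \Cref{lem:fid.max.div}, which gives $\CompFid(\rho,\sigma) \geq 2^{-\CompMaxDiv(\rho \|\sigma)}$ (using \Cref{thm:measured=conic} to identify the conic and measured max-divergences), and apply the monotone-decreasing map $-\log$ to both sides to get
\begin{align*}
\CompMaxDiv(\rho\|\sigma) \;\geq\; -\log \CompFid(\rho,\sigma)\;.
\end{align*}
Next I would take the right-hand inequality of \Cref{lem:comp.fuchs.van.Graaf}, namely $\CompTrDis(\rho,\sigma) \leq \sqrt{\,1-\CompFid(\rho,\sigma)\,}$, square it and rearrange to $\CompFid(\rho,\sigma) \leq 1 - \left(\CompTrDis\left(\rho,\sigma\right)\right)^2$, and apply $-\log$ once more, obtaining
\begin{align*}
-\log \CompFid(\rho,\sigma) \;\geq\; -\log\!\left(1 - \left(\CompTrDis\left(\rho,\sigma\right)\right)^2\right)\;.
\end{align*}
Chaining the two displays gives the claimed inequality.

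The only point worth a remark is the degenerate case $\CompTrDis(\rho,\sigma)=1$: there the squared Fuchs--van de Graaf bound forces $\CompFid(\rho,\sigma)=0$ (the computational fidelity is nonnegative), so both sides are $+\infty$ and the statement holds vacuously; for $\CompTrDis(\rho,\sigma)<1$ one has $1 - \left(\CompTrDis\left(\rho,\sigma\right)\right)^2 \in (0,1]$, so the logarithm is well-defined and every step above is legitimate. I do not expect any genuine obstacle: the statement is a direct corollary of the two preceding lemmas, and the only mild subtlety is keeping track of the direction of the (decreasing) logarithm and the edge case above. For comparison, this bound is incomparable to \Cref{lem:comp.max.norm}---it is weaker for $\CompTrDis(\rho,\sigma)$ near $0$ but strictly stronger (indeed diverging) as $\CompTrDis(\rho,\sigma)\to 1$.
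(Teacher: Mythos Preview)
Your proof is correct and follows exactly the approach taken in the paper: combine \Cref{lem:fid.max.div} with the upper inequality of the computational Fuchs--van de Graaf (\Cref{lem:comp.fuchs.van.Graaf}). Your closing comparison with \Cref{lem:comp.max.norm} also matches the paper's own remark following the corollary.
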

\begin{proof}
   This follows from \Cref{lem:fid.max.div} together with the computational Fuchs-van de Graaf inequality provided in \Cref{lem:comp.fuchs.van.Graaf}.
\end{proof}
We provide a similar bound in \Cref{lem:comp.max.norm}. For $\CompTrDis \left(\rho,\sigma\right)\approx 0$, \Cref{lem:comp.max.norm} is tighter. Conversely, for $\CompTrDis \left(\rho,\sigma\right)\approx 1$, we find that \Cref{cor:comp.dist.max.div} provides a substantially tighter bound.

\subsection{An explicit cryptographic separation}\label{sec:example}
We now exhibit an explicit pair of states that witnesses a sharp separation between the \emph{computational} and \emph{informational} divergences. The same calculations can be extended to other cryptographic constructions which exhibit a gap between the computational trace distance and the informational counterpart. Nevertheless, due to its simplicity, we focus on the pseudoentanglement construction proposed by~\cite{GE24}. Let
\begin{align}
    \psi^{AB}_n &\coloneqq \frac{1}{4}\left(\ketbra{\Phi^+}{\Phi^+}^{AB}+ \ketbra{\Phi^-}{\Phi^-}^{AB}\right)\otimes \left(\rho^A_{0,n}+\rho^A_{1,n}\right)\, , \\
    \phi^{AB}_n &\coloneqq \frac{1}{2}\left(\ketbra{\Phi^+}{\Phi^+}^{AB}\otimes\rho^A_{0,n} + \ketbra{\Phi^-}{\Phi^-}^{AB}\otimes \rho^A_{1,n}\right)\,,
\end{align}
where $\ket{\Phi^+}^{AB} = \frac{1}{\sqrt{2}}( \ket{00}^{AB}+ \ket{11}^{AB})$, $\ket{\Phi^-}^{AB} = \frac{1}{\sqrt{2}}( \ket{00}^{AB}-\ket{11}^{AB})$   and  $\{\rho^A_{0,n},\rho^A_{1,n}\}_{n \in \NN}$ are EFI pairs.\footnote{An EFI pair~\cite{BCQ23} consists of two efficiently generated families of quantum states which are far in trace distance ($\Delta(\rho_{0, n}, \rho_{1, n}) \approx1 $), which are indistinguishable by computationally bounded adversaries given $m \in \poly(n)$ copies , i.e., $ \left| \probP\left[\mathcal{D}(\sigma_{n},\rho_{n,0}^{\otimes m}) = 1] - \probP [\mathcal{D}( \sigma_{n}, \rho_{n,1}^{\otimes m}) = 1\right]  \right| \leq \nu (n)$.}

Assuming EFI pairs exist,~\cite{GE24} proves \emph{poly-copy} computational indistinguishability of the two families, which in our notation implies
\begin{align*}
    \CompTrDis(\psi^{AB}_n, \phi^{AB}_n) \leq \negl(n) \,.
\end{align*}

Therefore, by the computational Fuchs–van de Graaf inequality (\Cref{lem:comp.fuchs.van.Graaf}) we obtain
\begin{align}
    1-\negl(n) \leq \CompFid(\psi^{AB}_n, \phi^{AB}_n) \leq 1\, .
\end{align}

This allows us to calculate the following upper bound for the computational measured Rényi divergence for $\alpha \in (0,1/2)$,
\begin{align}
     \CompDiv^{\Meff_n}_{\alpha} (\psi^{AB}_n, \phi^{AB}_n) \leq -\log( 1-\negl(n))\; ,
\end{align}
which follows from \Cref{lem:fid.comp.Renyi} together with the monotonicity property of the measured Rényi divergences (\Cref{lem: MeasRenDivProp}).
Nevertheless, in the information-theoretic setting one has, by the Pinsker inequality, e.g.,~\cite{Gi10}, that
\begin{align*}
     D_{\alpha} (\psi^{AB}_n, \phi^{AB}_n) \gtrsim \frac{\min\{1,\alpha\}}{8\ln 2} \,,
\end{align*}
since the information-theoretic trace distance satisfies $\Delta(\psi^{AB}_n, \phi^{AB}_n) \approx 1/2$ and the fidelity is bounded away from $1$:
\begin{align}
         \frac{1}{4} \leq F(\psi^{AB}_n, \phi^{AB}_n) \leq \frac{3}{4} \, .
\end{align}

Therefore, under standard hardness assumptions, the computational measures collapse \\ $(\CompTrDis(\psi^{AB}_n, \phi^{AB}_n)\leq\negl(n)$, $1-\negl(n)\leq\CompFid(\psi^{AB}_n, \phi^{AB}_n)\leq1$, and $\CompDiv_{\alpha}(\psi^{AB}_n, \phi^{AB}_n)^{\Meff_n}\leq\negl(n)$ for $\alpha\le \tfrac12$), whereas their informational counterparts are $\Theta(1)$.\footnote{Indeed, the gap can be polynomially amplified by taking poly-many copies of pseudoentangled states, as proven in~\cite[Lemma 5.1]{GE24}}

\section{Applications}
\label{Sec:applications}
In this section we develop applications for the computational divergences introduced above.
In \Cref{sec:comp.hyp.test} we revisit the quantum Stein’s lemma under computational constraints (\Cref{thm:qpt-stein}) and show that the optimal one-sided error exponent is upper bounded by the regularized computational measured relative entropy, thereby giving it an operational meaning (see \Cref{def:comp.rel.max.diver}). In \Cref{sec:comp.res.th} we define computational quantum resources based on our computational divergences and establish an asymptotic continuity bound for the so-called computational measured relative entropy of a resource (\Cref{th:comptfannes}). Finally, further focusing on entanglement (\Cref{sec:entanglement}), we introduce the \term{computational measured relative entropy of entanglement} and relate it to the previously proposed computational entanglement measures from~\cite{ABV23} in \Cref{lem: comp.ent.example}.

\subsection{Computational Hypothesis Testing}
\label{sec:comp.hyp.test}
Let us consider the problem of binary hypothesis testing, where one is tasked with discriminating between two input states. The main result of this section is a rigorous study of the optimal type II error decay rate in asymmetric hypothesis testing, commonly known as the Stein's exponent, when restricted to efficient binary measurements~\cite{HP91,ON00}.

In binary quantum hypothesis testing, the goal is to discriminate between two quantum states; one of the states, $\rho$, is called the \term{null hypothesis $H_0$} and the other state, $\sigma$, is known as the \term{alternate hypothesis $H_1$}. While in the information-theoretic setting one typically studies two fixed states $\rho$ and $\sigma$, in our complexity constrained framework the studied objects are families of states $\{\rho_k\}_{k \in \mathbb{N}}$ and $ \{\sigma_k\}_{k \in \mathbb{N}}$ of which we are given $m(k)\in\poly(k)$ polynomially many copies.

Thus, we study the distinguishing rate in the asymptotic limit, i.e., when $k \rightarrow \infty$ given $m(k)$ copies of $\rho_k,\sigma_k \in \States(\HH^{\otimes k})$ under the restriction that the observer is constrained to implement efficient binary measurements. We say that the observer determines the system to be in state $\rho_k$, instead of $\sigma_k$ if the performed measurement $M$ outputs $0$. Conversely, if the output is $1$, the observer assumes that they initially held $m(k)$ copies of $\sigma_k$.

For any $k, m(k) \in \NN$, let $\Meff_{k\cdot m(k)}$ denote the set of efficient binary measurements acting on $\HH^{\otimes k \cdot m(k)}$. For any feasible binary measurement ${M = ( E_{k\cdot m(k)},\id -E_{k\cdot m(k)}) \in \Meff_{k\cdot m(k)}}$, there are two types of errors that arise:
\begin{itemize}
    \item \textbf{Type-I error}: the probability of incorrectly rejecting the null hypothesis (i.e., mistakenly inferring $\sigma_k$ when the state is actually $\rho_k$). It is hence defined as
    \begin{align}
        \alpha_{m(k)}(M) \coloneqq \Tr [ \rho_k^{\otimes m(k)}(\id - E_{k\cdot m(k)} )] \; .
    \end{align}
    \item \textbf{Type-II error}: the probability of incorrectly accepting the null hypothesis (i.e., mistakenly inferring $\rho_k$ when the state is actually $\sigma_k$). It is given by
    \begin{align}
        \beta_{m(k)}(M) \coloneqq \Tr[\sigma_k^{\otimes m(k)} E_{k\cdot m(k)} ]  \; .
        \label{eq:type2err}
    \end{align}
\end{itemize}

We are primarily interested in the \emph{asymmetric} setting, where the aim is to minimize the Type-II error while the Type-I error is constrained to not exceed a prescribed threshold $\varepsilon \in (0,1)$~\cite{HP91,ON00}. Formally, one defines
\begin{align}\label{equ:def:optimaltypeIIerror}
    \beta^{\varepsilon}_{m(k)} (\Meff_{k \cdot m(k)}) \coloneqq \min_{M \in \Meff_{k \cdot m(k)} }\{ \beta_{m(k)}(M) \, | \, \alpha_{m(k)}(M) \leq \varepsilon\} \; ,
\end{align}
to be the optimal type-II error given a type-I error bound by $\varepsilon$. As mentioned above, our goal is to characterize the scaling of $\beta^{\varepsilon}_{m(k)} (\Meff_{k \cdot m(k)})$ in the asymptotic regime. In this setting, the optimal Type-II error probability is expected to decay exponentially with the number of copies $m(k) = \OO(\poly (n))$, i.e., $ \beta_{m(k)}^{\varepsilon}(\Meff_{k \cdot m(k) }) \sim 2^{-\xi^{\mathrm{eff}} m(k) + o(m(k))}$,
where the optimal (asymptotic) error exponent can be expressed as,
\begin{align}   
  \xi^{\mathrm{eff}} = \lim_{\varepsilon \rightarrow 0} \limsup_{ k \rightarrow \infty}  -\frac{1}{m(k)} \log \beta^{\varepsilon}_{m(k)} (\Meff_{ k\cdot m(k) })  \; .
\end{align}

If one adopts the approach from~\cite{YKH+22,MKN+25}, then this exponent can be expressed via the complexity relative entropy, which we define below for completeness.

\begin{definition}[Complexity Relative Entropy~\cite{MKN+25}]\label{def:ComplexRelEnt}
Let $\mathbf{E}^{\mathrm{eff}}_n$ be the set of efficient effect operators on $\HH^{\otimes n}$ that can be generated using at most $p(n)$ elementary gates. Then for any two positive semidefinite operators $\rho,\sigma \in \text{Pos}(\HH^{\otimes n})$ and any $\eta \geq 0$, the \term{complexity relative entropy} is given by:
\begin{align}\label{equ:def:ComplexRelEnt}
  D_{\text{H}}^{p(n),\eta} \left(\rho \|\sigma\right) \coloneqq - \log \inf_{\substack{E \in \mathbf{E}^{\mathrm{eff}}_n \\
  \Tr \left[ E \rho\right]\geq \eta}}  \frac{ \Tr \left[ E \sigma\right]}{ \Tr \left[ E \rho\right]}
  \; .
\end{align}
It then directly follows from~\cite[Proposition 1]{MKN+25} that the optimal type II error is expressed as
\begin{align*}
    \beta^{\varepsilon}_{m(k)} (\Meff_{ k\cdot m(k) }) = \left(1-\varepsilon\right) 2^{-D_{\text{H}}^{p(k\cdot m(k)),1-\varepsilon} (\rho_{k}^{\otimes m(k)} \|\sigma_{k}^{\otimes m(k)})} \; .
\end{align*}
\end{definition}
This yields an optimal error exponent of
\begin{align}   
  \xi^{\mathrm{eff}} &= \lim_{\varepsilon \rightarrow 0} \limsup_{ k \rightarrow \infty} -\frac{\log(1-\varepsilon)}{m(k)} +\frac{D_{\text{H}}^{p(k\cdot m(k)),1-\varepsilon} (\rho_{k}^{\otimes m(k)} \|\sigma_{k}^{\otimes m(k)})}{m(k)} \\
  &= \lim_{\varepsilon \rightarrow 0} \limsup_{ k \rightarrow \infty}  \frac{D_{\text{H}}^{p(k\cdot m(k)),1-\varepsilon} (\rho_{k}^{\otimes m(k)} \|\sigma_{k}^{\otimes m(k)})}{m(k)}\; .
\end{align}
Here, our main contribution is \Cref{thm:qpt-stein}, which shows that the optimal error exponent is 
bounded by the regularized computational measured relative entropy. In doing so, we also establish a connection between the computational measured {\Renyi} divergences of this work and the complexity relative entropy from~\cite{MKN+25}.
\begin{theorem}[Computational Stein's Lemma]\label{thm:qpt-stein}
Let $\{\rho_k\}_{k \in \NN}, \{\sigma_k\}_{k \in \NN} \in \{\States(\HH^{\otimes k})\}_{k \in \NN}$ be two families of quantum states and let $\{\Meff_n\}_{n\in\NN}$ be a family of informationally complete polynomially generated set of binary POVMs. For any $0 < \varepsilon < 1$, the optimal type II error exponent $\xi^{\mathrm{eff}}(\{\rho_k\}_k \|\{\sigma_k\}_k)$ for testing $H_0: \sigma_{k}^{\otimes m(k)}$ against $H_1: \rho_{k}^{\otimes m(k)}$, with $m(k) = \OO(\poly(k))$, using  efficient measurements satisfies:
\begin{align}
\xi^{\mathrm{eff}}(\{\rho_k\}_k \|\{\sigma_k\}_k) \coloneqq  \lim_{\varepsilon \rightarrow 0}\limsup_{k \to \infty} - \frac{1}{m(k)} \log \beta_{m(k)}^{\varepsilon}(\Meff_{k \cdot m(k) }) \leq \CompDiv_{}^{\Meff,\infty}(\{\rho_k\}_k \|\{\sigma_k\}_k) \; ,
\end{align}
where $\beta_{m(k)}^{\varepsilon}(\Meff_{k \cdot m(k)})$ is the minimal type II error given the type I error $\leq \varepsilon$, and $\CompDiv_{}^{\Meff, \infty}(\{\rho_k\}_k \|\{\sigma_k\}_k)$ is the regularized computational measured relative entropy.
\end{theorem}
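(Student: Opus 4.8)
The plan is to prove a single-shot converse bound relating the optimal efficient type-II error to the computational measured relative entropy, and then regularize. Fix $\varepsilon\in(0,1)$ and $k\in\NN$, abbreviate $n\coloneqq k\cdot m(k)$, $\rho\coloneqq\rho_k^{\otimes m(k)}$, $\sigma\coloneqq\sigma_k^{\otimes m(k)}$, and assume $\CompDiv^{\Meff,\infty}(\{\rho_k\}_k\|\{\sigma_k\}_k)<\infty$ (otherwise there is nothing to prove). Since $\Meff_n$ is finite and the feasible region $\{M\in\Meff_n:\alpha_{m(k)}(M)\le\varepsilon\}$ is nonempty (it contains the trivial test $(\id,0)$), the minimum defining $\beta^{\varepsilon}_{m(k)}(\Meff_n)$ is attained at some $M^{*}=(E^{*},\id-E^{*})\in\Meff_n$; I would put $p\coloneqq\Tr[\rho E^{*}]\ge 1-\varepsilon$ and $\beta\coloneqq\Tr[\sigma E^{*}]=\beta^{\varepsilon}_{m(k)}(\Meff_n)\in[0,1]$. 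After disposing of the degenerate cases at the outset --- if $D(\MM_{M^{*}}(\rho)\|\MM_{M^{*}}(\sigma))=+\infty$ then $\CompDiv^{\Meff_n}(\rho\|\sigma)=+\infty$ and the bound is trivial, so we may take $0<\beta<1$ --- the key move is that the optimal test is \emph{itself} an efficient binary measurement, so one should ``measure first'': pass to the induced classical binary distributions $P\coloneqq\MM_{M^{*}}(\rho)=(p,1-p)$ and $Q\coloneqq\MM_{M^{*}}(\sigma)=(\beta,1-\beta)$.

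On these classical distributions I would apply the standard finite-block converse estimate. Regrouping, $D(P\|Q)=-p\log\beta-h_2(p)-(1-p)\log(1-\beta)\ge -p\log\beta-1$, using $h_2(p)\le1$ for the binary entropy $h_2$ and $-(1-p)\log(1-\beta)\ge0$. On the other hand, since $M^{*}\in\Meff_n$, the definition of the computational measured relative entropy gives $D(P\|Q)=D(\MM_{M^{*}}(\rho)\|\MM_{M^{*}}(\sigma))\le\CompDiv^{\Meff_n}(\rho\|\sigma)$. Combining the two and using $p\ge1-\varepsilon$ produces the single-shot bound $-\log\beta^{\varepsilon}_{m(k)}(\Meff_n)\le\big(\CompDiv^{\Meff_n}(\rho\|\sigma)+1\big)/(1-\varepsilon)$. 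Equivalently one may route this through the identity $\beta^{\varepsilon}_{m(k)}(\Meff_n)=(1-\varepsilon)\,2^{-D_{\text{H}}^{p(n),1-\varepsilon}(\rho\|\sigma)}$ recorded after \Cref{def:ComplexRelEnt}, recasting it as $D_{\text{H}}^{p(n),1-\varepsilon}(\rho\|\sigma)\le\big(\CompDiv^{\Meff_n}(\rho\|\sigma)+1\big)/(1-\varepsilon)-\log(1-\varepsilon)$; this phrasing also makes the promised connection to the complexity relative entropy of \cite{MKN+25} explicit.

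To finish, I would substitute $n=k\cdot m(k)$, $\rho=\rho_k^{\otimes m(k)}$, $\sigma=\sigma_k^{\otimes m(k)}$, divide through by $m(k)$, and take $\limsup_{k\to\infty}$. Since $m(k)\to\infty$ the additive $O(1/m(k))$ term drops, the right-hand side converges to $\tfrac{1}{1-\varepsilon}\,\CompDiv^{\Meff,\infty}(\{\rho_k\}_k\|\{\sigma_k\}_k)$ by the definition of the regularized computational measured relative entropy, and the left-hand side equals $\limsup_{k\to\infty}-\tfrac{1}{m(k)}\log\beta^{\varepsilon}_{m(k)}(\Meff_{k\cdot m(k)})$. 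Letting $\varepsilon\to0$ then collapses the prefactor $1/(1-\varepsilon)\to1$, while the left-hand side converges to $\xi^{\mathrm{eff}}(\{\rho_k\}_k\|\{\sigma_k\}_k)$ by definition, yielding the claim.

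The step I expect to require the most care is the single-shot bound of the second paragraph. The naive route would bound $-\log\beta^{\varepsilon}_{m(k)}$ directly by the quantum relative entropy $D(\rho\|\sigma)$ via data processing, but this only yields the weaker statement $\xi^{\mathrm{eff}}\le D^{\infty}=D$; one must instead run the classical relative-entropy/binary-entropy argument on the distributions produced by the \emph{optimal efficient test}, so that the bound genuinely features the supremum $\CompDiv^{\Meff_n}$ over efficient measurements. One also has to check that the $\varepsilon$-dependent overhead has a harmless shape --- an additive constant absorbed by the $1/m(k)$ normalization and a prefactor $1/(1-\varepsilon)$ absorbed by the $\varepsilon\to0$ limit --- so that it survives neither limit, and to dispatch the edge cases ($\beta\in\{0,1\}$, non-dominated supports, $\CompDiv^{\Meff,\infty}=\infty$) beforehand. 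The remaining ingredients --- existence of the optimal test, the elementary Kullback--Leibler estimate, and commuting the $\limsup_k$ past the monotone $\varepsilon\to0$ limit --- are routine.
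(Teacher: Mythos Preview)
Your proposal is correct and follows essentially the same approach as the paper's own proof: pick a feasible efficient test, pass to the induced binary classical distributions, lower bound the classical relative entropy by $-p\log\beta-1$ via $h_2\le 1$ and nonnegativity of the dropped cross term, use $p\ge 1-\varepsilon$ to extract $-\log\beta^{\varepsilon}_{m(k)}\le(\CompDiv^{\Meff_n}(\rho\|\sigma)+1)/(1-\varepsilon)$, then normalize, take $\limsup_k$, and send $\varepsilon\to 0$. The only cosmetic difference is that you fix the optimal test $M^{*}$ from the outset whereas the paper works with a generic feasible $M$ and then optimizes; the resulting single-shot bound and the limiting argument are identical.
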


\begin{proof}
The proof follows from the general result of~\cite{BHLP20}. We adapt it to the polynomially constrained scenario and include it in \Cref{app:Steinproof} for completeness.
\end{proof}
In the simple case, where $\rho,\sigma$ are fixed and one distinguishes $\rho^{\otimes m}$ from $\sigma^{\otimes m}$, this reduces to 
\begin{align}
    \xi^{\mathrm{eff}}(\rho \|\sigma) \coloneqq  \lim_{\varepsilon \rightarrow 0}\limsup_{m \to \infty} - \frac{1}{m} \log \beta_{m}^{\varepsilon}(\Meff_{m }) \leq \CompDiv_{}^{\Meff,\infty}(\rho \|\sigma) \; .
\end{align}
This bound is tight for the original quantum Stein's lemma~\cite{ON00}, which one 
recovers by removing the computational constraints on $\Meff_{k}$. Moreover, it immediately follows from \Cref{lem: regularizeddivergenceineq} that the regularized computational max-divergence is an upper bound on $\xi^{\mathrm{eff}}(\{\rho_k\}_k \|\{\sigma_k\}_k)$.

\subsection{Computational Resource Theory}
\label{sec:comp.res.th}
Properties of quantum systems such as entanglement or magic can be studied
systematically by \term{quantum resource theories}~\cite{CG19}.
A resource theory is a pair
$\RE=(\Ff,\OO)$ where, for any Hilbert spaces $\HH,\HH_1,\HH_2$,
\begin{itemize}
    \item $\Ff(\HH)\subseteq\Ss(\HH)$ is the set of
\emph{free states} on $\HH$, i.e., those that lack the resource and
    \item $\OO(\HH_1,\HH_2)\subseteq \CPTP(\HH_1,\HH_2)$ is the set of
\emph{free operations}.
\end{itemize} 
A free operation, $\Lambda\!\in\!\OO(\HH_{1},\HH_{2})$, must minimally satisfy $\Lambda(\sigma)\!\in\!\Ff(\HH_{2})$ \emph{for every} $\sigma\!\in\!\Ff(\HH_{1})$. When studying, e.g. entanglement as a resource, one chooses $\Ff=\mathrm{SEP}$ (separable states) and $\OO=\mathrm{LOCC}$ (local operations + classical communication).

When restricted to polynomially implementable operations, families of states  with high resource content can become computationally indistinguishable from
low-resource families. This phenomenon, known as pseudoresourced quantum states~\cite{HBE24,BMB+24, GY25} has a direct consequence: when constrained to polynomially implementable operations, information-theoretic resource measures  will generally not be operationally meaningful. For entanglement there exist, e.g., exponential separations between information-theoretic and computational measures~\cite{ABF+23,ABV23,LREJ25}. Nevertheless, to the best of our knowledge, no general measure for polynomially-constrained resources has been proposed in the literature.

There exist several measures for quantifying resources in quantum information theory, ranging from geometric to witness-based approaches. Typically, these measures assess the resource content of a state by evaluating its “distance” from the set of free states. In this section, we focus on entropic measures, defining the computational counterpart of the \term{measured relative entropy of resource} and the \term{max-relative entropy of resource}. Throughout this section, $\OO_n(\HH_1,\HH_2)$ will denote the free operations realized by poly-size circuits, i.e., the ``efficient" free operations from the Hilbert space $\HH_1$ to $\HH_2$. In the same way, $\mathcal{F}_n\equiv\Ff(\HH^{\otimes n})$ represents the set of free states, i.e. the ones without the resource. 

\begin{definition}[Set of Efficient Free Quantum Operations]\label{def:eff.free.op}
Given $\{\mathbf{E}_n^{\text{eff},(i)}\}_{n\in\NN},$ where \\ ${\mathbf{E}_n^{\text{eff},(i)}\subset\mathcal{B}(\HH_i^{\otimes n})}$ are families of informationally complete polynomially generated sets of effect operators on, respectively systems $i=1,2$, with possibly different polynomials. On the sets of free states $\Ff(\HH_1^{\otimes n})$ and $\Ff(\HH_2^{\otimes n})$, a CPTP map  $\Lambda_{n}\in\mathcal{O}(\mathcal{H}^{\otimes n}_1,\mathcal{H}^{\otimes n}_2)$  is said to be an \term{efficient free quantum operation} if the following two conditions hold.
\begin{enumerate}
    \item \textbf{State condition:} $\Lambda_n(\sigma) \in \Ff(\HH_2^{\otimes n})$ for all $\sigma \in\Ff(\HH_1^{\otimes n})$.
    \item \textbf{Test stability:} $  \Lambda_n^*({\mathbf{E}}^\text{eff}_{n}(\HH_2^{\otimes {n}}))\subseteq \overline{\mathbf{E}}^\text{eff}_{n}(\HH_1^{\otimes {n}})$.
\end{enumerate}
\end{definition}
Therefore, the set of free operations can be seen as quantum channels that map free states into free states, not generating the studied resource. The main difference with respect to the information-theoretic case lies in the added complexity constraint: these transformations cannot generate more complexity either. Let us now define computational resource distance measures through our proposed computational divergences.

\begin{definition}[Computational Measured Relative Entropy of Resource] 
Given a quantum state $\rho \in \Ss(\HH^{\otimes n})$, its \term{computational measured relative entropy of resource} is given by,
    \begin{align}
       \CompDiv_{\Ff_{n}} ( \rho): =  \min_{\sigma\in \Ff_n} \CompDiv^{\Meff_n} (\rho\|\sigma)\, .
    \end{align}
\end{definition}

\begin{definition}[Computational Max-Relative Entropy of Resource]\label{def:max.entropy.resource}
Given a quantum state $\rho \in \Ss(\HH^{\otimes n})$, its \term{computational max-relative entropy of resource} is given by,
    \begin{align}
    \CompDiv_{\mathrm{max},\Ff_{n}} ( \rho): =  \min_{\sigma \in \Ff_n} \CompMaxDiv (\rho \|\sigma) \, .
    \end{align}
\end{definition}

In the information-theoretic setting, the (unmeasured) relative entropy of resource and the smoothed max-relative entropy have standard operational roles: the former captures asymptotic hypothesis-testing/interconversion rates, while the latter characterizes one-shot formation/distillation (e.g., entanglement cost and distillable entanglement) under non-entangling or catalytic transformations~\cite{D09,BD10,LBT19,RFW19,RW20}. In the computational setting, a full operational characterization remains open. Nevertheless, by parallelism with the unconstrained case, we adopt their measured and restricted counterparts. Moreover, they satisfy the crucial properties of resource monotones under our allowed maps, as we now prove. 

\begin{lem}\label{lem:prop.res}
Assuming that the set of free states $\Ff_n=\Ff(\HH^{\otimes n})$ is closed, the computational resource measures $\CompbbDiv \in \{\CompDiv, \CompMaxDiv\}$ that we have defined satisfy the following properties,
\begin{itemize}
    \item \textbf{Faithfulness}: given a quantum state $\rho \in \Ss(\HH^{\otimes n})$, its resource measure $\CompDiv_{\Ff_n}(\rho)$ is zero iff $\rho\in\Ff(\HH^{\otimes n})$.

    \item \textbf{Monotonicity: } For any free operation $\Lambda_n \in \OO_n(\HH_1^{\otimes n},\HH_2^{\otimes n})$,
    \begin{align}
        \CompbbDiv_{\Ff_{}(\HH_2^{\otimes n})} (\Lambda_n (\rho)) \leq \CompbbDiv_{\Ff_{}(\HH_1^{\otimes n})} ( \rho) \, .
    \end{align}
\end{itemize}
\end{lem}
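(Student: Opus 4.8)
The plan is to derive both properties from the data-processing inequality of \Cref{lem: MeasRenDivProp}, whose hypothesis is precisely the \emph{test-stability} axiom of \Cref{def:eff.free.op}, together with informational completeness for faithfulness.

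\textbf{Faithfulness.} For the ``$\Leftarrow$'' direction, if $\rho\in\Ff_n$ I take $\sigma=\rho$ in the minimization defining $\CompDiv_{\Ff_n}(\rho)$: every classical relative entropy $D(\MM_M(\rho)\|\MM_M(\rho))$ vanishes, so $\CompDiv_{\Ff_n}(\rho)\le 0$, while non-negativity of the computational measured relative entropy (\Cref{lem: MeasRenDivProp} at $\alpha\nearrow 1$) gives $\CompDiv_{\Ff_n}(\rho)\ge 0$. For ``$\Rightarrow$'', the set $\Ff_n$ is compact (closed and bounded) and $\sigma\mapsto\CompDiv^{\Meff_n}(\rho\|\sigma)$ is lower semicontinuous, being a maximum over the finite set $\Meff_n$ of lower semicontinuous classical relative entropies, so the minimum is attained at some $\sigma^\star\in\Ff_n$ with $\CompDiv^{\Meff_n}(\rho\|\sigma^\star)=0$. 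Since every term of the defining supremum is non-negative, $D(\MM_M(\rho)\|\MM_M(\sigma^\star))=0$ for all $M\in\Meff_n$, hence $\Tr[E\rho]=\Tr[E\sigma^\star]$ for every $E\in\Eeff_n$; informational completeness, $\Span(\Eeff_n)=\BB(\HH^{\otimes n})$, then forces $\rho=\sigma^\star\in\Ff_n$. (The same argument, additionally using $E\in\Eeff_n\Rightarrow\id-E\in\Eeff_n$, yields faithfulness of $\CompDiv_{\mathrm{max},\Ff_n}$ as well.)

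\textbf{Monotonicity.} Fix a free operation $\Lambda_n\in\OO_n(\HH_1^{\otimes n},\HH_2^{\otimes n})$ and let $\sigma^\star\in\Ff(\HH_1^{\otimes n})$ attain the minimum in $\CompbbDiv_{\Ff(\HH_1^{\otimes n})}(\rho)$ (again attained by the same compactness/lower-semicontinuity argument). By the state condition $\Lambda_n(\sigma^\star)\in\Ff(\HH_2^{\otimes n})$, so it suffices to prove the data-processing bound
\[
\CompDiv^{\Meff_n}(\Lambda_n(\rho)\,\|\,\Lambda_n(\sigma^\star))\ \le\ \CompDiv^{\Meff_n}(\rho\,\|\,\sigma^\star)
\]
(and the analogue with $\CompMaxDiv$), since the left-hand side then upper-bounds $\CompDiv_{\Ff(\HH_2^{\otimes n})}(\Lambda_n(\rho))$. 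For any $M=(E_1,E_2)\in\Meff_n$ acting on system $2$, the composite $\MM_M\circ\Lambda_n$ is again a binary measurement channel, with effect operators $\Lambda_n^\ast(E_1),\Lambda_n^\ast(E_2)$; these sum to $\id$ because $\Lambda_n^\ast$ is unital ($\Lambda_n$ being trace-preserving), and by test stability each lies in $\overlineEeff_n$, so $(\Lambda_n^\ast(E_1),\Lambda_n^\ast(E_2))\in\overlineMeff_n$. Consequently
\[
D(\MM_M(\Lambda_n(\rho))\,\|\,\MM_M(\Lambda_n(\sigma^\star)))\ \le\ \CompDiv^{\overlineMeff_n}(\rho\,\|\,\sigma^\star)\ =\ \CompDiv^{\Meff_n}(\rho\,\|\,\sigma^\star),
\]
where the last equality is \Cref{lem: equivalanceMandConvM}; taking the supremum over $M\in\Meff_n$ gives the claim for $\CompDiv$. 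For $\CompMaxDiv$ one either runs the identical pull-back argument at level $\alpha$ and sends $\alpha\nearrow\infty$ (invoking \Cref{thm:measured=conic}), or argues directly from the conic dual of \Cref{def:ComConeMax}: $\CompMaxDiv(\Lambda_n(\rho)\|\Lambda_n(\sigma^\star))=\log\sup_v\Tr[\Lambda_n^\ast(v)\rho]/\Tr[\Lambda_n^\ast(v)\sigma^\star]$ over $v$ in the cone of efficient effects on system $2$, and test stability sends that cone into the cone of efficient effects on system $1$, producing the bound $\CompMaxDiv(\rho\|\sigma^\star)$.

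\textbf{Main obstacle.} There is no deep difficulty; this is the standard ``resource monotone from data processing'' argument. The only points needing care are that the DPI of \Cref{lem: MeasRenDivProp} is phrased for endomorphic channels $\HH^{\otimes n}\to\HH^{\otimes n}$, so for a genuine $\Lambda_n:\HH_1^{\otimes n}\to\HH_2^{\otimes n}$ one must re-derive it by pulling the output POVM back through $\Lambda_n$ — exactly where test stability is used to keep the pulled-back effects in $\overlineEeff_n$ — together with the bookkeeping of passing between $\Meff_n$ and $\overlineMeff_n$ via \Cref{lem: equivalanceMandConvM}, and, for faithfulness, invoking compactness of $\Ff_n$ and informational completeness.
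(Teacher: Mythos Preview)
Your proof is correct and, for monotonicity, essentially identical to the paper's: both arguments use the state condition to restrict the outer minimum to $\Lambda_n(\sigma)$, then invoke test stability plus the DPI of \Cref{lem: MeasRenDivProp}; you are simply more explicit about pulling back the POVM through $\Lambda_n^*$ and invoking \Cref{lem: equivalanceMandConvM}.

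For faithfulness the route differs. The paper reduces $\CompDiv_{\Ff_n}(\rho)=0$ to $\inf_{\sigma\in\Ff_n}\CompTrDis(\rho,\sigma)=0$ via the computational Pinsker inequality (\Cref{lem:pinsker}), and likewise uses \Cref{lem:comp.max.norm} for the max case, then concludes $\rho\in\Ff_n$ from closedness of $\Ff_n$ and positive definiteness of the computational trace norm. You instead argue directly: attain the minimizer $\sigma^\star$ by compactness and lower semicontinuity, deduce $\Tr[E\rho]=\Tr[E\sigma^\star]$ for every $E\in\Eeff_n$, and conclude $\rho=\sigma^\star$ from informational completeness. Both are valid and ultimately rest on informational completeness (which is what gives the computational trace norm its positive definiteness); your argument is more self-contained, while the paper's ties faithfulness into the trace-distance framework developed earlier.
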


\begin{proof} 
On the one hand, if $\rho\in\Ff(\HH^{\otimes n})$, then it naturally holds that
\begin{align}
    \CompbbDiv_{\Ff_{n}} ( \rho) =0 \; .
\end{align}
The other direction which is necessary to prove faithfulness then directly follows for $\CompDiv_{\max}$ from \Cref{lem:comp.max.norm} and for the measured relative entropy measures from \Cref{lem:pinsker} together with the fact that the computational trace distance $\CompTrDis$ is a norm and the set $\Ff_n(\HH^{\otimes n})$ is closed.
For the monotonicity property, it follows from the state condition of \Cref{def:eff.free.op} that,
\begin{align*}
     \CompbbDiv_{\Ff_{}(\HH_2^{\otimes n})} (\Lambda_n (\rho)) = \hspace{-0.2cm}\min_{\tau \in \Ff_{}(\HH_2^{\otimes n})} \CompbbDiv^{\Meff_n} (\Lambda_n (\rho)\| \tau) \leq \hspace{-0.2cm}\min_{\sigma \in \Ff_{}(\HH_1^{\otimes n})} \CompbbDiv^{\Meff_n} (\Lambda_n (\rho)\| \Lambda_n(\sigma)) \leq \CompbbDiv^{\Meff_n} (\Lambda_n (\rho)\| \Lambda_n(\sigma)) \,,
\end{align*}
for $\CompbbDiv^{\Meff_n} \in \{\CompDiv^{\Meff_n}, \CompMaxDiv\}$ and any $\sigma \in \Ff(\HH_1^{\otimes n})$, since $\Lambda_n(\Ff(\HH_1^{\otimes n})) \subseteq \Ff(\HH_2^{\otimes n})$. Then, by the test stability property of \Cref{def:eff.free.op} together with Property  4 of~\Cref{lem: MeasRenDivProp} (see also~\cite[Lemma 5]{RSB24}) it follows that, for all $\sigma \in \Ff(\HH_1^{\otimes n})$,
\begin{align*}
    \CompbbDiv^{\Meff_n} (\Lambda_n (\rho)\| \Lambda_n(\sigma)) \leq \CompbbDiv^{\Meff_n} (\rho\| \sigma) \,.
\end{align*}
The proof then concludes by taking the minimum.
\end{proof}

For the case of the computational measured relative entropy, it follows from~\cite[Theorem 11]{SW23} that it a satisfies asymptotic continuity with respect to the trace norm. Concretely, for any convex set $\Ff_n$ and assuming $\frac{1}{2} \|\rho-\rho'\|_1 \leq \varepsilon$ it then holds that
\begin{align}
    \left|\CompDiv_{\Ff_{n}} ( \rho)-\CompDiv_{\Ff_{n}} ( \rho')\right|\leq g(\varepsilon) +  \varepsilon \sup_{M \in \Meff_n} \kappa_M\,
\end{align}
where $\kappa_M =\sup_{\mu,\nu}\left| D_{\Ff_n}^M(\mu) - D_{\Ff_n}^M(\nu)\right|$ with $D_{\Ff_n}^M(\rho)\coloneqq\inf_{\sigma \in \Ff_n}D(\MM_M(\rho)\|\MM_M(\sigma))$ is the maximal absolute difference of the measured relative entropy of resource between any two states and $g(\varepsilon)\coloneqq (1+\varepsilon)h\left( \frac{\varepsilon}{1+\varepsilon}\right)$ and $h(\cdot)$ is the binary entropy. 

Since we demonstrate an exponential separation between the information-theoretic trace distance and its computational counterpart, continuity bounds phrased in the computational metric can be exponentially sharper. This makes the above bound in practice less useful and motivates a tighter relation between our computational divergences and the corresponding restricted trace distance—so that the right-hand side of the inequality depends only on a computational notion of distinguishability. \Cref{th:comptfannes} provides exactly this: an asymptotic continuity bound for the computational measured relative entropy of resource in terms of the computational trace distance, furnishing a direct analytic bridge with clear operational meaning. To the best of our knowledge, this is the first continuity bound that relates a measurement-restricted divergence to its matching restricted trace distance, and it is likely of independent interest.

\begin{theorem}[Asymptotic Continuity Bound for the Computational Measured Relative Entropy of Resource]
\label{th:comptfannes}
Let $\Ff_n\subset\Ss(\HH^{\otimes n})$ be closed, convex, and bounded containing a full rank state $\sigma^\star$.
For $\rho,\rho' \in\Ss(\HH^{\otimes n})$ and  $\CompTrDis(\rho,\rho')= \frac{1}{2}\|\rho-\rho'\|_{\Meff_n}\le\varepsilon$ it holds that, 
\begin{align}
\big|\CompDiv_{\Ff_n}(\rho)-\CompDiv_{\Ff_n}(\rho')\big|
\leq  (1+\varepsilon)h\left(\frac{\varepsilon}{1+\varepsilon}\right) + \varepsilon \left(\kappa + \log\left(\frac{2}{\varepsilon}\right)\right)\, ,
\end{align}
where $h(\cdot)$ is the binary entropy and $\kappa\coloneqq-\log(\lambda_{\textup{min}}(\sigma^*)) >0$ is the negative logarithm of the smallest eigenvalue of the full rank state $\sigma^\star$.
\end{theorem}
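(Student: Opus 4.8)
The plan is to run a Winter-type uniform-continuity argument, but carried out on the \emph{classical binary distributions} produced by a near-optimal efficient measurement rather than on the quantum states themselves; this is exactly what lets the right-hand side feature $\CompTrDis$ in place of $\tfrac12\|\cdot\|_1$. We may assume $\CompDiv_{\Ff_n}(\rho)\ge\CompDiv_{\Ff_n}(\rho')$ and bound the one-sided difference (the case $\rho=\rho'$ being trivial), and first record two preliminary facts. Since $\Ff_n$ is compact and $\sigma\mapsto\CompDiv^{\Meff_n}(\rho\|\sigma)$ is convex and lower semicontinuous, the infimum defining $\CompDiv_{\Ff_n}$ is attained; let $\sigma^\star_{\rho'}$ be a minimizer for $\rho'$. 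Also, data processing (\Cref{lem: MeasRenDivProp}) together with $D(\tau\|\sigma^\star)\le-\log\lambda_{\min}(\sigma^\star)=\kappa$ for every state $\tau$ gives $0\le\CompDiv_{\Ff_n}(\tau)\le\kappa$ for all $\tau$ (so, in particular, $\CompDiv_{\Ff_n}$ is finite), and likewise $D(\MM_M(\tau)\|\MM_M(\sigma^\star))\le\kappa$ for every binary $M$; in view of \Cref{lem:prop.res} this is a bounded, faithful, monotone resource monotone.

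First I would regularize the comparison state: for a parameter $\mu\in(0,1)$ to be fixed later, set $\sigma_\mu\coloneqq(1-\mu)\sigma^\star_{\rho'}+\mu\sigma^\star\in\Ff_n$ (convexity of $\Ff_n$), so that $\lambda_{\min}(\sigma_\mu)\ge\mu\lambda_{\min}(\sigma^\star)$ and $\CompDiv_{\Ff_n}(\rho)\le\CompDiv^{\Meff_n}(\rho\|\sigma_\mu)$. Then I would pick a binary $M^\star=(E^\star,\id-E^\star)\in\Meff_n$ attaining (or approaching) this supremum and pass to the two-point distributions $p=\MM_{M^\star}(\rho)$, $q=\MM_{M^\star}(\rho')$, $r=\MM_{M^\star}(\sigma_\mu)$, $r_0=\MM_{M^\star}(\sigma^\star_{\rho'})$, noting $r=(1-\mu)r_0+\mu\,\MM_{M^\star}(\sigma^\star)$. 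Now $\CompDiv_{\Ff_n}(\rho)\le D(p\|r)$ while $\CompDiv_{\Ff_n}(\rho')\ge D(q\|r_0)$, so it suffices to bound $D(p\|r)-D(q\|r_0)$. The crucial point is that, because $M^\star$ is an \emph{efficient} binary measurement, the third characterization of the computational trace norm in \Cref{lem: comptrnormEeffred} yields $\varepsilon_{M^\star}\coloneqq\tfrac12\|p-q\|_1\le\CompTrDis(\rho,\rho')\le\varepsilon$; from here on all closeness is measured in this (good) metric.

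For the classical estimate I would split $D(p\|r)-D(q\|r_0)=[D(p\|r)-D(q\|r)]+[D(q\|r)-D(q\|r_0)]$. The second bracket is easy: $r\ge(1-\mu)r_0$ componentwise and $q\ll r_0$ (as $D(q\|r_0)\le\kappa$), so $D(q\|r)-D(q\|r_0)=\sum_i q_i\log(r_{0,i}/r_i)\le-\log(1-\mu)$. For the first bracket I would use a flattening argument for two-point distributions: writing $p-q=\varepsilon_{M^\star}(\pi^+-\pi^-)$ with $\pi^\pm$ point masses on the two outcomes, and combining the convexity and the almost-affinity of $D(\cdot\|r)$, one gets $D(p\|r)-D(q\|r)\le\varepsilon_{M^\star}\,D(\pi^+\|r)+(1+\varepsilon_{M^\star})h\!\bigl(\tfrac{\varepsilon_{M^\star}}{1+\varepsilon_{M^\star}}\bigr)$, and since $\pi^+$ is a point mass, $D(\pi^+\|r)=\log(1/r_j)$ for one coordinate $j$. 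This is where the full-rank state enters: $r_j=\Tr[E^\star_j\sigma_\mu]\ge\mu\lambda_{\min}(\sigma^\star)\Tr[E^\star_j]$, and effects of tiny trace do no harm because $\varepsilon_{M^\star}=|\Tr[E^\star_j(\rho-\rho')]|\le\Tr[E^\star_j]$, so $r_j\ge\mu\lambda_{\min}(\sigma^\star)\,\varepsilon_{M^\star}$ and hence $\varepsilon_{M^\star}D(\pi^+\|r)\le\varepsilon\bigl(\log(1/\mu)+\kappa+\log(1/\varepsilon)\bigr)$ up to lower-order terms. Finally, choosing $\mu=\varepsilon/(1+\varepsilon)$ balances $-\log(1-\mu)$ against $\varepsilon\log(1/\mu)$ and produces precisely $(1+\varepsilon)h(\tfrac{\varepsilon}{1+\varepsilon})$; collecting the remaining $\varepsilon\kappa$ and $\varepsilon\log(1/\varepsilon)$ contributions (and using $(1+\varepsilon_{M^\star})h(\tfrac{\varepsilon_{M^\star}}{1+\varepsilon_{M^\star}})\le(1+\varepsilon)h(\tfrac{\varepsilon}{1+\varepsilon})$) yields a bound of the stated form.

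The main obstacle is conceptual: unlike the ordinary (unmeasured) relative entropy of a resource, the computational \emph{measured} relative entropy of a resource is not even approximately affine in its state argument, since the optimal efficient measurement depends on the state — so the usual quantum flattening argument cannot be applied directly, and doing so on the quantum side would only recover the weaker bound in $\tfrac12\|\rho-\rho'\|_1$. This is what forces the detour through classical two-point distributions at a fixed near-optimal $M^\star$, where additivity-type properties come for free. The remaining technical difficulties are the two ways the reference distribution $r$ can blow up: the minimizer $\sigma^\star_{\rho'}$ need not be full rank (handled by the $\sigma^\star$-regularization, at the cost of the $-\log(1-\mu)$ term), and an efficient effect $E^\star$ may have exponentially small trace while still contributing a vanishing entry to $r$ (handled by $|\Tr[E^\star(\rho-\rho')]|\le\Tr[E^\star]$). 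The delicate part is precisely the bookkeeping — in particular the choice of $\mu$ — that turns these contributions into exactly $\varepsilon(\kappa+\log(2/\varepsilon))$.
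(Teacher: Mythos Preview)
Your approach is genuinely different from the paper's and is a nice, more direct route. The paper first enlarges $\overlineMeff_n$ to a set $\mathscr{M}^{\text{eff}}_n$ of \emph{flagged} finite convex combinations of efficient binary measurement channels, which is exactly what is needed to invoke a minimax lemma (à la \cite{BHLP20,SW23}) and swap the infimum over $\sigma\in\Ff_n$ with the supremum over measurements; only then does it run a Winter-type argument at fixed $\mathcal{M}$, with the crucial freedom to choose the free state \emph{separately} in the two ``remainder'' terms (this is where $\sigma^\star$ is plugged in directly, with no regularization needed). Your route skips the minimax swap entirely: you fix the $\rho'$-minimizer, mix in $\sigma^\star$ to guarantee full support, pick a single near-optimal efficient $M^\star$, and carry out the whole continuity estimate on the resulting two-point classical distributions. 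Both arguments rest on the same small-trace observation $|\Tr[E^\star_j(\rho-\rho')]|\le\Tr[E^\star_j]$ (the paper records it as $\delta_i\le 2\min\{\Tr E_i,\Tr(\id-E_i)\}$). The payoff of your route is that it is elementary and needs no flagged-channel machinery.

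The cost, however, is that your bookkeeping does not reproduce the stated constant. The regularization $\sigma_\mu=(1-\mu)\sigma^\star_{\rho'}+\mu\sigma^\star$ is genuinely necessary in your scheme (otherwise $\CompDiv^{\Meff_n}(\rho\|\sigma^\star_{\rho'})$ can be $+\infty$ when $\rho\not\ll\sigma^\star_{\rho'}$ and the kernel is detected by an efficient effect), but it produces an extra $-\log(1-\mu)$ term \emph{on top of} the $(1+\varepsilon_{M^\star})h\big(\tfrac{\varepsilon_{M^\star}}{1+\varepsilon_{M^\star}}\big)$ already coming from the classical flattening. With your choice $\mu=\varepsilon/(1+\varepsilon)$ these two contributions together give, after monotonicity in $\varepsilon_{M^\star}$, essentially $2(1+\varepsilon)h\big(\tfrac{\varepsilon}{1+\varepsilon}\big)$ rather than one --- so your final bound is of the right asymptotic form $\varepsilon\kappa+O(\varepsilon\log(1/\varepsilon))$ but strictly larger than the theorem's. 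The paper avoids this doubling precisely because the minimax swap lets it bound $\min_{\sigma}\sum_i p_i\delta_iD(\eta_i\|\MM_i(\sigma))$ by $\sum_i p_i\delta_iD(\eta_i\|\MM_i(\sigma^\star))$ without ever perturbing the comparison state in the main term. If you want the exact constant along your line, you would have to find a way to absorb the regularization loss; as written, you prove a perfectly usable asymptotic continuity bound, just not the one stated.
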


\begin{proof}
In order to end up with the desired bound for $\CompDiv_{\mathcal{F}_n}$, we will first swap the infimum and supremum implicitly appearing in $\CompDiv_{\mathcal{F}_n}$. We do this by using~\cite[Lemma 13]{BHLP20}, see also~\cite[Lemma 10]{SW23}, which, however, a priori does not directly apply to our situation, since the set of measurement maps $\mathcal{M}$ corresponding to binary POVMs $\overlineMeff_n$ is not closed under flagged finite convex combinations. We say a set $\mathscr{M}$ of CPTP maps is closed under flagged finite convex combinations~\cite{SW23, BHLP20} if for any finite probability distribution $\{p_i\}_{i=1}^m$ and any $\mathcal{M}_i\in\mathscr{M}$ it follows that $\bigoplus_{i=1}^mp_i\mathcal{M}_i\in\mathscr{M}$. So as a first step we define 
\begin{align*}
    \mathscr{M}^\text{eff}_n\coloneqq\left\{\bigoplus_{i=1}^mp_i\mathcal{M}_i=\sum_{i=1}^mp_i\mathcal{M}_i\otimes|i\rangle\langle i|\,\Bigg|\,m\in\NN,\, p_i\geq 0,\, \sum_{i=1}^mp_i=1,\, \mathcal{M}_i\equiv \mathcal{M}_{M_i}, \, M_i\in\overlineMeff_n\right\} \; ,
\end{align*}
where here $\mathcal{M}_i\equiv\mathcal{M}_{M_i}$ is the measurement channel implementing the binary POVM $M_i\in\overlineMeff_n$.
This set of flagged measurement channels is now closed under flagged finite convex combinations by construction, and so it follows by~\cite[Lemma 10]{SW23} that
\begin{align*}
\inf_{\sigma\in\mathcal{F}_n}\sup_{\mathcal{M}\in\mathscr{M}^\text{eff}_n}D(\mathcal{M}(\rho)\|\mathcal{M}(\sigma)) = \sup_{\mathcal{M}\in\mathscr{M}^\text{eff}_n}\inf_{\sigma\in\mathcal{F}_n}D(\mathcal{M}(\rho)\|\mathcal{M}(\sigma))\;.
\end{align*}
Importantly by expanding our set of measurement channels we do not change our objective values. It holds that
\begin{align} \label{equ:proof2}
    \sup_{\mathcal{M} \in \mathscr{M}^\text{eff}_n} D (\MM(\rho)\|\MM(\sigma)) = \sup_{M \in \overlineMeff_n} D (\MM_M(\rho)\|\MM_M(\sigma))\,.
\end{align}
To prove this one easily checks that for a channel $\mathcal{M}=\bigoplus_{i=1}^mp_i\mathcal{M}_i$ it holds that
$ D(\MM(\rho)\|\MM(\sigma))  = \sum_{i=1}^mp_i D(\MM_i(\rho)\|\MM_i(\sigma))$ which is an average and hence the claimed Eq.~\eqref{equ:proof2} follows. Completely analogously one shows that
\begin{align}\label{eq:equivalence.distances}
\sup_{\mathcal{M}\in\mathscr{M}_n^\text{eff}}\|\mathcal{M}(\rho)-\mathcal{M}(\rho^\prime)\|_1=\sup_{M\in\overlineMeff_n}\|\mathcal{M}_M(\rho)-\mathcal{M}_M(\rho^\prime)\|_1=2\CompTrDis(\rho,\rho^\prime)\;.
\end{align}
Now putting these observations together we get
\begin{align}\label{eq:minimax}
   \big|\CompDiv_{\Ff_n}(\rho)-\CompDiv_{\Ff_n}(\rho')\big| &=  \Big| \min_{\sigma\in \Ff_n}\sup_{M \in \overlineMeff_n} D (\MM_M(\rho)\|\MM_M(\sigma))-\min_{\sigma\in \Ff_n}\sup_{M \in \overlineMeff_n} D (\MM_M(\rho')\|\MM_M(\sigma))\Big|\notag \\
   &=  \Big| \min_{\sigma\in \Ff_n}\sup_{\mathcal{M} \in \mathscr{M}^\text{eff}_n} D (\MM(\rho)\|\MM(\sigma))-\min_{\sigma\in \Ff_n}\sup_{\mathcal{M} \in \mathscr{M}^\text{eff}_n} D (\MM(\rho')\|\MM(\sigma))\Big| \notag \\
   &= \Big| \sup_{\mathcal{M} \in \mathscr{M}^\text{eff}_n} \min_{\sigma\in \Ff_n} D (\MM(\rho)\|\MM(\sigma))-\sup_{\mathcal{M} \in \mathscr{M}^\text{eff}_n}\min_{\sigma\in \Ff_n} D (\MM(\rho')\|\MM(\sigma))\Big|\notag \\
   &\leq \sup_{\mathcal{M} \in \mathscr{M}^\text{eff}_n} \Big|  \min_{\sigma\in \Ff_n} D (\MM(\rho)\|\MM(\sigma))-\min_{\sigma\in \Ff_n} D (\MM(\rho')\|\MM(\sigma))\Big|\notag \\
   & =\sup_{\mathcal{M} \in \mathscr{M}^\text{eff}_n} \Big| \min_{\sigma\in \Ff_n} \sum^m_{i=1}p_i D (\MM_{i}(\rho)\|\MM_{i}(\sigma))-\min_{\sigma\in \Ff_n} \sum_{i=1}^mp_iD (\MM_{i}(\rho')\|\MM_{i}(\sigma))\Big|\; , \qquad
\end{align}
where in the last line the supremum is over all $\mathcal{M}=\bigoplus_{i=1}^mp_i\mathcal{M}_i$, where $\mathcal{M}_i$ is a measurement channel corresponding to the binary POVM $M_i\in\overlineMeff_n$.
Before we now continue to bound the last term of Eq.~\eqref{eq:minimax} ---inspired by the proof of~\cite[Lemma 7]{Win16}--- we introduce some notation. Fix some $\mathcal{M}=\bigoplus_{i=1}^mp_i\mathcal{M}_i\in\mathscr{M}_n^\text{eff}$ and define 
    $\delta\coloneqq\frac{1}{2}\|\mathcal{M}(\rho)-\mathcal{M}(\rho^\prime)\|_1$, $ 
    \delta_i\coloneqq\frac{1}{2}\|\mathcal{M}_i(\rho)-\mathcal{M}_i(\rho^\prime)\|_1$,
 then it follows that
 \begin{align*}
  \delta=\sum_ip_i\delta_i\leq \CompTrDis(\rho,\rho^\prime)   
 \end{align*}
and that $\eta_i\coloneqq\frac{1}{\delta_i}\left(\mathcal{M}_i(\rho)-\mathcal{M}_i(\rho^\prime)\right)_+$ and $
    \eta_i^\prime \coloneqq \frac{1}{\delta_i}\left(\mathcal{M}_i(\rho^\prime)-\mathcal{M}_i(\rho)\right)_+$
 are both valid quantum states, where $(A)_+$ is the projection onto the positive part of $A$. We can now bound
\begin{align*}
    \mathcal{M}_i(\rho)&=\mathcal{M}_i(\rho^\prime)+(\mathcal{M}_i(\rho)-\mathcal{M}_i(\rho^\prime)) \\
    &\leq \mathcal{M}_i(\rho^\prime)+ \delta_i\eta_i \\ &= (1+\delta_i)\left(\frac{1}{1+\delta_i}\mathcal{M}_{i}(\rho^\prime)+\frac{\delta_i}{1+\delta_i}\eta_{i}\right) \\ &=:(1+\delta_i)\omega_i \; ,
\end{align*} to get that
\begin{align*}
    \omega_i = \frac{1}{1+\delta_i}\mathcal{M}_i(\rho^\prime)+\frac{\delta_i}{1+\delta_i}\eta_i = \frac{1}{1+\delta_i}\mathcal{M}_{i}(\rho)+\frac{\delta_i}{1+\delta_i}\eta^\prime_i \; .
\end{align*}

Since the relative entropy is jointly convex, we have that $\min_{\sigma\in\mathcal{F}_n}\sum_{i}p_i(1+\delta_i)D(\cdot\|\mathcal{M}_i(\sigma))$ is a convex functional and hence
\begin{align*}
    \min_{\sigma\in\mathcal{F}_n}\sum_ip_i(1+\delta_i)D(\omega_i\|\mathcal{M}_i(\sigma))\leq \min_{\sigma\in\mathcal{F}_n}\sum_ip_iD(\mathcal{M}_i(\rho')\|\mathcal{M}_i(\sigma))+\min_{\sigma\in\mathcal{F}_n}\sum_ip_i\delta_iD(\eta_i\|\mathcal{M}_i(\sigma)) \; ,
\end{align*} which rearranged gives
\begin{align}
\min_{\sigma\in\mathcal{F}_n}\sum_ip_iD(\mathcal{M}_i(\rho^\prime)\|\mathcal{M}_i(\sigma))\geq \min_{\sigma\in\mathcal{F}_n}\sum_ip_i(1+\delta_i)D(\omega_i\|\mathcal{M}_i(\sigma))-\min_{\sigma\in\mathcal{F}_n}\sum_ip_i\delta_iD(\eta_i\|\mathcal{M}_i(\sigma))\;.
\end{align}

On the other hand, for any $\sigma \in \Ff_n$ we have that,
\begin{align*}
    D(\omega_i\| \MM_i(\sigma)) &= -S(\omega_i)-\Tr[\omega_i\log(\MM_i(\sigma))] \\ 
    & \geq -h\left(\frac{\delta_i}{1+\delta_i}\right)- \frac{1}{1+\delta_i} S(\MM_i(\rho))- \frac{\delta_i}{1+\delta_i} S(\eta'_i)\\ 
    &\qquad- \frac{1}{1+\delta_i} \Tr[\MM_i(\rho)\log (\MM_i(\sigma)))] -  \frac{\delta_i}{1+\delta_i}\Tr[\eta'_i\log (\MM_i(\sigma))]\\ 
    &= -h\left(\frac{\delta_i}{1+\delta_i}\right) + \frac{1}{1+\delta_i} D(\MM_i(\rho)\|\MM_i(\sigma)) + \frac{\delta_i}{1+\delta_i} D(\eta'_i\|\MM_i(\sigma)) \, ,
\end{align*}
where the first inequality is from ~\cite[Eq.~(1)]{Win16} and follows from the strong sub-additivity of the von Neumann entropy. 
Multiplying with $p_i(1+\delta_i)$, then rearranging the terms, and subsequently taking the minimum yields
\begin{align}
\min_{\sigma\in\mathcal{F}_n}\sum_ip_iD(\MM_i(\rho)\|\MM_i(\sigma)) &\leq \min_{\sigma\in\mathcal{F}_n}\left(\sum_ip_i(1+\delta_i) D(\omega_i\| \MM_i(\sigma))
+\sum_ip_i(1+\delta_i)h\left(\frac{\delta_i}{1+\delta_i}\right) \right. \notag\\
& \qquad \left.-\sum_ip_i\delta_iD(\eta'_i\|\MM_i(\sigma)) \right) \notag\\
&\leq \min_{\sigma\in\mathcal{F}_n}\sum_ip_i(1+\delta_i) D(\omega_i\| \MM_i(\sigma))\notag
    \\
    &\qquad +\sum_ip_i(1+\delta_i)h\left(\frac{\delta_i}{1+\delta_i}\right)-\min_{\sigma\in\mathcal{F}_n}\sum_ip_i\delta_iD(\eta'_i\|\MM_i(\sigma)) \; .
\end{align}
Putting both inequalities together we get that, 
\begin{align}
  \min_{\sigma\in \Ff_n} \sum_ip_i D (\MM_{i}(\rho)\|\MM_{i}(\sigma))&-\min_{\sigma\in \Ff_n} \sum_ip_iD (\MM_{i}(\rho')\|\MM_{i}(\sigma)) \leq  \sum_ip_i(1+\delta_i) h\left(\frac{\delta_i}{1+\delta_i}\right)\notag  \\ &+\min_{\sigma\in\mathcal{F}_n}\sum_ip_i\delta_iD(\eta_i\|\mathcal{M}_i(\sigma))-\min_{\sigma\in\mathcal{F}_n}\sum_ip_i\delta_iD(\eta_i^\prime\|\mathcal{M}_i(\sigma)) \; .
\end{align}
Observing that swapping $\rho\leftrightarrow \rho^\prime$ implies $\eta\leftrightarrow\eta^\prime$ and thus effectively a minus term in the above inequality where $\rho,\rho^\prime,\eta,\eta^\prime$ yields the same inequality with absolute values. Then, we get

\begin{align}\label{equ:proof1}
    &\Big| \min_{\sigma\in \Ff_n} \sum_ip_i D (\MM_{i}(\rho)\|\MM_{i}(\sigma))-\min_{\sigma\in \Ff_n} \sum_ip_iD (\MM_{i}(\rho')\|\MM_{i}(\sigma))\Big|\notag  \\
     &\qquad \leq  \sum_ip_i \left((1+\delta_i) h\left(\frac{\delta_i}{1+\delta_i}\right) \right)  + \left|\min_{\sigma\in \Ff_n} \sum_i p_i\delta_i D(\eta_i \|\MM_i(\sigma))- \min_{\sigma\in \Ff_n} \sum_i p_i \delta_i D(\eta_i' \|\MM_i(\sigma))\right| \notag \\ 
    &\qquad= \sum_ip_i \left((1+\delta_i) h\left(\frac{\delta_i}{1+\delta_i}\right) \right)  + \Bigg| \min_{\sigma\in \Ff_n} \sum_i p_i \delta_i D\left(\frac{(\MM_i(\rho)-\MM_i(\rho'))_+}{\delta_i}\Big\|\MM_i(\sigma)\right)\notag \\
    &\hspace{6cm}-\min_{\sigma\in \Ff_n} \sum_i p_i \delta_i D\left(\frac{(\MM_i(\rho')-\MM_i(\rho))_+  }{\delta_i}\Big \|\MM_i(\sigma)\right)\Bigg| \notag\\ 
    &\qquad\leq  \sum_ip_i \left((1+\delta_i) h\left(\frac{\delta_i}{1+\delta_i}\right) \right)  +  \kappa_{\mathscr{M}} \, ,
\end{align}
where in the second equality we substituted by the definitions of $\eta_i, \eta'_i$  and, lastly, defined 
\begin{align*}
    \kappa_{\mathscr{M}} \coloneqq \sup_{\tau, \tau'\in \States(\HH^{\otimes n})}\Bigg[&\min_{\sigma\in \Ff_n} \sum_i p_i \delta_i D\left(\frac{(\MM_i(\tau)-\MM_i(\tau'))_+}{\delta_i}\big\|\MM_i(\sigma)\right) \\ &\qquad-\min_{\sigma\in \Ff_n} \sum_i p_i \delta_i D\left(\frac{(\MM_i(\tau')-\MM_i(\tau))_+  }{\delta_i}\big \|\MM_i(\sigma)\right)\Bigg]\;,
\end{align*}
which doesn't require absolute values since exchanging $\tau\leftrightarrow \tau^\prime$ gives a global minus sign.

We bound this last term in the following way,
\begin{align}
    \kappa_{\mathscr{M}} &= \sup_{\tau, \tau'\in \States(\HH^{\otimes n})}\Bigg[\min_{\sigma\in \Ff_n} \sum_i p_i \delta_i D\left(\frac{(\MM_i(\tau)-\MM_i(\tau'))_+}{\delta_i}\big\|\MM_i(\sigma)\right) \notag\\
    & \qquad-\min_{\sigma\in \Ff_n} \sum_i p_i \delta_i D\left(\frac{(\MM_i(\tau')-\MM_i(\tau))_+  }{\delta_i}\Bigg \|\MM_i(\sigma)\right)\Bigg]\notag\\
    & \leq \sup_{\tau, \tau'}\min_{\sigma \in \Ff_n}\sum_i p_i \delta_i D\left(\frac{(\MM_i(\tau)-\MM_i(\tau'))_+}{\delta_i}\Bigg\|\MM_i(\sigma)\right)\notag\\
    & \leq \sup_{\tau, \tau'}\sum_i p_i \delta_i D\left(\frac{(\MM_i(\tau)-\MM_i(\tau'))_+}{\delta_i}\Bigg\|\MM_i(\sigma^\star)\right)\notag \\
    &\leq -\sum_ip_i \delta_i\log\lambda_\text{min}(\MM_i(\sigma^\star))\notag
    \\
    &\leq \kappa \cdot \delta +\sum_ip_i \delta_i \log\left(\frac{2}{\delta_i}\right),
\end{align}
where in the second inequality we upper bounded by plugging in the full rank state $\sigma^\star$. The third inequality follows from the fact that $\tilde{\tau}_i=\frac{(\MM_i(\tau)-\MM_i(\tau'))_+}{\delta_i}$ is a valid quantum state and hence the inequality $D(\tilde{\tau}_i\|\MM_i(\sigma^\star))\leq-\log\lambda_{\text{min}}(\MM_i(\sigma^\star))$ holds.
The fourth inequality above stems from the following lower bound
\begin{align}
    \lambda_{\text{min}}(\MM_i(\sigma^\star)) &=\min\{\Tr[E_i\sigma^\star],\Tr[(\id-E_i)\sigma^\star]\} \notag\\ &\geq \lambda_\text{min}(\sigma^\star)\cdot\min\{\Tr[E_i],\Tr[\id-E_i]\} \notag\\
    &\geq \lambda_{\text{min}}(\sigma^\star)\cdot\frac{\delta_i}{2},
\end{align}
where $M_i=(E_i,\id-E_i)$ and
where the last inequality itself follows from the following observation. We have
\begin{align*}
  \delta_i &= |\Tr[E_i(\rho-\rho')]|\leq \|E_i\|_{\infty}\|\rho-\rho'\|_1 \leq 2\Tr[E_i], \\
  \delta_i &= |\Tr[(\id-E_i)(\rho-\rho')]|\leq \|\id-E_i\|_{\infty}\|\rho-\rho'\|_1 \leq 2\Tr[\id-E_i],
\end{align*} which imply that $\min\{\Tr[E_i],\Tr[\id-E_i]\} \geq \frac{\delta_i}{2}$.
Now setting $\kappa\coloneqq-\log\lambda_{\text{min}}(\sigma^\star)$, recalling $\delta = \sum_i p_i \delta_i$ and applying the logarithm yields the desired inequality.

Now putting everything together we have that,
\begin{align}
     \big|\CompDiv_{\Ff_n}(\rho)-\CompDiv_{\Ff_n}(\rho')\big| 
     &  \leq \sup_{\mathcal{M} \in \mathscr{M}^\text{eff}_n} \sum_i p_i \underbrace{\left[   (1+\delta_i)h\left(\frac{\delta_i}{1+\delta_i}\right)  + \delta_i\left(\kappa +\log\left(\frac{2}{\delta_i}\right) \right) \right]}_{=:f(\delta_i)} \notag\\ &\equiv \sup_{\mathcal{M} \in \mathscr{M}^\text{eff}_n} \sum_i p_if(\delta_i)\notag \\ &\leq \sup_{\mathcal{M} \in \mathscr{M}^\text{eff}_n}f\left(\sum_ip_i\delta_i\right) = f(\delta) \leq f(\varepsilon) \notag\\
     & = (1+\varepsilon)h\left(\frac{ \varepsilon}{1+ \varepsilon}\right)  + \varepsilon \left(\kappa +\log\left(\frac{2}{ \varepsilon}\right) \right) ,
\end{align}
where we have used concavity and monotonicity, since $f(\delta_i)\coloneqq (1+\delta_i)h\left(\frac{\delta_i}{1+\delta_i}\right)  + \delta_i\left(\kappa +\log\left(\frac{2}{\delta_i}\right) \right): [0,1]\to [0,3+\kappa]$ is a positive, continuous, concave, and monotonically increasing function.
\end{proof}

Moreover, as discussed in~\cite{Win16}, $\kappa$ is guaranteed to be finite since $\Ff_n$ contains at least one full-rank state. For example, in the case of the entanglement resource, $\Ff_n = \mathrm{SEP}_n\ni \frac{\id}{2^n} $ and so $\kappa = n$.
Therefore, for the \term{computational measured relative entropy of resource} another property is fulfilled: \term{computational faithfulness}. 

\begin{cor}
Let $\Ff_n\subset\Ss(\HH^{\otimes n})$ be closed, convex, and bounded containing a full-rank state.
Given two families of states  $\{\rho_n\}_{n\in\NN}$ and $\{\rho'_n\}_{n\in\NN}$ which are computationally indistinguishable, then, 
    \begin{align}
        \big|\CompDiv_{\Ff_n}(\rho_n)-\CompDiv_{\Ff_n}(\rho'_n)\big| \leq \negl(n) \,
    \end{align}
\end{cor}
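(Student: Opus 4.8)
The plan is essentially to substitute the computational distance into the asymptotic continuity bound of \Cref{th:comptfannes} and verify that the resulting estimate is negligible term by term. First I would use \Cref{rem:crypto} to translate the hypothesis: computational indistinguishability of $\{\rho_n\}_n$ and $\{\rho'_n\}_n$ means precisely that $\varepsilon_n\coloneqq\CompTrDis(\rho_n,\rho'_n)\leq\negl(n)$. Since each $\Ff_n$ is assumed closed, convex, bounded, and to contain a full-rank state $\sigma^\star_n$, \Cref{th:comptfannes} applies with $\varepsilon=\varepsilon_n$, giving
\begin{align*}
\big|\CompDiv_{\Ff_n}(\rho_n)-\CompDiv_{\Ff_n}(\rho'_n)\big|\leq(1+\varepsilon_n)\,h\!\left(\frac{\varepsilon_n}{1+\varepsilon_n}\right)+\varepsilon_n\left(\kappa_n+\log\frac{2}{\varepsilon_n}\right),
\end{align*}
where $\kappa_n\coloneqq-\log\lambda_{\min}(\sigma^\star_n)$.

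Next I would bound the right-hand side. For small $x$ one has $h(x)\le 2x\log(1/x)$, and $x\mapsto x\log(1/x)$ is increasing near $0$, so the entropic term is $O(\varepsilon_n\log(1/\varepsilon_n))$, and the same expression also dominates $\varepsilon_n\log(2/\varepsilon_n)$. A short monotonicity argument shows $\varepsilon_n\log(1/\varepsilon_n)$ is negligible whenever $\varepsilon_n$ is: given any $c$, choose $n$ large enough that $\varepsilon_n\le n^{-(c+1)}$; then $\varepsilon_n\log(1/\varepsilon_n)\le n^{-(c+1)}(c+1)\log n\le n^{-c}$ for $n$ large. For the term $\varepsilon_n\kappa_n$ it suffices that $\kappa_n=\poly(n)$, which holds whenever $\Ff_n$ contains a full-rank free state with inverse-exponentially small minimal eigenvalue — e.g. for entanglement $\mathrm{SEP}_n\ni\id/2^n$, giving $\kappa_n=n$ and $\varepsilon_n\cdot n\le\negl(n)$. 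Since a sum of finitely many negligible functions is negligible, the three contributions combine to yield the claimed bound.

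The only genuine subtlety, and the step I would flag explicitly, is the polynomial growth of $\kappa_n$: the continuity bound scales with $-\log\lambda_{\min}(\sigma^\star_n)$, so if the free set only contained full-rank states with doubly-exponentially small eigenvalues the product $\varepsilon_n\kappa_n$ need not vanish. For all standard resources (entanglement, coherence, magic, \dots) the maximally mixed state is free and $\kappa_n=n$, so this is not an obstruction in practice; I would simply record the mild assumption $\kappa_n=\poly(n)$ (equivalently $\lambda_{\min}(\sigma^\star_n)\ge 2^{-\poly(n)}$) as a hypothesis, under which the proof is a routine estimate.
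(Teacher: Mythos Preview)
Your proposal is correct and follows exactly the paper's approach: the paper's proof is the one-liner ``It directly follows from \Cref{th:comptfannes} assuming that $\CompTrDis(\rho_n,\rho'_n)\le\negl(n)$.'' Your term-by-term analysis is in fact more careful than the paper, and your observation that one tacitly needs $\kappa_n=\poly(n)$ (equivalently $\lambda_{\min}(\sigma^\star_n)\ge 2^{-\poly(n)}$) is a genuine point the paper leaves implicit---the discussion just before the corollary mentions $\kappa=n$ for $\mathrm{SEP}_n$, but the corollary as stated does not include this hypothesis.
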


\begin{proof}
    It directly follows from \Cref{th:comptfannes} assuming that $\CompTrDis(\rho_k,\rho'_n) \leq \negl(n)$.
\end{proof}

The operational implications of this equation are as follows: if a family of states $\{\rho_n\}_{n\in\NN}$ with high (information-theoretic) resource content is computationally indistinguishable from a family of free states $\{\sigma_n\}_{n\in\NN}$ with $\sigma_n\in\Ff_n$, then the computational measured relative entropy of resource of $\{\rho_n\}_{n\in\NN}$ is negligible, i.e., $\CompDiv_{\Ff_n}(\rho_n)=\negl(n)$. This phenomenon has been studied before in the context of pseudoentanglement~\cite{ABF+23,ABV23,LREJ25} and other pseudoresources~\cite{HBE24,BMB+24, GY25} from a more heuristic perspective: if a computational distinguisher is able to distill more resource from one of the families, then it can distinguish them. Here, we provide an analytical expression that quantifies this for every resource.

\subsection{Computational Entanglement Measures}\label{sec:entanglement}

Let us now focus on the resource of entanglement. Analogously to the information-theoretic scenario, we define the computational measured relative entropy of entanglement as
\begin{align}
    \CompE_R(\rho) \coloneqq \CompDiv_{\Sep_n}(\rho)
    = \min_{\sigma\in \Sep_n}\, \CompDiv^{\Meff_n}(\rho\Vert\sigma) \; ,
\end{align}
where we take $\mathcal{F}_n = \Sep_n$ to be the set of separable states, i.e., 
\begin{align}
    \Sep_n\coloneqq\conv\{\rho^n_A\otimes \rho^n_B \, : \, \rho^n_A \in \Ss(\HH^{\otimes n}_A), \rho^n_B \in \Ss(\HH^{\otimes n}_B)\}
\end{align}
Let us again show an explicit separation between the computational measures and its information-theoretic counterparts using the explicit cryptographic construction of \Cref{sec:example},
\begin{align*}
    \psi^{AB}_n &\coloneqq \frac{1}{4}\left(\ketbra{\Phi^+}{\Phi^+}^{AB}+ \ketbra{\Phi^-}{\Phi^-}^{AB}\right)\otimes \left(\rho^A_{0,n}+\rho^A_{1,n}\right)\, , \\
    \phi^{AB}_n &\coloneqq \frac{1}{2}\left(\ketbra{\Phi^+}{\Phi^+}^{AB}\otimes\rho^A_{0,n} + \ketbra{\Phi^-}{\Phi^-}^{AB}\otimes \rho^A_{1,n}\right)\,,
\end{align*}
where $\ket{\Phi^+}^{AB} = \frac{1}{\sqrt{2}}( \ket{00}^{AB}+ \ket{11}^{AB})$, $\ket{\Phi^-}^{AB} = \frac{1}{\sqrt{2}}( \ket{00}^{AB}-\ket{11}^{AB})$   and  $\{\rho^A_{0,n},\rho^A_{1,n}\}_{n \in \NN}$ are EFI pairs. One can easily see that $E_R(\phi^{AB}_n)\approx1$ by construction. On the other hand, as stated in \Cref{sec:example}, $\CompTrDis(\psi^{AB}_n, \phi^{AB}_n) \leq \negl(n)$. Then, by \Cref{th:comptfannes}, it follows that $\CompE_R(\phi^{AB}_n) \le \negl(n)$. Moreover, this explicit gap can again be (polynomially) amplified as proven in~\cite[Lemma 5.1]{GE24}.

To relate the \emph{computational measured relative entropy of entanglement} to the \emph{(one-shot) computational entanglement cost} and the \emph{(one-shot) computational distillable entanglement} defined by~\cite{ABV23}, we first fix the class of efficient free operations we work with. In entanglement theory, local operations and classical communication (LOCC) form the canonical class of free maps: they cannot increase entanglement, and they directly model a practical two-party communication setting. 

\begin{definition} [LOCC Map~\cite{ABV23}] A quantum channel is said to be an LOCC map 
\begin{equation*}
    \Gamma : \HH_A \otimes\HH_B \mapsto \HH_{\bar{A}} \otimes \HH_{\bar{B}}
\end{equation*}
if it can be implemented by a two-party interactive protocol where each party can implement arbitrary local quantum computations and the two parties can exchange arbitrary classical communication. 
\end{definition}
We now consider their complexity-constrained counterparts.
\begin{definition} [Circuit Description of an LOCC Map~\cite{ABV23}]
Given an LOCC map $\Gamma$, its circuit description is given by two families of circuits $\{\cC_{A,i}\}_{i \in \{1,..., r\}}$ and $\{\cC_{B,i}\}_{i \in \{1,..., r\}}$, each of them acting on $n_A +  t_A + c$ and $n_B +  t_B + c$ qubits respectively, such that the following procedure implements the map $\Gamma$ on an arbitrary input $\varphi_{AB} \in (\CC^2)^{\otimes n_A} \otimes (\CC^2)^{\otimes n_B}$ : 

\begin{enumerate}
    \item Registers $A$ and $B$ of $n_A $ and $n_B$ qubits are initialized in the state $\varphi_{AB}$. Ancilla registers $A'$ and $B'$ of $t_A$ and $t_B$ qubits are initialized in the $\ket{0}$ state. Communication register $C$ is also initialized in the $\ket{0}$ state.

    \item For $i = 1,..., r$, the circuit $\cC_{A,i}$ is applied to registers $A$, $A'$ and $C$. Then, register $C$ is measured in the computational basis. Next, the circuit $\cC_{B,i}$ is applied to registers $B$, $B'$, and $C$. Lastly, register $C$ is measured in the computational basis.

    \item The final output of the LOCC map in the subspace $\HH_{A'}$ corresponds to the state in the registers $A$ and $A'$. In the same way, the state in the registers $B$ and $B'$ is the final output on $\HH_{\bar{B}}$.
\end{enumerate}
A family of LOCC maps $\{\Gamma_{n}\}_{n\in \NN}$ is said to be efficient if there exists a polynomial $c$ such that for all $n$, $\Gamma_{n}$ has a circuit description whose total number of gates, including the ancilla creation and qubit measurements, is at most $c(n)$.
\end{definition}
Let us take the class of efficient LOCC as the set of free operations as in \Cref{def:eff.free.op}. It is easy to see that, by construction, it satisfies the state condition: LOCC operations map separable states to separable states. Moreover, by the efficiency condition, if an input state is efficiently preparable, then so is the output, since composing a polynomial-size state-preparation circuit with a polynomial-size LOCC circuit remains polynomial. 
\begin{rem}
 For the test stability property, we consider efficient LOCC channels $\Gamma_n \, : \, \Ss(\HH^{\otimes n}_1 )\rightarrow \Ss(\HH_{2,n})$, where $\dim\HH_{2,n}\in2^{\poly(n)}$ with gate complexity $c(n)$, and an output-side efficient effect operator $E \in \mathbf{E}^\text{eff}_n(\HH_{2, n})$ with complexity cost $p(n)$.\footnote{While this kind of mapping is not directly captured by \Cref{def:eff.free.op}, it can easily be extended to this form.} Hence, the composed effect operator $\Gamma^*(E)$ belongs to the set of efficient effects $\mathbf{E}^\text{eff}_n(\HH_1^{\otimes n})$ if the associated complexity cost $q(n)$ of the set $\{\mathbf{E}^\text{eff}_n(\HH_1^{\otimes n})\}_{n \in \NN}$ satisfies,
\begin{align*}
    q(n) \geq p(n) + c(n) \, ,
\end{align*}
for all $n \in \NN$. Since the set $\{\mathbf{E}^\text{eff}_n(\HH_1^{\otimes n})\}_{n \in \NN}$ can be generated in polynomial time, without loss of generality we assume from now on that both conditions of \Cref{def:eff.free.op} are satisfied by the set of efficient LOCC operations. As such, the computational measured relative entropy of entanglement satisfies the monotonicity property under efficient LOCC operations.
\end{rem}

With efficient LOCC fixed as our class of free operations, we are now ready to introduce the computational entanglement measures defined by~\cite{ABV23}. Operationally, they encode how many Bell states $\Phi\coloneqq\ket{\Phi_1}\bra{\Phi_1}$ one can distill from or, respectively, are required to produce a given state using only efficient LOCC operations, where the Bell state is defined as $ \ket{\Phi_1}\coloneqq\frac{1}{\sqrt{2}}(\ket{00}+\ket{11})$. Moreover, it satisfies that $\Phi^{\otimes n}=\ket{\Phi_n}\bra{\Phi_n}$, where $ \ket{\Phi_n}\coloneqq\frac{1}{\sqrt{2^n}}\sum_{i=0}^{2^n-1}|i\rangle|i\rangle$.

\begin{definition}[Computational One-Shot Entanglement Cost~\cite{ABV23}]\label{def:ent.cost}
Let $\varepsilon : \NN_+ \rightarrow [0,1]$ and $n\in \NN_+$. Fix polynomial functions $n_A, n_B: \NN_+ \rightarrow \NN_+$. Let $\{\rho_{AB}^{n}\}_{n} $ be a family of quantum states such that, for any $n\geq 1$, $\rho_{AB}^{n} \in \HH_A \otimes \HH_B$ is a bipartite state on $n_A(n) + n_B(n)$. The function $k : \NN \rightarrow \NN$ is an upper bound on the computational entanglement cost of the family $\{\rho_{AB}^{n}\}_{n} $, i.e. $\hat{E}_C^{\varepsilon} (\{\rho_{AB}^{n}\}_{n}) \leq k(n)$, if there exists an efficient LOCC map family $\{\Gamma_{n}\}_{n}$ such that, for each $n\geq 1$, $\Gamma_{n}$ takes an input $k(n)$ maximally entangled $d$-dimensional states, and
\begin{equation*}
     1 - \text{F} ( \rho^{n}_{AB} , \Gamma_{n}(\Phi^{\otimes k (n) })) \leq \varepsilon ({n}) \, .
\end{equation*} 
\end{definition}

\begin{definition} [Computational One-Shot Distillable Entanglement~\cite{ABV23}] \label{def:ent.dist}
Let $\varepsilon : \NN_+ \rightarrow [0,1]$ and $n\in \NN_+$. Fix polynomial functions $n_A, n_B: \NN_+ \rightarrow \NN_+$. Let $\{\rho_{AB}^{n}\}_{n} $ be a family of quantum states such that, for any $n\geq 1$, $\rho_{AB}^{n} \in \HH_A \otimes \HH_B$ is a bipartite state on $n_A(n) + n_B(n)$. The function $m : \NN \rightarrow \NN$ is a lower bound on the computational distillable entanglement of the family $\{\rho_{AB}^{n}\}_{n} $, i.e. $\hat{E}_D^{\varepsilon} (\{\rho_{AB}^{n}\}_{n}) \geq m(n)$, if there exists an efficient LOCC map family $\{\Gamma_{n}\}_{n}$ such that, for each $n\geq 1$, $\Gamma_{n}$ outputs a $2m(n)-$qubit state, and 
\begin{equation*}
   1 - \text{F} (  \Gamma_{n}(\rho^{n}_{AB}) ,\Phi^{\otimes m (n)}) \leq \varepsilon ({n})   \,.
\end{equation*}
   
\end{definition}

Let us first evaluate $\CompE_R$ on maximally entangled states before comparing it with the previously defined computational entanglement measures.

\begin{lem} Assume that the Bell projector ${\Phi_n}=\ket{\Phi_n}\bra{\Phi_n}={\Phi_1}^{\otimes n}\in \Eeff_{2n}$ is efficiently generated.\footnote{For example, any set of efficient effects $\Eeff_n$ which is generated by the gateset $\mathcal{G}\supset\{H,CNOT\}$ and the super-additive polynomial $p(n)=n^2$.}
Then it follows that 
\begin{align*}
    \CompE_R ( \Phi^{\otimes n}) = n\, ,
\end{align*}
where $\Phi^{\otimes n}\coloneqq\ket{\Phi_n}\bra{\Phi_n} \in \BB\left((\CC_A^2)^{\otimes n} \otimes(\CC_B^2)^{\otimes n}\right) $.
\end{lem}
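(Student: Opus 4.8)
The plan is to establish the two inequalities $\CompE_R(\Phi^{\otimes n}) \le n$ and $\CompE_R(\Phi^{\otimes n}) \ge n$ separately; the first is essentially a corollary of the known information-theoretic value, while the second is carried by a single well-chosen efficient measurement. Throughout, $\Phi^{\otimes n}$ lives on the $2n$-qubit space $(\CC^2)^{\otimes n}_A\otimes(\CC^2)^{\otimes n}_B$, so the relevant measurement set is $\Meff_{2n}$, consistent with the hypothesis $\Phi_n\in\Eeff_{2n}$.

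For the upper bound, I would first note that the computational measured relative entropy never exceeds the Umegaki relative entropy: for any $\rho,\sigma$ and any binary POVM $M$, the data-processing inequality applied to the channel $\MM_M$ gives $D(\MM_M(\rho)\|\MM_M(\sigma)) \le D(\rho\|\sigma)$, and taking the supremum over $M\in\Meff_{2n}$ yields $\CompDiv^{\Meff_{2n}}(\rho\|\sigma)\le D(\rho\|\sigma)$. Minimizing over $\sigma\in\Sep_n$ then gives $\CompE_R(\rho)\le\min_{\sigma\in\Sep_n}D(\rho\|\sigma)=E_R(\rho)$, the information-theoretic relative entropy of entanglement. It is standard (Vedral--Plenio) that $E_R$ of a maximally entangled state of Schmidt rank $2^n$ equals $n$; concretely, the separable state $\sigma^\star=\sum_{i=0}^{2^n-1}2^{-n}\ketbra{ii}{ii}$ satisfies $\Phi^{\otimes n}\ll\sigma^\star$ and, since $\Phi^{\otimes n}$ is pure and supported on the range of $\sum_i\ketbra{ii}{ii}$, a direct computation gives $D(\Phi^{\otimes n}\|\sigma^\star)=-\Tr[\Phi^{\otimes n}\log\sigma^\star]=n$. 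Hence $\CompE_R(\Phi^{\otimes n})\le n$.

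For the lower bound --- the heart of the argument --- I would exhibit one efficient binary measurement that already beats every separable state. Take $M=(\Phi_n,\id-\Phi_n)$; by the hypothesis of the lemma $\Phi_n\in\Eeff_{2n}$, so $M\in\Meff_{2n}$. Since $\Phi^{\otimes n}=\Phi_n$ is the rank-one projector itself, $\Tr[\Phi_n\Phi^{\otimes n}]=1$, so $\MM_M(\Phi^{\otimes n})$ is the deterministic classical distribution $(1,0)$. For any $\sigma\in\Sep_n$, the standard maximal-overlap bound gives $q\coloneqq\Tr[\Phi_n\sigma]\le 2^{-n}$: writing $\sigma$ as a convex mixture of product states $\ketbra{a}{a}\otimes\ketbra{b}{b}$, each term contributes $|\langle\Phi_n|(\ket{a}\otimes\ket{b})|^2=2^{-n}|\langle\bar a|b\rangle|^2\le 2^{-n}$ to $\Tr[\Phi_n\,\cdot\,]$. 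Thus $\MM_M(\sigma)=(q,1-q)$ with $q\le 2^{-n}$, and the classical relative entropy between $(1,0)$ and $(q,1-q)$ equals $-\log q\ge n$. Therefore $\CompDiv^{\Meff_{2n}}(\Phi^{\otimes n}\|\sigma)\ge D(\MM_M(\Phi^{\otimes n})\|\MM_M(\sigma))\ge n$ for every $\sigma\in\Sep_n$; and whenever $\Phi^{\otimes n}\not\ll\sigma$ --- the only case in which $q$ could vanish --- we already have $\CompDiv^{\Meff_{2n}}(\Phi^{\otimes n}\|\sigma)=+\infty\ge n$. Taking the minimum over $\sigma\in\Sep_n$ gives $\CompE_R(\Phi^{\otimes n})\ge n$, and combining the two bounds finishes the proof.

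I do not expect a genuine obstacle here: the whole argument rests on the choice of the test $(\Phi_n,\id-\Phi_n)$ together with two elementary classical facts --- the separable-state overlap bound and the evaluation of a binary relative entropy against a deterministic distribution. The one structural point worth stressing is that the efficiency hypothesis $\Phi_n\in\Eeff_{2n}$ is exactly what makes this test admissible in the computational model; without it the lower bound would fail, and indeed for pseudoentangled states the analogous quantity collapses, as shown earlier in the paper.
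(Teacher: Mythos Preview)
Your proposal is correct and follows essentially the same approach as the paper: both directions use the same witnesses --- the efficient test $(\Phi_n,\id-\Phi_n)$ for the lower bound and the separable state $\sigma^\star=2^{-n}\sum_i\ketbra{ii}{ii}$ together with data processing for the upper bound. The only cosmetic difference is that the paper proves the overlap bound $\langle\Phi_n|\sigma|\Phi_n\rangle\le 2^{-n}$ via a Choi--Cauchy--Schwarz argument rather than your direct decomposition into pure product states, and you additionally handle the $q=0$ edge case explicitly.
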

Note that the condition ${\Phi}^{\otimes n}\in \Eeff_{2n}$ is implied by the much simpler one of $\Phi\in \Eeff_{2}$ whenever the polynomial in the construction of the family $\{\Eeff_n\}_{n\in\NN}$ is (super)-additive.

\begin{proof}
We first show that $\CompE_R ( \Phi^{\otimes n}) \geq n$. Since $\Phi$ can be efficiently generated, the measurement $M_{{\Phi_n}}=({\Phi_n}, \mathbb{I} - {\Phi_n})\in \mathbf{M}^{\mathrm{eff}}_{2n}$. 
Then, since $\langle\Phi_n | \sigma | \Phi_n\rangle \leq \frac{1}{2^n}$ for any $\sigma \in \Sep_n$, we have $D(\mathcal{M}_{M_{{\Phi_n}}}(\Phi_n) \,\|\, \mathcal{M}_{M_{{\Phi_n}}}(\sigma)) \geq n$ for any $\sigma \in \Sep_n$.
Let us prove this in detail for completeness. 
Due to convexity of the map $\sigma\mapsto\langle\Phi_n|\sigma|\Phi_n\rangle$ it suffices to consider product states $\sigma=\rho_A\otimes\rho_B$. Now by a standard Choi argument we have
\begin{align}
    \langle\Phi_n|\rho_A\otimes\rho_B|\Phi_n\rangle&=2^{-n}\Tr\left[\Big(\sum_{i,j=0}^{2^n-1}|i\rangle\langle j|_A\otimes|i\rangle\langle j|_B\Big)(\rho_A\otimes\rho_B)\right] \\ &= 2^{-n}\Tr\left[\rho_A\rho_B^T\right]\leq 2^{-n}\|\rho_A\|_2\|\rho_B^T\|_2\leq 2^{-n},
\end{align} where in the last line we used Cauchy-Schwartz inequality and normalization of the states.
%Alternatively one could do more complicatedly
%\begin{align*}
%\langle\Phi_n|\psi\otimes\varphi|\Phi_n\rangle &=\frac{1}{2^n}\sum_{i,j=0}^{2^n-1}\bra{i}\bra{i}\big(\ketbra{\psi}{\psi}\otimes\ketbra{\varphi}{\varphi}\big) \ket{j}\ket{j}=\frac{1}{2^n}\sum_{i,j=0}^{2^n-1}\langle i|\psi\rangle\langle i|\varphi\rangle\langle\psi|j\rangle\langle\varphi|j\rangle 
%\\ &\leq \frac{1}{2^n}\sum_{i=0}^{2^n-1}|\langle i|\psi\rangle|\ |\langle i|\varphi\rangle |\sum_{j=0}^{2^n-1}|\langle\psi|j\rangle|\ |\langle\varphi|j\rangle|
%\\ &\leq\frac{1}{2^n}\Big(\sum_{i=0}^{2^n-1}|\langle i|\psi\rangle|^2\Big)^{\frac{1}{2}}\Big(\sum_{i=0}^{2^n-1}|\langle i|\varphi\rangle|^2\Big)^{\frac{1}{2}}\Big(\sum_{j=0}^{2^n-1}|\langle\psi|j\rangle|^2\Big)^{\frac{1}{2}}\Big(\sum_{j=0}^{2^n-1}|\langle\varphi|j\rangle|^2\Big)^{\frac{1}{2}} \leq \frac{1}{2^n},
%\end{align*} where we used the Cauchy-Schwartz inequality and normalization. Therefore,
\begin{align*}
    \CompE_R ( \Phi_n) &= \min_{\sigma\in \Sep_n} \sup_{M \in \overlineMeff_n}\CompDiv (\MM_M(\Phi_n)\|\MM_M(\sigma)) \\
    &\geq\min_{\sigma\in \Sep_n}D(\mathcal{M}_{M_{{\Phi_n}}}(\Phi_n) \,\|\, \mathcal{M}_{M_{{\Phi_n}}}(\sigma)) \\
    & =\min_{\sigma\in \Sep_n}(- \log \langle\Phi_n|\sigma|\Phi_n\rangle) \geq n\,.
\end{align*}

We now show $\CompE_R ( \Phi_n) \leq n$. By definition,
\begin{align*}
    \CompE_R ( \Phi_n) &= \min_{\sigma\in \Sep_n} \sup_{M \in \overlineMeff_n}\CompDiv (\MM_M(\Phi_n)\|\MM_M(\sigma)) \\
    &\leq \sup_{M \in \overlineMeff_n}\CompDiv (\MM_M(\Phi_n)\|\MM_M(\sigma^*))\, ,
\end{align*}
where $\sigma_n^* = 2^{-n} \sum_{x\in \{0,1\}^n} \ketbra{x}{x}_A \otimes \ketbra{x}{x}_B$ which is separable. Lastly, by data processing,
\begin{align*}
     \sup_{M \in \overlineMeff_n}\CompDiv (\MM_M(\Phi_n)\|\MM_M(\sigma^*))\leq D(\Phi_n \| \sigma^*) = -\langle \Phi_n |\log \sigma^* | \Phi_n\rangle = n \,.
\end{align*}
\end{proof}
Therefore, in this simple case of effectively not assuming computational restrictions, our measure attains the expected maximum. Let us now also relate $\CompE_R$ to the computational entanglement cost and computational distillable entanglement proposed by~\cite{ABV23}, assuming that the Bell projector $\Phi_n \in \Eeff_{2n}$ is efficiently preparable.

\begin{lem}\label{lem: comp.ent.example}
Assume as before that the effect operator $\Phi^{\otimes n}\in\Eeff_{2n}$ is efficiently generated. Given a family of states $\{\rho_{AB}^{n} \}_{n} \subset \BB\!\left((\mathbb{C}_A^2)^{\otimes n}\otimes(\mathbb{C}_B^2)^{\otimes n}\right)$, it follows that
    \begin{align*}
        \CompE_D(\rho_{AB}^{n}) &\leq \CompE_R(\rho_{AB}^{n} ) + g\!\left(\sqrt{\varepsilon(n)},  n\right),\\
        \CompE_C(\rho_{AB}^{n}) &\geq \CompE_R(\rho_{AB}^{n} ) - g\!\left(\sqrt{\varepsilon(n)},  n\right),
    \end{align*}
where  $\CompE_D(\rho_{AB}^{n} )$ and $\CompE_C(\rho_{AB}^{n} )$ are, respectively, the computational distillable entanglement (\Cref{def:ent.dist}) and the computational entanglement cost (\Cref{def:ent.cost}), and
\[
g\!\left(\sqrt{\varepsilon(n)}, n\right)\coloneqq(1+\sqrt{\varepsilon(n)})\, h\!\left(\frac{\sqrt{\varepsilon(n)}}{1+\sqrt{\varepsilon(n)}}\right) + \sqrt{\varepsilon(n)}\left(n + \log\frac{2}{\sqrt{\varepsilon(n)}}\right).
\]
\end{lem}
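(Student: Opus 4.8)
The plan is to derive both inequalities from the (information-theoretic–style) sandwich $\CompE_D \le \CompE_R \le \CompE_C$, using three ingredients already established in the excerpt: the evaluation $\CompE_R(\Phi^{\otimes \ell}) = \ell$ on maximally entangled states from the preceding lemma, monotonicity of $\CompE_R$ under efficient LOCC, and the asymptotic continuity bound of \Cref{th:comptfannes}. The first thing I would record is that efficient LOCC is admissible as a class of free operations in the sense of \Cref{def:eff.free.op} — the state condition is immediate (LOCC maps separable states to separable states), and test stability holds once the gate budget of $\{\mathbf{E}^{\mathrm{eff}}_n\}_{n}$ dominates the sum of the effect complexity and the LOCC circuit complexity, exactly as in the remark preceding the lemma. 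Hence \Cref{lem:prop.res}, applied with $\Ff=\Sep$, gives $\CompE_R(\Gamma_n(\tau)) \le \CompE_R(\tau)$ for every efficient LOCC map $\Gamma_n$ and every input $\tau$.

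Next I would set up the continuity step. Since \Cref{th:comptfannes} is phrased in terms of the computational trace distance whereas \Cref{def:ent.cost} and \Cref{def:ent.dist} control a fidelity, I would convert: because the computational fidelity $\CompFid$ is an infimum over a subset of all POVMs, $\CompFid \ge F$, so a bound $1 - F \le \varepsilon(n)$ upgrades to $1 - \CompFid \le \varepsilon(n)$, and the computational Fuchs–van de Graaf inequality (\Cref{lem:comp.fuchs.van.Graaf}) then yields $\CompTrDis \le \sqrt{1-\CompFid} \le \sqrt{\varepsilon(n)}$. I would also note that $\Sep_n$ is closed, convex, bounded and contains the full-rank maximally mixed state, so \Cref{th:comptfannes} applies with $\kappa = n$ and produces precisely the function $g(\sqrt{\varepsilon(n)},n)$ appearing in the statement.

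With these in hand the two halves are symmetric. For the distillable bound, set $m(n) = \CompE_D(\rho^n_{AB})$ and let $\{\Gamma_n\}_n$ be a witnessing efficient LOCC family, so $\CompTrDis(\Gamma_n(\rho^n_{AB}),\Phi^{\otimes m(n)}) \le \sqrt{\varepsilon(n)}$; then \Cref{th:comptfannes} gives $\CompE_R(\Gamma_n(\rho^n_{AB})) \ge \CompE_R(\Phi^{\otimes m(n)}) - g(\sqrt{\varepsilon(n)},n) = m(n) - g(\sqrt{\varepsilon(n)},n)$, and monotonicity gives $\CompE_R(\rho^n_{AB}) \ge \CompE_R(\Gamma_n(\rho^n_{AB}))$, i.e.\ $\CompE_D(\rho^n_{AB}) \le \CompE_R(\rho^n_{AB}) + g(\sqrt{\varepsilon(n)},n)$. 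For the cost bound, set $k(n) = \CompE_C(\rho^n_{AB})$ with witnessing $\{\Gamma_n\}_n$ so $\CompTrDis(\rho^n_{AB},\Gamma_n(\Phi^{\otimes k(n)})) \le \sqrt{\varepsilon(n)}$; then $k(n) = \CompE_R(\Phi^{\otimes k(n)}) \ge \CompE_R(\Gamma_n(\Phi^{\otimes k(n)})) \ge \CompE_R(\rho^n_{AB}) - g(\sqrt{\varepsilon(n)},n)$, where the first inequality is monotonicity and the last is \Cref{th:comptfannes}; this is $\CompE_C(\rho^n_{AB}) \ge \CompE_R(\rho^n_{AB}) - g(\sqrt{\varepsilon(n)},n)$.

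I expect the only real friction to be bookkeeping rather than conceptual: confirming that efficient LOCC meets the test-stability condition of \Cref{def:eff.free.op} (handled by taking the polynomial defining $\{\mathbf{E}^{\mathrm{eff}}_n\}_n$ large enough relative to the LOCC circuit size), and ensuring the continuity bound is applied on the correct output Hilbert spaces so that $\kappa$ comes out as $n$ and the error is exactly $g(\sqrt{\varepsilon(n)},n)$. The one place a careless argument would break is the fidelity-to-distance conversion: it must pass through the \emph{computational} trace distance via \Cref{lem:comp.fuchs.van.Graaf}, since substituting an information-theoretic continuity bound into the same scheme would be exponentially too lossy and would not close the argument.
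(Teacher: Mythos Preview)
Your proposal is correct and follows essentially the same approach as the paper: use the evaluation $\CompE_R(\Phi^{\otimes \ell})=\ell$, the asymptotic continuity bound \Cref{th:comptfannes} (with $\kappa=n$ for $\Sep$), and monotonicity of $\CompE_R$ under efficient LOCC, applied in turn to a witnessing family for $\CompE_D$ and $\CompE_C$.

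The only slight divergence is in the fidelity--to--computational--trace--distance conversion. You go via $\CompFid\ge F$ and then the computational Fuchs--van de Graaf inequality; the paper instead applies the \emph{information-theoretic} Fuchs--van de Graaf to get $\Delta\le\sqrt{\varepsilon(n)}$ and then uses that $\CompTrDis\le\Delta$ (data processing, since the computational trace distance is a supremum over a subset of measurements). Both routes land on $\CompTrDis\le\sqrt{\varepsilon(n)}$ and are equally valid, so your closing remark that the conversion ``must'' pass through \Cref{lem:comp.fuchs.van.Graaf} is too strong --- what is essential is that the resulting bound is on $\CompTrDis$ so that \Cref{th:comptfannes} applies, not which Fuchs--van de Graaf inequality you invoke to get there.
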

\begin{proof}
We first show the bound for the computational distillable entanglement. By \Cref{def:ent.dist}, $\CompE_D(\rho^n_{AB}) \ge m(n)$ if there exists a family of efficient LOCC maps $\{{\Gamma}_{n}\}_{n}$ such that, for each $n\in\NN$,
\begin{align*}
    1-F\left(\eta_{AB}, \Phi^{\otimes m(n)}\right) \le \varepsilon(n),\qquad \eta_{AB}\coloneqq{\Gamma}_{n}(\rho^n_{AB}).
\end{align*}
Using the (information-theoretic) Fuchs–van de Graaf and the data-processing inequality, this also implies that $\CompTrDis(\eta_{AB}, \Phi^{\otimes m(n)}) \le \sqrt{\varepsilon(n)}$. Then, by \Cref{th:comptfannes}, it follows that
\begin{align*}
    \bigl|\CompE_R(\Phi^{\otimes m(n)})-\CompE_R(\eta_{AB})\bigr|\le g\!\left(\sqrt{\varepsilon(n)}, n\right).
\end{align*}
Since, by \Cref{lem: comp.ent.example} $\CompE_R(\Phi^{\otimes m(n)}) = m(n)$, and since $\CompE_R(\eta_{AB}) \le \CompE_R(\rho^n_{AB})$ by the monotonicity property under efficient maps, we obtain $m(n)\le \CompE_R(\rho^n_{AB}) + g(\sqrt{\varepsilon(n)},n)$. Taking the supremum over all achievable $m(n)$ yields the stated bound on $\CompE_D$. 

The computational entanglement cost bound follows analogously. We have that $\CompE_C(\rho^n_{AB}) \le k(n)$, if there exists a family of efficient LOCC maps $\{{\Gamma}_{n}\}_{n}$ such that, for each $n\in\NN$,
\begin{align*}
    1-F\left(\rho^n_{AB}, \eta_{AB}\right) \le \varepsilon(n),\qquad \eta_{AB}\coloneqq{\Gamma}_{n}(\Phi^{\otimes k(n)}).
\end{align*}
Similarly, by Fuchs–van de Graaf and DPI, $\CompTrDis\left(\rho^n_{AB}, \eta_{AB}\right) \le \sqrt{\varepsilon(n)}$. Then, by \Cref{th:comptfannes} the following bound holds,
\begin{align*}
    \bigl|\CompE_R(\eta_{AB})-\CompE_R(\rho^n_{AB})\bigr|\le g\!\left(\sqrt{\varepsilon(n)}, n\right).
\end{align*}
By monotonicity under efficient maps, $\CompE_R(\eta_{AB}) \le \CompE_R(\Phi^{\otimes k(n)})$. Moreover, by \Cref{lem: comp.ent.example} it holds that $\CompE_R(\Phi^{\otimes k(n)}) = k(n)$. Therefore, it follows that $k(n) \geq \CompE_R(\rho^n_{AB}) - g\!\left(\sqrt{\varepsilon(n)}, n\right)$. Then, taking the infimum over all feasible $k(n)$ yields to the desired inequality.
\end{proof}

Consequently, $\CompE_R$ is sandwiched between the computational distillable entanglement and the computational entanglement cost, up to the continuity term $g(\sqrt{\varepsilon(n)},n)$. This parallels the information-theoretic hierarchy $E_D \le E_R \le E_C $ and confirms that $\CompE_R$ behaves as an entanglement measure in the computational setting.

\begin{rem}
For the computational max–entropy of entanglement defined in \Cref{def:max.entropy.resource}, a direct comparison with the one–shot measures of~\cite{ABV23} is not straightforward. In the unconstrained regime, the max–relative entropy of entanglement (equivalently, the log-robustness) attains a one–shot operational meaning after smoothing with respect to non-entangling operations, a class that strictly contains LOCC~\cite{Da09b,BD10}. Analogously, smoothing the computational max–entropy we define and taking the efficient free operations to be (efficient) non-entangling maps may clarify its relation to the measures in~\cite{ABV23}.
\end{rem}

\section{Discussion and Open Questions}

We examine central information-theoretic quantities and naturally incorporate computational constraints, building a cohesive framework. Our guiding principle is to retain the clear operational meaning of the original quantities while enforcing that the measurements are efficient. 

\subsection{Discussion}

On the one hand, we take a geometric approach to incorporate computational constraints into the max-divergence. We construct families of proper cones of polynomially generated effect operators, which we define to be generated by efficiently approximate effect operators using non-uniform polynomial circuits. The computational max-divergence then naturally arises~\cite{RKW11, GC24}, induced by the partial order given by those cones. On the other hand, we define the set of binary POVMs composed by the elements of our polynomially generated effect operators. This set allows us to define a computational notion of measured Rényi divergences, based on constructions from~\cite{RSB24,MH23}. 
We show that these two, a priori, quite different approaches to computational divergences are consistent in the sense that in the limit $\alpha\to\infty$, the computational measured Rényi divergence coincides with the computational max-divergence (see \Cref{thm:measured=conic}). This equivalence is not only an analogue of the information-theoretic case, but also shows the consistency of our defined divergences and highlights the usefulness of conic theory to computational quantum information. Inspired by~\cite{RKW11}, we further consider computational notions of the  fidelity and  trace distance. These quantities are related in meaningful --- yet straightforward --- ways to our computational divergences, such as via a computational Pinsker inequality (\Cref{lem:pinsker}). Moreover, they are related to each other via a computational Fuchs–van de Graaf inequality (\Cref{lem:comp.fuchs.van.Graaf}). These relations reflect the robustness and flexibility of our framework, and allow us to provide an explicit example for which these quantities differ from their information-theoretic counterparts in \Cref{sec:example}.

Lastly, we consider two relevant applications. Firstly, in the context of asymmetric binary hypothesis testing, we show that the regularized version of our computational measured relative entropy  has a sharp operational meaning as an upper bound on the Stein-exponent, \Cref{thm:qpt-stein}. In other words, it upper-bounds the error exponent of asymmetric binary hypothesis testing under our computationally restricted measurements. This result is a direct analog to the information–theoretic setting.

Secondly, in the context of resource theories, these computational divergences allow us to define different resource measures. The main result of that section is an asymptotic continuity bound for the computational measured relative entropy, \Cref{th:comptfannes}. 
This bound is the computational analogue of~\cite{Win16,SW23} and we believe it to be of independent interest as it is, as far as we are aware, the first asymptotic continuity bound relating a measurement-restricted divergence with its corresponding restricted trace distance. This bound allows us to state the following analytical characterization: computationally indistinguishable families of states will also have computationally indistinguishable resource contents. This phenomenon had been previously stated in a slightly more heuristic manner, under cryptographic reductions. Moreover, focusing on entanglement, \Cref{th:comptfannes} lets us compare the computational measured relative entropy of entanglement $\CompE_R$ induced by our divergences with the computational distillable entanglement and entanglement cost of~\cite{ABV23}: under efficient LOCC (up to a continuity term), one has $\CompE_D(\rho^n_{AB}) \le \CompE_R(\rho^n_{AB}) \le \CompE_C(\rho^n_{AB})$, thereby reproducing the information-theoretic hierarchy.

Apart from these applications we consider, we expect this formalism to be relevant across various areas of computational quantum information theory. In part, this is because our quantities directly incorporate a central practical constraint --- scalability --- on which manipulations and measurements remain implementable as the systems grow. Complexity and practical polynomial constraints are already relevant in many-body physics, where tools dealing with the exponentially growing Hilbert-space dimension as a function of the particle number have proven very valuable~\cite{Has07,PVWC07,CPSV21}. Only transformations with at most polynomial cost are asymptotically viable, as made precise by Hamiltonian-complexity results~\cite{AL11,GHLS15}. On the measurement side, compressed-acquisition schemes such as classical shadows estimate observables with polynomial resources, reinforcing the value of complexity-aware information measures~\cite{HKP20,KGK25}. Related ideas surface in quantum thermodynamics, where resource costs are analyzed under explicit complexity restrictions~\cite{MKN+25}, and in high-energy theory, where “complexity = action/volume” links boundary complexity to bulk geometry in AdS/CFT~\cite{BRS+16,BS18}. Lastly, our approach interfaces naturally with cryptography, where indistinguishability is explicitly computational; see e.g. pseudorandom states~\cite{JLS18} and pseudorandom unitaries~\cite{MPMY24}.

\subsection{Open questions}

A variety of interesting and promising avenues for future research follow from this work. It is a mathematically natural question to extend our formalism beyond the binary setting to multi-outcome POVMs, keeping implementability and classical post-processing explicit. This would broaden the scope of our work and connect it to operational tasks such as multi–hypothesis testing and multi-party decoupling. This direction is particularly appealing since it opens the possibility of defining a computational min–entropy within our framework and allows one to study its connection to previously studied computational entropies, such as the HILL entropies from~\cite{CCL+17} or the computational min–/max–entropies defined in~\cite{avidan2025quantum,avidan2025fully}. 

On the technical side, a key open problem is to establish nontrivial lower bounds on the computational Stein exponent. From an operational viewpoint, it would be valuable to clarify the role of our computational divergences in one-shot resource tasks under efficient free operations. Pursuing this direction—including smoothed variants of the computational max divergence and (efficient) non-entangling maps—may illuminate further connections between the measures introduced here and prior work in computational entanglement theory~\cite{ABV23,LREJ25}.

\section*{Acknowledgments}
We thank Johannes Jakob Mayer for coordinating simultaneous submission to the arXiv.

We would like to thank Rotem Arnon, Omar Fawzi, and Alex B. Grilo for helpful feedback on early versions of this manuscript. We would additionally like to thank Joseph M. Renes and Noam Avidan for helpful discussions. 
ÁY is supported by the European Union's Horizon Europe Framework Programme under the Marie Sklodowska Curie Grant No. 101072637, Project Quantum-Safe Internet (QSI). TAH is supported by the Koshland Research Fund and by the Air Force Office of Scientific Research under award number FA9550-22-1-0391.
\bibliography{References}
\appendix

\section{Cone of Efficient Binary Measurements}
\subsection{Proof of \Cref{cor:diamondnormbound}}\label{app:dimaondnormbound}
Recall that we want to effectively place a bound on the diamond norm between measurement maps corresponding to binary POVMs, given the operator norm difference between their effect operators. 
\begin{proof}[Proof of \Cref{cor:diamondnormbound}]
\label{app:diamondnormbound}
By \Cref{prop:approxcat} there exists a binary measurement $M^\prime$ such that \\ ${M=(E_1,E_2)}$ and 
${M^\prime=(F_1,F_2)=(\frac{1} {k}\sum_{i=1}^k\,E_{j_i},\id-\frac{1} {k}\sum_{i=1}^k\,E_{j_i})}$ 
satisfy
\begin{align}\label{Appeq: infnormbound}
    \|E_1-F_1\|_\infty=\|E_2-F_2\|_\infty\leq \varepsilon \; . 
\end{align}
Recall that the measurement maps they induce are given by
\begin{align*}
\MM(\rho)=\sum_{i=0}^1\Tr[E_{i+1}\rho] \ketbra{i}{i}, \quad  \MM^\prime(\rho)=\sum_{i=0}^1\Tr[F_{i+1}\rho] \ketbra{i}{i} \; ,
\end{align*}
for any state $\rho \in \Ss(\HH^n)$ (and some choice of $n$). Alternatively, for any state $\rho \in \Ss(\HH^n \otimes \HH^\prime)$ on the extended Hilbert space $\HH^\prime$, $\MM \otimes \Id$ and $\MM^\prime \otimes \Id$ are mappings of the form

\begin{align}
\MM\otimes \Id(\rho)=\sum_{i=0}^1 \ketbra{i}{i} \otimes \tr_{\HH^{\otimes n}}[E_{i+1}\rho]\, :\, \Ss(\HH^{\otimes n} \otimes \HH^\prime) \to \Ss(\CC^2 \otimes \HH^\prime)\; , 
\end{align} where $\MM^\prime \otimes \Id$ is analogous and $\tr$ stands for the partial trace. 
Hence, for these measurements, it must hold that
\begin{align*}
    \|\MM-\MM^\prime\|_\diamond &= \sup_{\rho\in\Ss(\HH^{\otimes n} \otimes \HH^\prime)}\|((\MM-\MM^\prime)\otimes\Id)(\rho)\|_1 \\ &=\sup_{\rho}\Big\|\sum_{i=0}^1\ketbra{i}{i}\otimes\tr_{\HH^{\otimes n}}[(E_{i+1}-F_{i+1})\rho]\Big\|_1 
    \\ &=\sup_{\rho}\sum_{i=0}^1\|\tr_{\HH^{\otimes n}}[(E_{i+1}-F_{i+1})\rho]\|_1 \\ 
    &\leq \sup_{\rho}\sum_{i=0}^1\|(E_{i+1}-F_{i+1})\rho\|_1 \\
    &\leq \sup_{\rho} \sum_{i=0}^1\|E_{i+1}-F_{i+1}\|_\infty \|\rho\|_1 \\
    &\leq 2\varepsilon \; .
\end{align*}
 The first inequality follows from the fact that the partial trace is trace-norm contractive, and the second is an application of Hölder's inequality. The last inequality follows then from Eq.~\eqref{Appeq: infnormbound}.
\end{proof}    

\subsection{Proof of \Cref{lem: comptrnormEeffred}}\label{app:comptrnormEeffred}
Recall that we want to prove the following three equivalences

 \begin{align}
         \|\rho - \sigma\|_{\overlineMeff_n} =  \begin{cases}
2\max_{E\in\mathbf{E}^{\mathrm{eff}}_n}\Tr \left[ E \left(\rho -\sigma\right)\right]\;,\qquad \qquad \qquad \qquad (i) \\
    \max_{M = (E_1,E_2) \in \Meff_n} \sum_{i=1}^2 \left|\Tr[E_i( \rho - \sigma)] \right| \; , \; \quad \;(ii)\\
     \max_{M \in \Meff_n} \|\MM_M(\rho) - \MM_M(\sigma)\|_{1} \,. \qquad \; \qquad \;(iii)
\end{cases}
    \end{align}
Let us prove them separately.
\begin{proof}[Proof of first equality (i)]
By definition of 
\begin{align*}
        \|\rho - \sigma\|_{\overlineMeff_n} =  2\max_{E\in\overlineEeff_n}\Tr \left[ E \left(\rho -\sigma\right)\right] \geq 2\max_{E\in\Eeff_n}\Tr \left[ E \left(\rho -\sigma\right)\right] \; ,
\end{align*}
we get the first inequality. The other inequality follows since the functional $E\mapsto\Tr[E(\rho-\sigma)]$ is linear and thus  convex and so 
\begin{align}
   2\max_{E\in\overlineEeff_n}\Tr \left[ E \left(\rho -\sigma\right)\right] =2\max_{E\in\extr(\overlineEeff_n)}\Tr \left[ E \left(\rho -\sigma\right)\right] \leq 2\max_{E\in\mathbf{E}^{\mathrm{eff}}_n}\Tr \left[ E \left(\rho -\sigma\right)\right].
\end{align}

Combining both inequalities proves the desired result. 
\end{proof}
\begin{proof}[Proof of second equality (ii)]
By the above proven first equivalence, it holds that
    \begin{align*}
        2\max_{E\in\mathbf{E}^{\mathrm{eff}}_n}\Tr \left[ E \left(\rho -\sigma\right)\right]
&=\max_{E\in\mathbf{E}^{\mathrm{eff}}_n}\left|\Tr \left[ E \left(\rho -\sigma\right)\right]\right| + \left|\Tr \left[ -E \left(\rho -\sigma\right)\right] \right| \\
&=\max_{E\in\mathbf{E}^{\mathrm{eff}}_n}\left|\Tr \left[ E \left(\rho -\sigma\right)\right]\right| + \left|\Tr \left[ \left(\id-E\right) \left(\rho -\sigma\right)\right] \right| \\
&=\max_{M = (E_1,E_2) \in \Meff_n}\left|\Tr \left[ E_{1} \left(\rho -\sigma\right)\right]\right| + \left|\Tr \left[ E_{2} \left(\rho -\sigma\right)\right] \right| \\
&= \max_{M = (E_1,E_2) \in \Meff_n} \sum_{i=1}^2 \left|\Tr[E_i( \rho - \sigma)] \right| \;.
\end{align*}
\end{proof}
\begin{proof}[Proof of third equality (iii)]
 For any binary measurement $M = (E_1,E_2) \in \Meff_n$, it further holds that 
    \begin{align*}
        \sum_{i=1}^2 \left|\Tr[E_i( \rho - \sigma)] \right| &= \Big\| \sum_{i=0}^{1}\Tr[E_{i+1}( \rho - \sigma)] \ketbra{i}{i}\Big\|_{1} \\
         &= \Big\|\sum_{i=0}^{1}\Tr[E_{i+1} \rho ] \ketbra{i}{i} -\sum_{i=0}^{1}\Tr[E_{i+1} \sigma ] \ketbra{i}{i}\Big\|_{1} \\
         &= \|\MM_M(\rho) -\MM_M(\sigma)  \|_{1} \; , 
    \end{align*}
Optimizing over $M = (E_1,E_2) \in \Meff_n$ then yields $(iii)$. 
\end{proof}

\subsection{Proof of \Cref{lem:comp.dist.sub}}
Recall that we want to prove that $\|\cdot\|_{\overlineMeff_{n\cdot m}}$ is sub-additive, i.e.
\begin{align*}
        \| \rho^{\otimes m} - \sigma^{\otimes m}\|_{\overlineMeff_{n\cdot m}} \leq m \max_{i \in \{1,\dots, m \}}\| \rho_{i} - \sigma_{i}\|_{\overlineMeff_n} \; ,
    \end{align*}
    where $\rho_{i} = \rho^{\otimes i-1}\otimes\rho \otimes\sigma^{\otimes n-i}$, $\sigma_{i} = \rho^{\otimes i-1}\otimes\sigma \otimes\sigma^{\otimes n-i}$.
\begin{proof}[Proof of \Cref{lem:comp.dist.sub}]
\label{app:comp.dist.sub}
    Note that $\rho_{i-1}=\sigma_{i}$ and $\rho_{m}= \rho^{\otimes m}$, $\sigma_{1}=\sigma^{\otimes m}$. As such
\begin{align*}
    \rho^{\otimes m} - \sigma^{\otimes m} &= \rho_{m} - \sigma_{1} = \sum_{j=1}^{m} \rho_{j} - \sum_{j=1}^{m-1} \rho_{j} - \sigma_{1} =\sum_{j=1}^{m} (\rho_{j} - \sigma_{j})  \; .
\end{align*}
Using this and the  triangle inequality repeatedly now yields the claim
   \begin{align*}
       \| \rho^{\otimes m} - \sigma^{\otimes m}\|_{\overlineMeff_{n\cdot m}} &\leq \sum_{j=1}^{m}  \| \rho_{j} - \sigma_{j}\|_{\overlineMeff_{n}} \leq \sum_{j=1}^{m}  \max_{i \in \{1,\dots, m \}}\| \rho_{i} - \sigma_{i}\|_{\overlineMeff_{n}} \leq m  \max_{i \in \{1,\dots, m \}}\| \rho_{i} - \sigma_{i}\|_{\overlineMeff_{n}} \; .
   \end{align*}
\end{proof}

\section{Computational Quantum Divergences}

\subsection{Proof of \Cref{lem:comp.max.norm}}
Recall, we want to prove that
\begin{align*}
\CompMaxDiv(\rho \| \sigma) \geq \log\left(1 + \CompTrDis \left(\rho,\sigma\right)\right) \; .
\end{align*}
\begin{proof}[Proof of \Cref{lem:comp.max.norm}]
\label{app:comp.max.norm}
If $\CompMaxDiv(\rho \| \sigma) = \infty$, then this trivially holds. Assume that $\CompMaxDiv(\rho \| \sigma) < \infty$.
Hence there exists some $\lambda$ such that $\rho \leq_{\Co^{\mathcal{S_{\mathrm{eff}}}}_n} \lambda \sigma$.  Then, for any such $\lambda$, we have
\begin{align*}
\frac{1}{2} \|\rho - \sigma\|_{\overlineMeff_n} &= \sup_{E \in \overline{\mathbf{E}}^{\mathrm{eff}}_n} \Tr[E(\rho - \sigma)] \\
 &\leq \sup_{E \in \overline{\mathbf{E}}^{\mathrm{eff}}_n} \Tr[E(\lambda\sigma - \sigma)] \\
 &  = (\lambda - 1) \sup_{E \in \overline{\mathbf{E}}^{\mathrm{eff}}_n} \Tr[E \sigma]\\
 &\leq \lambda - 1 \; ,
 \end{align*} where for the inequality we are using that the partial order $\leq_{\Co^{\mathcal{S_{\mathrm{eff}}}}_n}$ is the one induced by the dual cone to $\mathcal{C}^{\Eeff}_n$ to which $\overlineEeff_n$ contains a base.
Rearranging gives, $\lambda \geq 1 + \frac{1}{2} \|\rho - \sigma\|_{\Meff_n}$ and taking the infimum over all such suitable $\lambda$ yields the statement of the lemma, 
\begin{align*}
    \CompMaxDiv(\rho \| \sigma) &\coloneqq \log \inf\{\lambda \in \RR | \rho \leq_{\Co^{\mathcal{S_{\mathrm{eff}}}}_n} \lambda \sigma\} \\
    &\geq \log \left(1 + \frac{1}{2} \|\rho - \sigma\|_{\overlineMeff_n} \right)\\
    &= \log\left(1 + \CompTrDis \left(\rho,\sigma\right)\right) \; .
\end{align*}
\end{proof}
\subsection{Proof of \Cref{lem: MeasRenDivProp}} 
\begin{proof}[Proof of \Cref{lem: MeasRenDivProp}]
\label{app:MeasRenDivProp}
The properties of the computational measured Rényi divergences follows directly from ~\cite[Lemmas 2 \& 5]{RSB24} and~\cite[Chapter 4]{T16}. We will use $\mathbb{D}_{\alpha}(\mu\|\nu)$ for denoting a classical Rényi divergence over probabilities distributions $\mu,\nu$. Recall that, when $\mu_{\rho}^M(i) = \Tr[\rho E_i]$ and $\nu_{\sigma}^M(i) = \Tr[\sigma E_i]$, $\CompDiv_{\alpha}^{\Meff_n}(\rho\|\sigma) = \sup_{M \in \Meff_n}\mathbb{D}_{\alpha}(\mu_{\rho}^M\|\nu_{\sigma}^M)$.\\
(1) Follows from the positivity of the classical Rényi divergence.\\
(2) For every pair of probabilities distributions $\mu,\nu$, we have that for all $\alpha, \beta$ such that $0 < \alpha \leq \beta$,
\begin{align*}
    \mathbb{D}_{\alpha}(\mu\|\nu) \leq \mathbb{D}_{\beta}(\mu\|\nu) \;.
\end{align*}
Therefore, by taking the supremum, it directly follows that $\CompDiv_{\alpha}^{\Meff_{n}}(\rho\|\sigma) \leq  \CompDiv_{\beta}^{\Meff_{n}}(\rho\|\sigma)$.\\
(3) By linearity of the trace, $\mu^M_{\lambda\rho_1 + (1-\lambda) \rho_2} = \lambda \mu^M_{\rho_1} + (1-\lambda)\mu^M_{\rho_2}$. Let us first consider the case $\alpha \in (0,1)$. Therefore,
\begin{align*}
    \CompQ^{\Meff_n}_{\alpha}(\lambda \rho_1 + (1-\lambda)\rho_2\|\lambda \sigma_1 + (1-\lambda)\sigma_2) &= \inf_{M \in \Meff_n}\mathbb{Q}_{\alpha}(\lambda \mu_{\rho_1}^M + (1-\lambda)\mu_{\rho_2}^M\|\lambda \nu_{\sigma_1}^M + (1-\lambda)\nu_{\sigma_2}^M)\\
    & \geq \inf_{M \in \Meff_n}\lambda\mathbb{Q}_{\alpha}(\mu_{\rho_1}^M\|\mu_{\rho_2}^M)+(1-\lambda)\mathbb{Q}_{\alpha}(\nu_{\sigma_1}^M\|\nu_{\sigma_2}^M)\\
    & \geq \lambda \CompQ^{\Meff_n}_{\alpha}(\rho_1\|\rho_2)+(1-\lambda) \CompQ^{\Meff_n}_{\alpha}(\sigma_1\|\sigma_2) \; ,
\end{align*}
where $\CompQ^{\Meff_n}_{\alpha}(\cdot\|\cdot)$ and $\mathbb{Q}_{\alpha}(\cdot\|\cdot)$ are the computational and classical counterparts of Eq.~\eqref{eq:Qrenyi}. The first inequality follows from joint concavity of the classical measure and the last one from the superadditivity of the infimum, proving the joint concavity of the function. Let us also set $f(t)\coloneqq\frac{1}{\alpha-1}\log(t)$ (see \Cref{def:sandwiched.renyi}). In the case of $\alpha \in (0,1)$, $f(t)$ is a non-increasing convex function. Therefore, it follows that $\CompDiv^{\Meff_n}_{\alpha}$ is jointly convex in $(\rho,\sigma)$ for $\alpha \in (0,1)$. $\CompDiv^{\Meff_n}_{1}$ is jointly convex as the supremum of a family of jointly convex functionals (see~\cite[Prop. 3, Lemma 2]{RSB24}. For $\alpha \in (1,\infty)$, $f(t)$ is non-decreasing and quasi-convex. Moreover, by using the fact of $\mathbb{Q}_{\alpha}$ being joint convex and the supremum being superadditive, $\CompQ^{\Meff_n}_{\alpha}$ is joint convex. Therefore, $\CompDiv^{\Meff_n}_{\alpha}$ is jointly quasi-convex. Lastly, the joint quasi convexity of $\CompDiv^{\Meff_n}_{\infty}$ follows by taking the supremum of a family of jointly quasi-convex functions (\cite[Prop. 3, Lemma 2]{RSB24}).\\

(4) By the duality of the quantum channels, $\mu^M_{\EE(\rho)}=\mu^{\EE^* \left( M\right)}_{\rho}$. Since for every admissible $\EE$, it follows that $\EE^* \left( M\right) \in \overlineMeff_{n}$, then,
\begin{align*}
    \CompDiv_{\alpha}^{\Meff_{n}}(\EE(\rho)\|\EE(\sigma)) = \sup_{M \in \Meff_n}\mathbb{D}_{\alpha}(\mu_{\EE(\rho)}^M\|\nu_{\EE(\sigma)}^M) =\sup_{M \in \Meff_n} \mathbb{D}_{\alpha}(\mu_{\rho}^{\EE^\dagger \left( M\right)}\|\nu_{\sigma}^{\EE^\dagger \left( M\right)}) \leq\CompDiv_{\alpha}^{\Meff_{n}}(\rho\|\sigma) \;.
\end{align*}
\end{proof}

\subsection{Proof of \Cref{lem:fid.comp.Renyi}}
Recall that we want to show that
\begin{align*}
    \CompDiv_{1/2}^{\Meff_n}(\rho \|\sigma) = -\log \CompFid(\rho, \sigma) \;.
\end{align*}
\begin{proof}[Proof of \Cref{lem:fid.comp.Renyi}]
\label{app:fid.comp.Renyi}
By definition of the computational measured Rényi divergence we have that for $\alpha = 1/2$,
    \begin{align}
        \CompDiv_{1/2}^{\Meff_n}(\rho \|\sigma) = -2\log \CompQ_{1/2}^{\Meff_n}(\rho \|\sigma)\;.  \label{eq: appdivQrel}
    \end{align}
Then, by Eq.~\eqref{eq:comp.Q.Renyi},
\begin{align*}
     \CompQ^{\Meff_n}_{1/2}(\rho\|\sigma) &=\inf_{M \in \Meff_n} Q_{1/2}(\MM_M(\rho)\|\MM_M(\sigma))\\
     &=\inf_{M \in \overlineMeff_n} Q_{1/2}(\MM_M(\rho)\|\MM_M(\sigma))\\
     &=\inf_{M \in \overlineMeff_n}\|\MM_M(\sigma)^{1/2}\MM_M(\rho)\MM_M(\sigma)^{1/2}\|_{1/2}^{1/2} \\
     &=\inf_{(E_1,E_2)\in \overlineMeff_n}\sum_{i = 1}^{2} \sqrt{\Tr[E_i\rho]\Tr[E_i\sigma]} \\
     &= \sqrt{\CompFid(\rho, \sigma)}\;,
\end{align*}
where the second equality follows from \Cref{lem: equivalanceMandConvM}. Together with Eq.~\eqref{eq: appdivQrel}, this yields the claim.
\end{proof}

\subsection{Proof of \Cref{lem:comp.fuchs.van.Graaf}}
Recall that we want to prove here a computational Fuch-van-Graaf inequality 
\begin{align*}
     1-\sqrt{\CompFid(\rho, \sigma)} \leq \CompTrDis(\rho, \sigma) \leq \sqrt{ 1-\CompFid(\rho, \sigma)}\; .
 \end{align*}
\begin{proof}[Proof of \Cref{lem:comp.fuchs.van.Graaf}]
\label{app:comp.fuchs.van.Graaf}
We note that it suffices to prove that 
    \begin{align}
     \begin{aligned}
        1-\sqrt{\min_{M  \in \Meff_n}F(\MM_M(\rho), \MM_M(\sigma))} \leq \max_{M  \in \Meff_n}\Delta(\MM_M(\rho), \MM_M(\sigma)) \leq \sqrt{ 1-\min_{M  \in \Meff_n}F(\MM_M(\rho), \MM_M(\sigma))} \; , \label{eq: appSuffBoundCond}
        \end{aligned}
    \end{align}
    since it directly implies our bounds.
    Starting with the usual information-theoretic Fuch-van Graaf inequality~\cite{FvG99} applied to the states $\mathcal{M}_M(\rho), \mathcal{M}_M(\sigma)$ for any $M \in \Meff_n$, we have
    \begin{align*}
        1-\sqrt{F(\MM_M(\rho), \MM_M(\sigma))} \leq \Delta(\MM_M(\rho), \MM_M(\sigma)) \leq \sqrt{ 1-F(\MM_M(\rho), \MM_M(\sigma))}\; .
    \end{align*}
Now optimizing over all $M \in \Meff_n$ yields Eq.~\eqref{eq: appSuffBoundCond}.
\end{proof}

\section{Applications}
\subsection{Proof of \Cref{thm:qpt-stein}}
\label{app:Steinproof}
\begin{proof}
The upper bound on the Stein's exponent can be proven in the following way. Let us consider for an $\varepsilon >0$ and $m(k) \in \poly(k) > 0$, a binary measurement $M : = (E_{k \cdot m(k)} , \id -E_{k \cdot m(k)} ) \in \Meff_{k \cdot m(k)}$ that satisfies $\Tr[ E_{k \cdot m(k)}\rho_k^{\otimes m(k)}] \geq 1-\varepsilon$.
Recall that the infimum over all such measurements is the one that defines $\beta^\varepsilon_{m(k)}(\Meff_{k\cdot m(k)})$ in Eq.~\eqref{equ:def:optimaltypeIIerror}.
Then we have
\begin{align*}
    \sup_{M^\prime \in \Meff_{k \cdot m(k)}} & D(\MM_{M^\prime} (\rho_k^{\otimes m(k)}) \| \MM_{M^\prime} (\sigma_k^{\otimes m(k)})) \geq D(\MM_M (\rho_k^{\otimes m(k)}) \| \MM_M (\sigma_k^{\otimes m(k)})) \\
    & = \Tr[ E_{k \cdot m(k)} \rho_k^{\otimes m(k)}]\log \Tr[E_{k \cdot m(k)} \rho_k^{\otimes m(k)}] - \Tr[ E_{k \cdot m(k)} \rho_k^{\otimes m(k)}]\log \Tr[E_{k \cdot m(k)} \sigma_k^{\otimes m(k)}] \\
    & + \Tr[(\id - E_{k \cdot m(k)})\rho_k^{\otimes m(k)}]\log \Tr[(\id -E_{k \cdot m(k)}) \rho_k^{\otimes m(k)}] \\
    &-\Tr[(\id - E_{k \cdot m(k)}) \rho_k^{\otimes m(k)}]\log \Tr[(\id - E_{k \cdot m(k)}) \sigma_k^{\otimes m(k)}] \\
    & \geq -h_2\left(\Tr[ E_{k \cdot m(k)} \rho_k^{\otimes m(k)}]\right) - \Tr[ E_{k \cdot m(k)} \rho_k^{\otimes m(k)}] \log \Tr[ E_{k \cdot m(k)}\sigma_k^{\otimes m(k)}] \\
    & \geq- 1 - (1-\varepsilon)\log \Tr[ E_{k \cdot m(k)} \sigma_k^{\otimes m(k)}]   \;,  
\end{align*}
where $h_2(p) = - p\log(p) - (1-p)\log (1-p)$ is the binary entropy. For the second-to-last line, we use the fact that
\begin{align}
    \Tr[(\id - E_{k \cdot m(k)}) \rho_k^{\otimes m(k)}]\log \Tr[(\id - E_{k \cdot m(k)}) \sigma_k^{\otimes m(k)}] \leq 0 \; .
\end{align}
The last line then follows from the fact that the binary entropy is upper bounded by $1$ and the constraint that $\Tr[ E_{k \cdot m(k)}\rho_k^{\otimes m(k)}] \geq 1-\epsilon$. Rearranging this inequality and subsequently taking the supremum over all such efficient measurements $M$ yields 
  \begin{align*}
    -\log \beta^{\varepsilon}_{m(k)}(\Meff_{k\cdot m(k)})  \leq \frac{1}{1-\varepsilon}\Big(1 + \sup_{M' \in \Meff_{k \cdot m(k)}}D\big(\MM_{M^\prime} (\rho_k^{\otimes m(k)}) \| \MM_{M^\prime} (\sigma_k^{\otimes m(k)})\big)\Big) \; .
\end{align*}  
Now dividing by $m(k)$, and taking the $\limsup$ in $k$, gives
\begin{align*}
    \limsup_{k \rightarrow \infty} - \frac{1}{m(k)}\log \beta^{\varepsilon}_{m(k)}(\Meff_{k\cdot m(k)}) &\leq \frac{1}{1 - \varepsilon} \limsup_{ k \rightarrow \infty} \frac{1}{m(k)}\left(1+ \CompDiv^{\Meff_{k \cdot m(k)}}(\rho_k^{\otimes m(k)} \| \sigma_k^{\otimes m(k)})\right) \\ &=\frac{1}{1 - \varepsilon} \limsup_{ k \rightarrow \infty} \frac{1}{m(k)}\left(\CompDiv^{\Meff_{k \cdot m(k)}}(\rho_k^{\otimes m(k)} \| \sigma_k^{\otimes m(k)})\right) \\
    &= \frac{1}{1-\varepsilon}\CompDiv^{\Meff, \infty}(\{\rho_k\}_k \|\{\sigma_k\}_k)\; .
\end{align*}
Finally taking the limit $\varepsilon \rightarrow 0$, yields the claimed result. 
\end{proof}

\end{document}